\documentclass[10pt, journal,compsoc]{IEEEtran}

% *** CITATION PACKAGES ***
%
\ifCLASSOPTIONcompsoc
  % IEEE Computer Society needs nocompress option
  % requires cite.sty v4.0 or later (November 2003)
  \usepackage[nocompress]{cite}
\else
  % normal IEEE
  \usepackage{cite}
\fi

% *** GRAPHICS RELATED PACKAGES ***
%
\ifCLASSINFOpdf
\else
\fi
\pdfoutput=1
\usepackage{ifpdf}
\usepackage{cite}
\usepackage{latexsym}
\usepackage{amsmath}
\usepackage{amsthm}
\usepackage{amssymb}
\usepackage{mathrsfs}
\usepackage[dvips,dvipdf]{graphicx}
\usepackage{subfigure}
\usepackage{algorithm}
\usepackage{algorithmicx}
\usepackage{multirow}
\usepackage{bbm}
\usepackage{arydshln}
\usepackage{url}
\usepackage{pict2e,picture} % picture
\usepackage{tikz}   % tikz
\usepackage{pifont} % ding
\usepackage{enumitem}
\usepackage{mathtools}
\usepackage{algpseudocode}
\usepackage{comment}
\usepackage{xcolor}
\usepackage{caption}

\DeclarePairedDelimiter{\ceil}{\lceil}{\rceil}

\bibliographystyle{IEEEtran}

\newcommand{\A}{\mathbf{A}}
\renewcommand{\a}{\mathbf{a}}
\newcommand{\B}{\mathbf{B}}

\renewcommand{\b}{\mathbf{b}}

\renewcommand{\c}{\mathbf{c}}

\newcommand{\g}{\mathbf{g}}

\newcommand{\p}{\mathbf{p}}

\newcommand{\s}{\mathbf{s}}

\newcommand{\w}{\mathbf{w}}

\newcommand{\x}{\mathbf{x}}

\newcommand{\vlambda}{\boldsymbol{\lambda}}

\newtheorem{thm}{Theorem}

\newtheorem{lem}[thm]{Lemma}

\newtheorem{defn}{Definition}

\newtheorem{rem}{Remark}

\newtheorem{fact}{Fact}
%\numberwithin{equation}[section]

\algtext*{EndWhile}% Remove "end while" text
\algtext*{EndIf}% Remove "end if" text
\algtext*{EndFor}% Remove "end while" text
\algtext*{EndFunction}% Remove "end if" text
\algtext*{EndProcedure}% Remove "end while" text

%set up for the captions

%\captionsetup[figure]{position=bottom,justification=centering,width=.85\textwidth,labelfont=bf,font=small}

%\captionsetup[figure]{position=bottom,calcwidth=.85\linewidth,font=scriptsize}

% correct bad hyphenation here
\hyphenation{optical networks semiconductor}
\begin{document}

\title{Toward Efficient Online Scheduling for Distributed Machine Learning Systems}

\author{Menglu~Yu, 
        Jia~Liu~\IEEEmembership{Senior Member,~IEEE},
        Chuan~Wu~\IEEEmembership{Senior Member,~IEEE},
        Bo~Ji~~\IEEEmembership{Senior Member,~IEEE}
        Elizabeth~S.~Bentley~\IEEEmembership{Member,~IEEE}% <-this % stops a space
\thanks{
Menglu Yu is with the Department of Computer Science, Iowa State University, Ames, IA 50011, USA (e-mail: mengluy@iastate.edu).}
\thanks{
Jia Liu is with the Department of Electrical and Computer Engineering, The Ohio State University, Columbus, OH 43210, USA (e-mail: liu@ece.osu.edu).}
\thanks{
Bo Ji is with the Department of Computer Science, Virginia Tech, Blacksburg, VA 24061, USA (e-mail: boji@vt.edu).}
\thanks{
Chuan Wu is with the Department of Computer Science, The University of Hong Kong, Hong Kong (e-mail: cwu@cs.hku.hk).}
\thanks{
Elizabeth S. Bentley is with the Air Force Research Laboratory, Information Directorate, Rome, NY 13441, USA (e-mail: elizabeth.bentley.3@us.af.mil).}
\thanks{
Digital Object Identifier 10.1109/TNSE.2021.3104513.}%
}

\IEEEtitleabstractindextext{%
%\begin{abstract}
\begin{abstract}
%Thanks to the rise of machine learning (ML) and its wide range of applications, 
Recent years have witnessed a rapid growth of distributed machine learning (ML) frameworks, which exploit the massive parallelism of computing clusters to expedite ML training.
However, the proliferation of distributed ML frameworks also introduces many unique technical challenges in computing system design and optimization. 
In a networked computing cluster that supports a large number of training jobs, a key question is how to design efficient scheduling algorithms to allocate workers and parameter servers across different machines to minimize the overall training time. 
Toward this end, in this paper, we develop an online scheduling algorithm that jointly optimizes resource allocation and locality decisions. 
Our main contributions are three-fold: i) We develop a new analytical model that considers both resource allocation and locality; 
ii) Based on an equivalent reformulation and observations on the worker-parameter server locality configurations, we transform the problem into a mixed packing and covering integer program, which enables approximation algorithm design; 
iii) We propose a meticulously designed approximation algorithm based on randomized rounding and rigorously analyze its performance.
Collectively, our results contribute to the state of the art of distributed ML system optimization and algorithm design.
\end{abstract}
%\end{abstract}

% Note that keywords are not normally used for peerreview papers.
\begin{IEEEkeywords}
Online resource scheduling, distributed machine learning, approximation algorithm%up to 5
\end{IEEEkeywords}}

\maketitle
\IEEEdisplaynontitleabstractindextext
\IEEEpeerreviewmaketitle
%\IEEEraisesectionheading{\section{Introduction}\label{sec:introduction}}

%\input{Abstract/Abstract}
\section{Introduction} \label{sec:intro}

Fueled by the rapid growth of data analytics and machine learning (ML) applications, recent years have witnessed an ever-increasing hunger for computing power.
However, with hardware capability no longer advancing at the pace of the Moore's law, it has been widely recognized that a viable solution to sustain such computing power needs is to exploit {\em parallelism} at both machine and chip scales.
Indeed, the recent success of deep neural networks (DNN)
%(a revival of artificial neural networks with a much larger number of layers in depth) 
is enabled by the use of distributed ML frameworks, which exploit the massive parallelism over computing clusters with a large number of GPUs.
These distributed ML frameworks have significantly accelerated the training of  DNN for many applications (e.g., image and voice recognition, natural language processing, etc.).
To date, prevailing distributed ML frameworks include TensorFlow~\cite{TensorFlow}, MXNet~\cite{MXNet}, PyTorch~\cite{Paszke19:Pytorch}, Caffe~\cite{Caffe}, to name just a few.

However, the proliferation of distributed ML frameworks also introduces many unique technical challenges on large-scale computing system design and network resource optimization. 
%A {\em common} feature underlying distributed ML frameworks is  
Particularly, due to the decentralized nature, at the heart of distributed learning system optimization lies the problem of scheduling ML jobs and resource provisioning across different machines to minimize the total training time.
Such scheduling problems involve dynamic and combinatorial %virtual-machine-based 
worker and parameter server allocations, which are inherently NP-hard.
Also, the allocations of workers and parameter servers should take {\em locality} into careful consideration, since co-located workers and parameter servers can avoid costly network communication overhead.
However, locality optimization adds yet another layer of difficulty in scheduling algorithm design.
Exacerbating the problem is the fact that the future arrival times of training jobs at an ML computing cluster are hard to predict, 
which necessitates {\em online} algorithm design without the knowledge of future job arrivals.
So far in the literature, there remains a lack of holistic theoretical studies that address all the aforementioned challenges.
Most of the existing scheduling schemes are based on simple heuristics without performance guarantee (see Section~\ref{sec:Related} for detailed discussions).
This motivates us to fill this gap and pursue efficient online scheduling designs for distributed ML resource optimization, which offer {\em provable} performance guarantee.

The main contribution of this paper is that we develop an online scheduling algorithmic framework that {\em jointly} yields resource scheduling and locality optimization decisions with strong competitive ratio performance. 
Further, we reveal interesting insights on how distributed ML frameworks affect online resource scheduling optimization.
Our main technical results are summarized as follows:

\vspace{-.01in}
\begin{list}{\labelitemi}{\leftmargin=1em \itemindent=-0.0em \itemsep=.2em}
\item By abstracting the architectures of prevailing distributed ML frameworks, we formulate an online resource scheduling optimization problem that: i) models the training of ML jobs based on the parameter server (PS) architecture and stochastic gradient descent (SGD) method; and ii) explicitly takes {\em locality} optimization into consideration.
We show that, due to the heterogeneous internal (between virtual machines or containers) and external (between physical machines) communications, the locality-aware scheduling problem contains {\em non-deterministic} constraints and is far more complex compared to the existing works that are locality-oblivious (see, e.g., ~\cite{Chun16:Dolphin,Bao18:ML_INFOCOM}). 

\item To solve the locality-aware scheduling problem, we
develop an equivalent problem reformulation to enable subsequent developments of online approximation algorithms.
Specifically, upon carefully examining the locality configurations of worker-server relationships, we are able to transform the original problem to a special-structured integer nonlinear program with mixed cover/packing-type constraints, and
% which further entails
the low-complexity approximation algorithm design with provable performance can be further entailed.

\item To tackle the integer nonlinear problem with mixed cover/packing-type constraints, we propose an approximation algorithm based on a meticulously designed randomized rounding scheme and then rigorously prove its performance.
We note that the results of our randomized rounding scheme are general and could be of independent theoretical interest.
Finally, by putting all algorithmic designs together, we construct a primal-dual online resource scheduling (PD-ORS) scheme, which has an overall competitive ratio that only {\em logarithmically} depends on ML job characteristics (e.g., required epochs, training samples).
\end{list}

Collectively, our results contribute to a comprehensive and fundamental understanding of distributed machine learning system optimization.
The remainder of this paper is organized as follows.
In Section~\ref{sec:Related}, we review the literature to put our work in comparative perspectives.
Section~\ref{sec:model_formulation} introduces the system model and problem formulation.
Section~\ref{sec:alg} presents our algorithms and their performance analysis.
Section~\ref{sec:numerical} presents numerical results and Section~\ref{sec:conclusion} concludes this paper.

\section{Related Work} \label{sec:Related}

As mentioned in Section~\ref{sec:intro}, due to the high computational workload of ML applications, many distributed ML frameworks (e.g., TensorFlow~\cite{TensorFlow}, MXNet~\cite{MXNet}, PyTorch~\cite{Paszke19:Pytorch}, Caffe~\cite{Caffe}) have been proposed to leverage modern large-scale computing clusters.
% to leverage modern  computing platforms and offer massive parallelism.
% e.g., TensorFlow~\cite{TensorFlow}, MXNet~\cite{MXNet}, Cognitive ToolKit (CNTK)~\cite{CNTK}, Caffe~\cite{Caffe}, etc.
A common distributed training architecture implemented in these distributed ML frameworks is the PS architecture~\cite{Li14:ML_Static_OSDI,Chilim14:ML_Static_OSDI}, which employs multiple workers and PSs (implemented as virtual machines or containers) to collectively train a global ML model.
Coupled with the {\em iterative} ML training based on stochastic gradient descent (SGD), the interactions between machines in the distributed ML cluster are significantly different from those in traditional cloud computing platforms (e.g., MapReduce~\cite{Dean08:MapReduce} and Dryad~\cite{Isard07:Dryad} and references therein).
For example, a MapReduce job usually partitions the input data into independent chunks, which are then processed by the {\it map} step in a parallel fashion. 
The output of the maps are then fed to  the {\it reduce} step to be aggregated to yield the final result.
Clearly, the data flows in MapReduce are a ``one-way traffic", which is unlike those iterative data flows in ML training jobs whose completions highly depend on the ML job's convergence property.
As a result, existing job scheduling algorithms for cloud systems are not suitable for distributed ML frameworks.

%~\cite{Curino14:Rayon_SoCC,Vavilapalli13:YARN_SoCC,Boutin14:Apollo_OSDI,Huang15:CORA_INFOCOM,Chen17:Maxmin_INFOCOM,Azar15:CloudSch_EC,Lucier13:CloudSch_SPAA,Zhou17:CloudSch_TON,Zhang17:CloudSch_TON,Xiao14:CloudSch_TC,Tumanov16:CloudSch_EuroSys}

Among distributed ML system studies, most of the early attempts (see, e.g., ~\cite{Li14:ML_Static_OSDI,Chilim14:ML_Static_OSDI} and references therein) only considered static allocation of workers and parameter servers.
To our knowledge, the first work on understanding the performance of distributed ML frameworks is~\cite{Yan15:MLFramework_KDD}, where Yan {\em et al.} developed analytical models to quantify the impacts of model-data partitioning and system provisioning for DNN.
Subsequently, Chun {\em et al.} ~\cite{Chun16:Dolphin} developed heuristic dynamic system reconfiguration algorithms to allocate workers and parameter servers to minimize the runtime, but without providing optimality guarantee.
The first dynamic distributed scheduling algorithm with optimality guarantee was reported in ~\cite{Sun17:ML_MILP}, where Sun {\em et al.} used standard mixed integer linear program (MILP) solver to dynamically compute the worker-parameter server partition solutions.
Due to NP-hardness of the MILP, the scalability of this approach is limited. %cannot be scaled up to handle large-size distributed ML systems in practice.
The most recent work~\cite{Zhang20:onlinescheduling} proposed an online scheduling algorithm to schedule synchronous training jobs in ML clusters with the goal to minimize the weighted completion time.
However, the consecutive time slots were allocated for each training job, and the numbers of workers and parameter servers could not be adjusted.

Another line of the research is to leverage the learning-based approach to do the resource scheduling.
There are a number of recent works using deep reinforcement learning (DRL) for resource allocation, device placement, and video streaming.
For example, Mao {\it et al.}~\cite{Mao16:DRL} and Chen {\it et al.}~\cite{Chen17:DRL} designed a multi-resource cluster scheduler using DRL with the goal to minimize average job slowdown.
The proposed scheduler picks one or more of the waiting jobs in the queue and allocate to machines at each time slot, and the resource demand of each job is unknown until after its arrival.
Later, Mao {\it et al.}~\cite{Mao18:graph-based,Mao19:learningscheduling} used DRL to heuristically train scheduling policies for graph-based parallel jobs by setting both parallelism level and execution order.
Meanwhile, Mirhoseini {\it et al.}~\cite{Mirhoseini18:hierarchicalmodel, Mirhoseini17:deviceplacement} utilized DRL to design a model for efficient placement of computational graphs onto hardware devices, aiming at minimize the running time of each individual TensorFlow job. 
Although various performance gains have been empirically reported, these DRL-based studies do not offer optimality performance guarantee due to the lack of theoretical foundation of DRL as of today.

\begin{figure*}[t!]
\centering
\begin{minipage}[t]{0.3\textwidth}
\centering
\vspace{-1.26in}
\includegraphics[height=0.14\textheight]{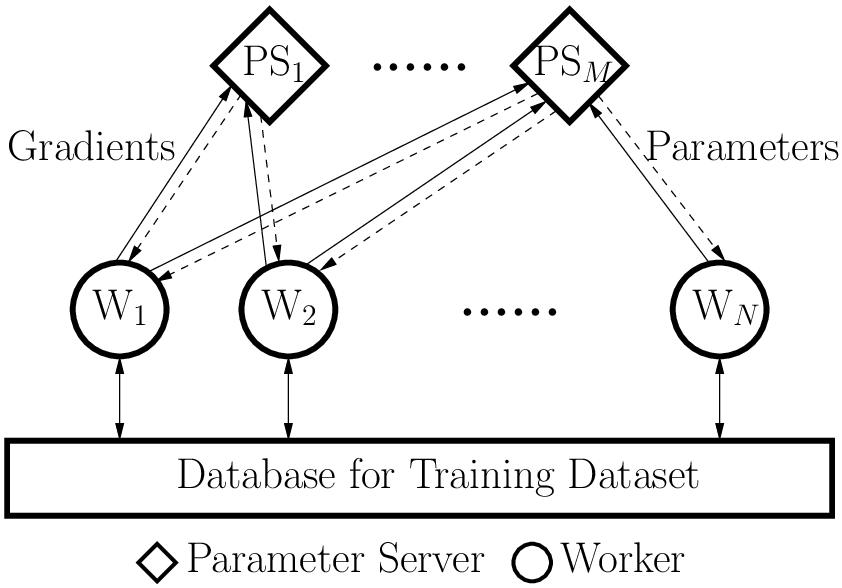}
\vspace{-.03in}
\caption{Illustration of distributed training with the PS architecture.}
\label{fig:DistrML}
\end{minipage}
\hspace{0.01\columnwidth}%
\begin{minipage}[t]{0.3\textwidth}
\centering
\vspace{-1.3in}
\includegraphics[width=0.23\textheight]{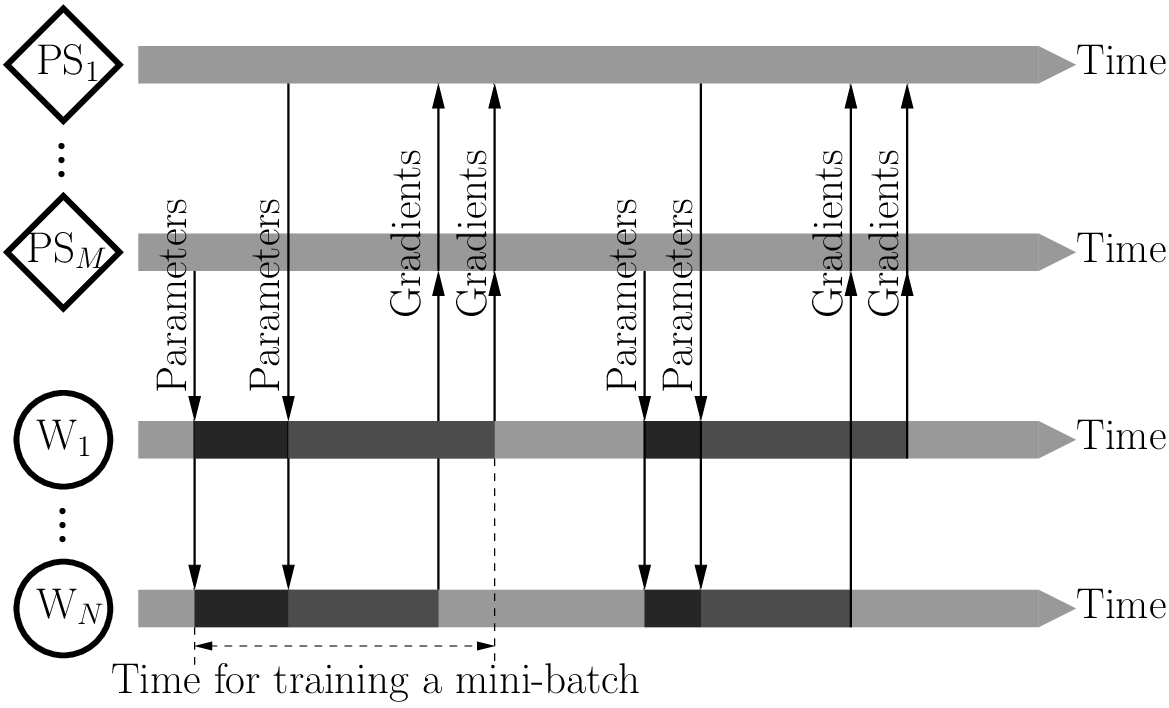}
\vspace{-.15in}
\caption{The workflow of iterative training.}
\label{fig:MLWorkflow}
\end{minipage}
\hspace{0.01\columnwidth}%
\begin{minipage}[t]{0.3\textwidth}
\centering
\includegraphics[width=0.16\textheight]{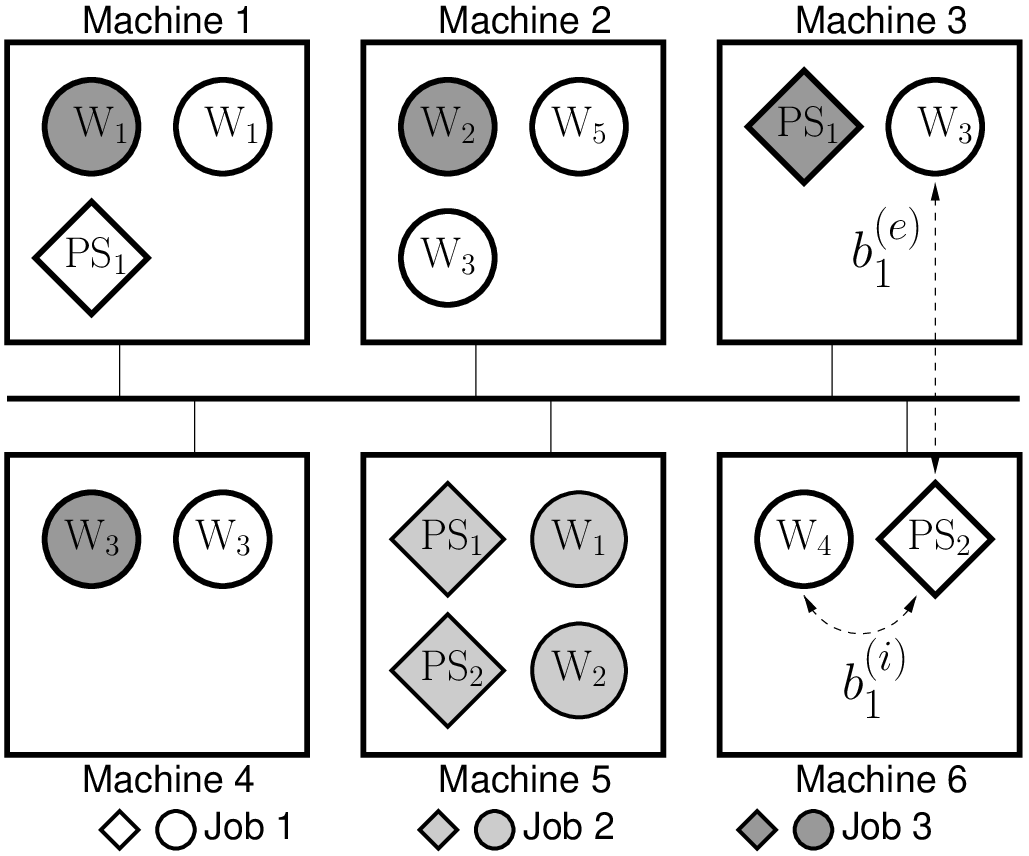}
\vspace{.06in}
\caption{Colocated parameter servers and workers on physical machines.}
\label{fig:Colocate}
\end{minipage}
\vspace{-.1in}
\end{figure*}
%\smallskip

The most relevant work to ours is ~\cite{Bao18:ML_INFOCOM}, where Bao {\em et al.} developed an online primal-dual approximation algorithm, OASiS, to solve the scheduling problem for distributed ML systems.
% by employing a similar primal-dual algorithmic framework.
Our work differs from~\cite{Bao18:ML_INFOCOM} in the following key aspects: 1) In \cite{Bao18:ML_INFOCOM}, the workers and parameter servers are allocated on two {\em strictly separated} sets of physical machines, i.e., {\em no} worker and parameter server can share the same physical machine, which significantly simplified the underlying optimization problem. In this work, we consider the cases that workers and parameter servers can be {\em co-located} on the same physical machine, which is the common practice in existing ML systems (see, e.g., ~\cite{MXNet1,MXNet2}). Such co-location can significantly reduce inter-server communication, expedite training, and improve resource utilization efficiency between workers and parameter servers.
However, as will be shown later, the co-location setting leads to an integer {\em non-convex} optimization problem with {\em non-deterministic} constraints, which is much harder and necessitates new algorithm design.
%To this end, we propose a low-complexity approximation algorithm to solve the problem and rigorously prove its performance.
%More importantly, we consider the setting with co-located workers and parameter servers because 
2) Ref.~\cite{Bao18:ML_INFOCOM} advocates dynamic worker number adjustment, but does {\em not} guarantee the same global batch size across the training iterations. 
According to recent literature~\cite{Goyal17:SGD}, maintaining a consistent global batch size is important for ensuring convergence of DNN training, when the worker number varies. We ensure a consistent global batch size in our model.
We note that the co-location setting was considered in~\cite{Peng18:ML_EuroSys}. However, the scheduling algorithm therein is a heuristic and does not provide performance guarantee. This motivates us to develop new algorithms with provable performance to fill this gap.

\section{System Model and Problem Formulation} \label{sec:model_formulation}

In this section, we first provide a quick overview on the architecture of distributed ML frameworks to familiarize readers with the necessary background.
Then, we will introduce our analytical models for ML jobs and resource constraints, as well as the overall problem formulation.

\smallskip
{\bf 1) Distributed Machine Learning: A Primer.} %\label{subsec:distr_ML}
As illustrated in Fig.~\ref{fig:DistrML}, 
the key components of a PS-based distributed ML system include parameter servers, workers, and the training dataset, which are usually implemented over a connected computing cluster that contains multiple physical machines.
The training dataset of an ML job is stored in distributed data storage (e.g., HDFS~\cite{HDFS}) and usually divided into equal-sized data chunks.
Each data chunk further contains multiple equal-sized mini-batches.

To date, one of the most widely adopted training algorithms in distributed ML frameworks is the {\em stochastic gradient descent method} (SGD)~\cite{Dean:2012:LSD}. %thanks to its low-complexity~\cite{????}.
 With SGD, the interactions between workers and parameter servers are illustrated in Fig~\ref{fig:MLWorkflow}.
A worker is loaded with the DNN model (we focus on data parallel training) with current values of the model parameters (e.g., the weights of a DNN) and retrieves a new data chunk from the data storage.
In each training iteration, a worker processes one mini-batch from its data chunk to compute gradients (i.e., directions and magnitudes of parameter changes).\footnote{
As an example, in a DNN model, gradients can be computed by the well-known ``back-propagation'' approach.
}
Upon finishing a mini-batch, the worker sends the gradients to the parameter servers, receives updated global parameters, and then continues with the next training iteration to work on the next mini-batch.
%After finishing the current data chunk, the worker will repeat the same process on a new data chunk.
On the parameter server side, parameters are updated as: $\w[k] = \w[k-1] + \alpha_{k} \g[k]$, where $\w[k]$, $\alpha_{k}$, and $\g[k]$ denote the parameter values, step-size, and stochastic gradient in the $k$-th update, respectively.

\smallskip
{\bf 2) Modeling of Learning Jobs:} 
We consider a time-slotted system.
%, where time-slots are indexed by 0,1,2,$\dots$. 
The scheduling time-horizon is denoted as $\mathcal{T}$ with $|\mathcal{T}|=T$.
We use $\mathcal{I}$ to represent the set of training jobs and let $a_{i}$ denote the arrival time-slot of job $i \in \mathcal{I}$. 
As shown in Fig.~\ref{fig:Colocate}, parameter servers and workers %are usually implemented on virtual machines and 
 could spread over multiple physical machines.
We let $\mathcal{H}$ represent the set of physical machines.
For each job $i$, we use $w_{ih}[t], s_{ih}[t] \geq 0$ to represent the allocated numbers of workers and parameter servers on machine $h \in \mathcal{H}$ in each time-slot $t \geq a_{i}$, respectively.
Further, we let $ \mathcal{P}_{i}[t] \!\triangleq\! \{h \!\in\! \mathcal{H}|s_{ih}[t] \!>\! 0\}$ and $ \mathcal{W}_{i}[t] \!\triangleq\! \{h \!\in\! \mathcal{H}| w_{ih}[t] \!>\! 0\}$ denote the sets of physical machines that contain parameter servers and workers for job $i$ in time-slot $t$, respectively. 

We use a binary variable $x_{i}\in \{0,1\}$ to indicate whether job $i$ is admitted ($x_{i} =1$) or not ($x_{i}=0$).
%We use $\tau_{i}$ to denote the training time for each sample of job $i$, computed by the time for training one mini-batch divided by the number of samples per mini-batch.
We use $\tau_{i}$ to denote the training for each sample of job $i$.
%Admission Control is common in cloud management systems and jobs get rejected can be queued or resubmitted later[citation]. 
We let $b_{i}(h,p)$ denote the data rate of the link between a worker for job $i$ (on machine $h$) and a parameter server (on machine $p$).
Each worker or parameter server is exclusively assigned to some job $i$, and $b_i(h,p)$ is reserved and decided by the user upon job submission, which is common to ensure the data transfer performance~\cite{Bao18:ML_INFOCOM}. 
Note that the value of $b_{i}(h,p)$ is {\em locality-dependent} where the slowest worker will become the bottleneck since we focus on Bulk-Synchronous-Parallel (BSP) scheme ~\cite{Cheatham05:BSP}.
Specifically, we have:
\begin{align*}
b_{i}(h,p) =
\begin{cases}
b_{i}^{(i)}, & \text{if $h=p$}, \\
b_{i}^{(e)}, & \text{otherwise},
\end{cases}
\end{align*}
where $b_{i}^{(i)}$ and $b_{i}^{(e)}$ denote the internal and external communication link rates, respectively.
For example, as shown in Fig.~\ref{fig:Colocate}, since Job 1's worker $\mathrm{W}_{4}$ and parameter server $\mathrm{PS}_{2}$ are both on the same machine, they  communicate at the internal link rate $b_{1}^{(i)}$.
On the other hand, since Job 1's worker $\mathrm{W}_{3}$ and parameter server $\mathrm{PS}_{2}$ are on different physical machines, they communicate at the external link rate $b_{1}^{(e)}$.
In practice, it usually holds that $b_{i}^{(e)} \ll b_{i}^{(i)}$.

Next, we calculate the amount of time for a worker on machine $h$ to process a sample.
We use $F_{i}$ to denote the global batch size of job $i$, which is a fixed constant across all time-slots.\footnote{We note that this fixed global batch size requirement is {\em compliant} with the standard SGD implementation~\cite{Bazaraa_Sherali_Shetty_93:NLP} and important for ensuring convergence~\cite{Goyal17:SGD}.
In contrast, the global batch size in some existing works on dynamic ML resource allocation (e.g.,~\cite{Bao18:ML_INFOCOM}) could be time-varying, which necessitates time-varying dynamic learning rate adjustments to offset correspondingly and further complicates the SGD implementation.
} 
%\footnote{To address the problem of time-varying global batch size, one approach is to adjust learning rate in each time-slot.}
We assume $F_i$ is equally divided among workers, i.e., the local batch size at each worker is: $F_i/\sum_{h'\in\mathcal{H}}w_{ih'}[t]$.\footnote{ Most distributed ML frameworks (e.g., Tensorflow~\cite{Keras}) set the same local batch size to each worker for the distributed training.}
%(hence, number of samples in a mini-batch of a worker may vary with time $t$.

We assume symmetric link speed in both directions between a worker and a PS. 
Let $g_{i}$ denote the size of gradients/parameters of job $i$.
Then, from a worker's perspective, to push gradients to and pull updated parameters from the PSs for job $i$,
the combined uplink/downlink communication time can be computed as:
$(2{g_i/\sum_{h'\in\mathcal{H}}s_{ih'}[t])/(\min_{p\in \mathcal{P}_i[t]}b_i(h,p)})$,
where the numerator term $g_i/\sum_{h'\in\mathcal{H}}s_{ih'}[t]$ follows from the assumption of even parameter distribution among the PSs, and the denominator is due to the fact that push/pull time is decided by the slowest link among all connections from the worker to all PSs (i.e., $\min_{p\in P_i[t]}b_i(h,p)$). 
Hence, the average computation and communication time to process a sample on machine $h\in\mathcal{W}_i[t]$ for job $i$ in time slot $t$ can be computed as:
\begin{align*}
\underbrace{\tau_{i}}_{\substack{\mathrm{Training \,\, time} \\ {\mathrm{per \,\, sample}}}} \!\!\! + \underbrace{ \bigg(\frac{2g_i/\sum_{h'\in\mathcal{H}}s_{ih'}[t]}{\min_{p\in \mathcal{P}_i[t]}b_i(h,p)} \bigg) \bigg/ \bigg(\frac{F_i}{\sum_{h'\in\mathcal{H}}w_{ih'}[t]} \bigg) }_{\mathrm{Communication \,\, time \,\, per \,\, sample}}.
\end{align*}
Recall that we focus on the BSP scheme, where all workers are synchronized before they proceed to the next iteration.
In other words, the total number of samples trained on machine $h\in\mathcal{W}_i[t]$ for job $i$ in time slot $t$ is determined by the slowest link among all connections from all workers to all PS (i.e., $\min_{p\in \mathcal{P}_i[t],h'\in\mathcal{W}_i[t]}b_i(h',p)$).
It then follows that the {\em number of samples trained} on machine $h\in\mathcal{W}_i[t]$ for job $i$ in time-slot $t$ can be computed as: 
\begin{align} \label{eqn_NumTrainedSamples}
\frac{w_{ih}[t]}{\tau_{i} + \Big( \frac{2g_i/\sum_{h'\in\mathcal{H}}s_{ih'}[t]}{\min_{p\in \mathcal{P}_i[t],h'\in\mathcal{W}_i[t]}b_i(h',p)} \Big) \Big/ \Big( \frac{F_i}{\sum_{h'\in\mathcal{H}}w_{ih'}[t]}\Big)}.
\end{align} 
Note that in practice, ML users usually specify a fixed ratio between worker number and PS number (e.g., often 1:1 \footnote{In practice, the ratio between numbers and parameter servers are specified by the user upon the job's submission (e.g., 1:1 in Kubernetes~\cite{Kubernetes}).}) when launching their training jobs to ensure appropriate coordinations between workers and PSs in terms of channel bandwidth, memory allocation, etc.
To model this practice, we define the ratio of worker number to PS number for each job $i$ as: 
\begin{align}\label{eqn_WorkerPSRatio}
\gamma_i \triangleq \frac{\sum_{h'\in\mathcal{H}}w_{ih'}[t]}{\sum_{h'\in\mathcal{H}}s_{ih'}[t]}, \quad \forall i, t.
\end{align}
With $\gamma_{i}$, we can rewrite Eq.~(1) as: 
\begin{align*}
\frac{w_{ih}[t]}{\tau_{i} +\frac{\gamma_i}{F_i} \frac{2g_i}{\min_{p\in \mathcal{P}_i[t],h'\in\mathcal{W}_i[t]}b_i(h',p)}}.
\end{align*}

Suppose that, for job $i$, there are $K_{i}$ data samples in its training dataset.
%, which is estimated based on the open datasets from Pathmind ~\cite{pathmind}. \kliu{What is pathmind?}
In practice, $K_i \gg F_i$.
In ML systems, an epoch is defined as a round of training that exhausts all data samples.
We let $E_{i}$ denote the number of epochs needed by job $i$.
In this paper, we assume that 
the epoch of each job is predetermined.
% in a certain range and can be known a prior.
This is because it is often difficult to estimate the required number of epochs for SGD-type methods' convergence.
Therefore, most SGD-type algorithms in practice stop after a fixed number of iterations (i.e., fixed number of epochs, see, e.g., ~\cite{Wang18:SpiderBoost} and references therein) to avoid excessive training delay.

Then, the total number of samples to be processed for job $i$ over the entire training process is $E_iK_i$.
To make sure that there are sufficient workers allocated for job $i$ over the entire training horizon, we have:
%%%%%%%%%%%%%%%%%%%%%%%%%%%%%%%%%%%%%%%%%%%%%%%%%%%%%%%%%
\begin{align} \label{eqn_min_workers_minibatch}
%&\sum_{t \in \mathcal{T}}\sum_{h \in \mathcal{H}} \frac{w_{ih}[t]}{\tau_{i} + \frac{2g_i}{\min_{p\in \mathcal{P}_{i}[t]}b_{i}(h,p)}}\geq x_{i} E_{i}K_{i}M_{i},\forall i \in \mathcal{I}.
\!\!\!\!\!\! \sum_{t \in \mathcal{T}}\sum_{h \in \mathcal{H}} \frac{w_{ih}[t]}{\tau_{i} +\frac{\gamma_i}{F_i} \frac{2g_i}{\min_{p\in \mathcal{P}_i[t],h'\in\mathcal{W}_i[t]}b_i(h',p)}} \!\geq\! x_{i} E_{i}K_{i},\forall i \in \mathcal{I}. \!\!\!\!
\end{align}
We note that, with co-located workers and parameter servers on each machine, Eq.~(\ref{eqn_min_workers_minibatch}) is {\em non-deterministic} due to the existence of the $\min\{\cdot\}$ operator.
As will be shown later, this non-determistic constraint makes the scheduling design far more complicated than related works~\cite{Li14:ML_Static_OSDI,Chilim14:ML_Static_OSDI,Chun16:Dolphin,Bao18:ML_INFOCOM}.

To model the fact that the largest number of assigned concurrent workers is no more than the global batch size %the number of $F_i$ in each time slot
 (otherwise, some workers will be idle), we have: 
\begin{align} \label{eqn_max_workers_datachunk}
\sum_{h \in \mathcal{H}}w_{ih}[t] \leq x_{i} F_i, \quad \forall i \in \mathcal{I}, a_{i} \leq t \leq T.
\end{align}

%\smallskip
{\bf 3) Resource Constraint Modeling:}
We let $\mathcal{R}$ denote the set of resources (e.g., CPU/GPU, memory, storage, etc.).
Let $\alpha_{i}^{r}$ and $\beta_{i}^{r}$ be the amount of type-$r$ resource required by a worker and a parameter server for job $i$, respectively. 
Let $C_{h}^{r}$ be the capacity of type-$r$ resource on machine $h$. 
To ensure the resources do not exceed type-$r$'s limit, we have:
\begin{align}
\label{capacity}\sum_{i \in \mathcal{I}}(\alpha_{i}^{r}w_{ih}[t]+\beta_{i}^{r}s_{ih}[t])\leq C_{h}^{r}, \forall t \in \mathcal{T}, r \in \mathcal{R}, h \in \mathcal{H}.
\end{align}
Note that for job $i$, its completion time $\tilde{t}_{i}$ corresponds to the latest time-slot where there remain some active workers allocated for it.
Therefore, we have:
\begin{align} \label{eqn_completion_time}
\tilde{t}_{i}=\arg \max_{t \in \mathcal{T}}\bigg\{\sum_{h \in \mathcal{H}} w_{ih}[t]>0 \bigg\}, \quad \forall i \in \mathcal{I}.
\end{align}
To ensure that no workers and parameter servers are allocated before job $i$'s arrival, we have:
\begin{align} 
\label{yz}&w_{ih}[t]=s_{ih}[t]=0, \quad \forall i \in \mathcal{I}, h \in \mathcal{H}, t<a_{i}.
\end{align}

{\bf 4) Objective Function and Problem Statement:} 
Let $u_{i}(\tilde{t}_{i} - a_{i})$  be the utility function for job $i$,  which is non-increasing with respect to the training time $\tilde{t}_{i} - a_{i}$.
The utility functions could play the role of various performance metrics based on job completion times (e.g., fairness).
%That is, the utility received by a job does not increase if its completion time is long.
In this paper, our goal is to maximize the overall utility for all jobs. 
Putting all constraints and the objective function together, the offline (with knowledge of $a_{i}$, $\forall i$) distributed ML resource scheduling problem (DMLRS) can be formulated as:
%%%%%%%%%%%%%%%%%%%%%%%%%%%%%%%%%%%%%
\begin{align} %\label{eqn_offline_formulation}
\textbf{DMLRS: } & \underset{\x,\w,\s}{\text{Maximize }} \sum_{i \in \mathcal{I}}x_{i} u_{i}(\tilde{t}_{i} - a_{i}) \nonumber\\
& \text{subject to  }  \text{Constraints (\ref{eqn_min_workers_minibatch}) -- (\ref{yz})}. \nonumber
\end{align}
Problem DMLRS is an integer nonlinear program, which is NP-hard in general~\cite{Hochbaum97}.
Also, Problem DMLRS involves two {\em non-deterministic} constraints in (\ref{eqn_min_workers_minibatch}) and (\ref{eqn_completion_time}), which are not amenable for conventional optimization techniques.
Moreover, the arrivals $\{a_{i}, \forall i\}$ are often unknown in practice, which necessitates {\em online optimization}.
Overcoming these challenges constitutes the rest of this paper.
To conclude this section, we summarize the key notation used in this paper in Table~\ref{table:notation} for easy reference.

\begin{table}[t!]
\centering
\caption{Notation.}
\label{table:notation}
\vspace{-.1in}
\begin{tabular}{| c | l |}
\hline
$\mathcal{I/T}$ & The set of jobs$/$System timespan \\ \hline
% $\mathcal{T}$ & System timespan \\ \hline
$\tilde{t}_{i}/a_{i}$ & Completion time of job $i$ $/$Arrival time of job $i$ \\ \hline
% $a_{i}$ & Arrival time of job $i$ \\ \hline
$u_{i}(\cdot)/K_{i}$ & Job $i$'$s$ utility function$/$Number of samples in $i$ \\ \hline
$\mathcal{R/H}$ & The set of resource types$/$The set of machines \\ \hline
% $K_{i}$ & \# of samples in $i$ \\  \hline
$E_{i}/F_{i}$ &\# of training epochs/Global batch size for job $i$ \\ \hline
% $M_{i}$ & Number of mini-batches in a data chunk for job $i$ \\ \hline
% $F_{i}$ & Global batch size for job $i$ \\ \hline
% $\mathcal{H}$ & Set of physical machines\\ \hline
%%%%%%%%%%%%%%%%%%%%%%%%%%%%%%%%%%%%%%%%%%%%%%%%%%%
$x_{i}$ & Admission decision variable to accept job $i$ or not \\ \hline
$C_{h}^{r}$ & Capacity of type-$r$ resource on server $h$ \\ \hline
$\alpha_{i}^{r}$ & Type-$r$ resource required by a worker in job $i$ \\ \hline
$\beta_{i}^{r}$ & \begin{tabular}[c]{@{}l@{}l@{}}Type-$r$ resource required by a PS in job $i$\end{tabular} \\ \hline
$w_{ih}[t]$ & Number of workers of job $i$ on server $h$ in $t$ \\ \hline
$s_{ih}[t]$ & Number of PSs of job $i$ on server $h$ in $t$ \\ \hline
%$B_{i}$ & Bandwidth offered by a parameter server of job $i$; \\ \hline
$b_{i}(h,p)$ &\begin{tabular}[c]{@{}l@{}l@{}}Bandwidth consumed by a worker of job $i$, where \\ $b_{i}(h,p) =b_{i}^{(e)}$, if $h\neq p$ or $b_{i}^{(i)}$, otherwise. \end{tabular} \\ \hline
$\tau_{i}$ & Time to train a sample  for job $i$ \\ \hline
$g_{i}$ &\begin{tabular}[c]{@{}l@{}}Size of gradients and parameters for job $i$\end{tabular} \\ \hline
$\mathcal{W}_{i}[t]$ & \begin{tabular}[c]{@{}l@{}}Set of physical machines containing workers\\ for job $i$ in $t$\end{tabular} \\ \hline
$\mathcal{P}_{i}[t]$ & \begin{tabular}[c]{@{}l@{}}Machines containing parameter servers\\ for job $i$ in $t$\end{tabular} \\ \hline
$x_{\pi_i}$ &  \begin{tabular}[c]{@{}l@{}} Binary decision variable to select schedule $\pi$ \\for job $i$ or not\end{tabular} \\ \hline
$\tilde t_{\pi_i}$ & The completion time slot of job $i$ with schedule $\pi$\\ \hline
$w_{ht}^{\pi_i}$ & \# of workers on server $h$ in $t$ for job $i$ in schedule $\pi$\\ \hline
$s_{ht}^{\pi_i}$ & \# of PSs on server $h$ for schedule $\pi$ in $t$ \\ \hline
$\Pi_{i}$ & Set of all feasible schedules for job $i$\\ \hline
$\gamma_i$ & The ratio of worker number to PS number for job $i$\\ \hline
$\rho_h^r[t]$ &  Allocated type-$r$ resource on machine $h$ in time $t$ \\ \hline
$Q_h^r(\cdot)$ & Price function for type-$r$ resource on machine $h$\\ \hline
\end{tabular}
% \end{tabularx}
\vspace{-.1in}
\end{table}

\section{Online Scheduling Algorithm Design} \label{sec:alg}
In this section, we structure the key components of our online scheduling algorithm design for solving Problem~DMLRS into three steps from Sections~\ref{subsec:reformulation} to \ref{subsec:best_schedule}.
We state our main theoretical performance results in Section~\ref{subsec:performance_analysis}.

\subsection{Reformulation for Non-Deterministic Constraint (\ref{eqn_completion_time})} \label{subsec:reformulation}
The first challenge in solving Problem~DMLRS stems from the non-deterministic ``argmax'' structure in constraint (\ref{eqn_completion_time}).
%In the literature, a commonly used technique to handle argmax-type constraints is via an equivalent reformulation that enumerates all possible schedules in the system (see, e.g., \cite{Bao18:ML_INFOCOM}).
To address this challenge, we let $\Pi_{i}$ be the set of all feasible schedules for job $i \in \mathcal{I}$ that satisfy constraints (\ref{eqn_min_workers_minibatch}), (\ref{eqn_max_workers_datachunk}).
%and (\ref{eqn_worker_ps_bandwidth}).
Each schedule $\pi_{i} \in \Pi_{i}$ is defined by the numbers of workers $w_{ht}^{\pi_i}$ and parameter servers $s_{ht}^{\pi_i}$ allocated for job $i$ on machine $h$ in each time-slot $t$, i.e., $\pi_i \!\triangleq\! \{ w_{ht}^{\pi_i}, s_{ht}^{\pi_i}, \forall t \!\in\! \mathcal{T}, h \!\in\! \mathcal{H} \}$. 
Note that $w_{ht}^{\pi_i}$ and $s_{ht}^{\pi_i}$ are constants, not to be confused with decision variables $w_{ih}[t]$ and $s_{ih}[t]$. 
We define a binary variable $x_{\pi_i} \!\in\! \{0,1\}$ that is equal to 1 if job $i$ is admitted and scheduled under schedule $\pi_{i}$, or 0, otherwise. %, $\forall i \in \mathcal{I}, \pi \in \Pi_{i}$. 
Clearly, due to the combinatorial nature, $|\Pi_{i}|$ is exponential.
We let $\tilde{t}_{\pi_i}$ denote job $i$'s completion time  under schedule $\pi_i$. 
Then, one can equivalently reformulate Problem DMLRS as:
\begin{align} %\label{eqn_reformulation}
& \textbf{R-DMLRS:} \nonumber \\
& \underset{\x}{\text{Maximize }}\sum_{i \in \mathcal{I}} \sum_{\pi_{i} \in \Pi_{i}} x_{\pi_i} u_{i}(\tilde{t}_{\pi_i} - a_{i}) \nonumber \\
%&- \gamma\max \left\{0, x_{\pi_i} E_{i}K_{i}M_{i} -\sum_{t \in \mathcal{T}}\sum_{h \in \mathcal{H}} \frac{w_{ht}^{\pi_i}}{\tau_i + \frac{2g_i}{\min_{p\in \mathcal{P}_i[t]}b_i(h,p)}} \right\} \nonumber\\
%& \nonumber \\
\label{eqn_rsc_limit}& \text{subject to  } \sum_{i\in \mathcal{I}} \sum_{\pi_i \in \Gamma(t,h)} (\alpha_{i}^{r} w_{ht}^{\pi_i} + \beta_{i}^{r} s_{ht}^{\pi_i}) x_{\pi_i} \leq C_{h}^{r}, \nonumber \\
&\hspace{1in} \forall t \in \mathcal{T}, r\in \mathcal{R}, h \in \mathcal{H}, \\
\label{eqn_sch_select} &\hspace{.6in}  \sum_{\pi_i \in \Pi_{i}} x_{\pi_i} \leq 1, \quad \forall i \in \mathcal{I}, \\
&\hspace{.6in} x_{\pi_i} \in \{0,1\}, \quad \forall i \in \mathcal{I}, \pi_i \in \Pi_{i}, \nonumber
\end{align}
\noindent where we use $\Gamma(t,h)$ to represent the set of feasible schedules that use machine $h$ to deploy workers or parameter servers in time-slot $t$.
Constraint (\ref{eqn_rsc_limit}) guarantees that, in any time-slot $t$ and on any machine $h$, the total amount of consumed type-$r$ resources will not exceed the capacity limit $C_{h}^{r}$.
Constraint (\ref{eqn_sch_select}) ensures that, for each job $i$, at most one feasible schedule from $\Pi_{i}$ will be selected.
Note that Problem R-DMLRS is an integer linear program (ILP) and a feasible solution to Problem R-DMLRS has a corresponding feasible solution to the original Problem DMLRS, % with the same objective value, 
and vice versa.
Notice that the non-deterministic constraint (\ref{eqn_completion_time}) {\em no longer exists} in Problem~R-DMLRS.
Further, if relaxing binary $x_{\pi_i}$-variables to real-valued, Problem R-DMLRS is a linear program (LP).
However, it remains difficult to solve Problem~R-DMLRS since it has an {\em exponential} number of $x_{\pi_i}$-variables due to the combinatorial nature of feasible schedules.
We will address this challenge in the next subsection.

\subsection{An Online Primal-Dual Framework for R-DMLRS}
In what follows, we adopt a primal-dual online algorithmic framework to reduce the number of binary variables, which is an  effective approach to address this kind of challenge in the literature (see, e.g., \cite{Buchbinder09,Bao18:ML_INFOCOM}).
Note that, in the dual of Problem~R-DMLRS, the number of dual variables is polynomial.
Meanwhile, although there are an exponential number of constraints in the dual problem, one only needs to be concerned with the set of active (binding) constraints, which are easier to deal with.
To see this, we associate dual variables $p_{h}^{r}[t] \geq 0$, $\forall t\in \mathcal{T}$, $h \in \mathcal{H}$, $r \in \mathcal{R}$ and $\lambda_{i} > 0$, $i \in \mathcal{I}$, with (\ref{eqn_rsc_limit}) and (\ref{eqn_sch_select}), respectively.
Then, following the standard procedure of dualization (relaxing the integrality constraints), we obtain the following dual problem:
%%%%%%%%%%%%%%%%%%%%%%%%%%%%%%%%%%%%%
\begin{align} \label{eqn_reformulation_dual}
& \textbf{D-R-DMLRS:} \nonumber \\
& \underset{\vlambda,\p}{\text{Minimize }}  && \hspace{-.3in} \sum_{i \in \mathcal{I}} \lambda_{i} + \sum_{t\in\mathcal{T}}\sum_{h\in\mathcal{H}}\sum_{r\in\mathcal{R}}p_{h}^{r}[t] C_{h}^{r}  & \\
& \text{subject to} && \hspace{-.3in} \lambda_{i} \geq u_{i}(\tilde{t}_{\pi_i} - a_{i}) -  \sum_{t \in \mathcal{T}(\pi_i)} \sum_{h \in \mathcal{H}(\pi_i[t])} \sum_{r \in \mathcal{R}} (\alpha_{i}^{r} w_{ht}^{\pi_i} \nonumber\\
\label{eqn_dualconstr} & && \hspace{-.05in} + \beta_{i}^{r} s_{ht}^{\pi_i}) p_{h}^{r}[t], \quad \forall i \in \mathcal{I}, \pi_i \in \Pi_{i}, \\
 & && \hspace{-.3in} p_{h}^{r}[t] \geq 0, \quad \forall t \in \mathcal{T},  h \in \mathcal{H}, r \in \mathcal{R}, \nonumber\\
 & && \hspace{-.3in} \lambda_{i} \geq 0, \quad \forall i \in \mathcal{I}, \nonumber
\end{align}

\noindent where $\mathcal{T}(\pi_i)$ denotes the time-slots utilized by schedule $\pi_i$ and $\mathcal{H}(\pi_i[t])$ denotes the set of machines containing workers and/or parameter servers under $\pi_i$ in time-slot $t$.
Here, $p_{h}^{r}[t]$ can be viewed as the price for type-$r$ resource in time $t$.
Then, the right-hand-side (RHS) of (\ref{eqn_dualconstr}) can be interpreted as job utility minus overall resource cost of job $i$ using schedule $\pi_i$.
Thus $\lambda_{i}\geq 0$ can be viewed as the payoff of admitting job $i$ with $\pi_i$. 
Next, we examine the properties of Problem D-R-DMLRS.
To minimize (\ref{eqn_reformulation_dual}), we tend to reduce $\lambda_{i}$ and $p_{h}^{r}[t]$  until they hit zero.
However, as $\lambda_{i}$ and $p_{h}^{r}[t]$ decrease, the left-hand-side (LHS) and RHS of (\ref{eqn_dualconstr}) decreases and increases, respectively (the term
%\begin{multline*}
$u_{i}(\tilde{t}_{\pi_i}-a_{i})$
%%
%\gamma\max \left\{0, x_{\pi_i} E_{i}K_{i}M_{i} -\sum_{t \in \mathcal{T}}\sum_{h \in \mathcal{H}} \frac{w_{ht}^{\pi_i}}{\tau_i + \frac{2g_i}{\min_{p\in \mathcal{P}_i[t]}b_i(h,p)}} \right\}
%\end{multline*}
in the RHS of  (\ref{eqn_dualconstr}) is a constant given $\pi_i$).
Therefore, $\lambda_{i}$ will drop to a value $\lambda_{i}^{*}$, which is equal to maximum of the RHS of (\ref{eqn_dualconstr}) achieved by some schedule $\pi_{i}^{*}$ and dual price $p_{h}^{r*}[t]$, i.e.,
\begin{align*}
&\lambda_{i}^{*} \!=\! u_{i}(\tilde{t}_{\pi_i^*}\!-\!a_{i})
%
%&-\! \gamma\max \left\{0, x_{\pi_i^*} E_{i}K_{i}M_{i}\!-\!\sum_{t \in \mathcal{T}}\sum_{h \in \mathcal{H}} \frac{w_{ht}^{\pi_i}}{\tau_i + \frac{2g_i}{\min_{p\in \mathcal{P}_i[t]}b_i(h,p)}} \right\}\nonumber\\
%
\!-\!\!\sum_{t \in \mathcal{T}(\pi_i^*)} \! \sum_{h \in \mathcal{H}(\pi_i^{*}[t])} \! \sum_{r \in \mathcal{R}} (\alpha_{i}^{r} w_{ht}^{\pi_i^{*}} \!+\! \beta_{i}^{r} s_{ht}^{\pi_i^{*}}) p_{h}^{r*}[t].
\end{align*}
This optimality structural insight implies that Problem~D-R-DMLRS is equivalent to finding an optimal schedule $\pi_i^{*}$ and dual price $p_{h}^{r*}[t]$ to maximize the RHS of (\ref{eqn_dualconstr}).
The above insights motivate the following primal-dual-based algorithm as shown in Algorithm~1. 
% which motivates the following primal-dual~\cite{Buchbinder09} online resource scheduling (PD-ORS) framework in Algorithm~1:

%%%%%%%%%%%%%%%%%%%%%%%%%%%%%%%%%
\medskip
\hrule 
\vspace{.03in}
\noindent {\bf Algorithm~1:} {Primal-Dual Online Resource Scheduling (PD-ORS).}
\vspace{.03in}
\hrule
\vspace{0.1in}
\noindent {\bf Initialization:}
\begin{enumerate} [topsep=1pt, itemsep=-.1ex, leftmargin=.2in]
\item[1.] Let $w_{ih}[t]=0$, $s_{ih}[t]=0$, $\forall i,t,h$. Let $\rho_{h}^{r}[t] = 0$, $\forall h,r,t$. Choose some appropriate initial values for $p_{h}^{r}[0]$.
\end{enumerate}

\noindent {\bf Main Loop:}
\begin{enumerate} [topsep=1pt, itemsep=-.1ex, leftmargin=.2in]
\item[2.] Upon the arrival of job $i$, determine a schedule $\pi_{i}^{*}$ to maximize the RHS of (\ref{eqn_dualconstr}) and its corresponding payoff $\lambda_{i}$ using {\bf Algorithm~2} ({\em to be specified}).
\item[3.] If $\lambda_{i} > 0$, set $x_{i} = 1$. 
Set $w_{ih}[t]$ and $s_{ih}[t]$ according to schedule $\pi_{i}^{*}$, $\forall t \in \mathcal{T}(\pi_{i}^{*})$, $h\in\mathcal{H}(\pi_{i}^{*}[t])$. 
\begin{itemize}
\item Update $\rho_{h}^{r}[t] \leftarrow \rho_{h}^{r}[t] + \alpha_{i}^{r} w_{ih}[t] + \beta_{i}^{r}s_{ih}[t]$, $\forall t \in \mathcal{T}(\pi_{i}^{*})$, $h\in\mathcal{H}(\pi_{i}^{*}[t])$, $r \in \mathcal{R}$.
\item Update $p_{h}^{r}[t] = Q_{h}^{r}(\rho_{h}^{r}[t])$, $\forall t \in \mathcal{T}(\pi_{i}^{*})$, $h\in\mathcal{H}(\pi_{i}^{*}[t])$, $r \in \mathcal{R}$.
Schedule job $i$ based on $\pi_{i}^{*}$ and go to Step 2.
\end{itemize}
\item[4.] If $\lambda_{i} \leq 0$, set $x_{i} = 0$ and reject job $i$ and go to Step 2.
\end{enumerate}
\smallskip
\hrule
\medskip

The intuition of Algorithm~1 is as follows:
By the complementary slackness condition of the Karush-Kuhn-Tucker (KKT) conditions~\cite{Bazaraa_Sherali_Shetty_93:NLP}, the primal constraint (\ref{eqn_sch_select}) must be tight when dual variable $\lambda_{i} > 0$, which implies that $x_{i} =1$ (Step 3) in Problem DMLRS.
Otherwise, if $\lambda_{i} = 0$, then the RHS of (\ref{eqn_dualconstr}) is non-positive, meaning the utility is low compared to the cost of resource consumption under schedule $\pi_{i}^{*}$.
Therefore, we should reject job $i$ ($x_{i} = 0$ in Step 4).
However, in order for the PD-ORS algorithm to work, two challenging components need to be specified: the schedule $\pi_{i}^{*}$ and how to update the cost function $Q_{h}^{r}(\cdot)$\footnote{
Here, we note that the ``cost function'' $Q_{h}^{r}(\cdot)$ is interpreted from servers' perspective rather than jobs' perspective.
Specifically, higher cost means servers allocated more resources to jobs, which implies higher utility for the jobs since jobs receive more resources from servers.}
In what follows, we will first focus on designing $Q_{h}^{r}(\cdot)$ and defer the challenging problem of finding $\pi_{i}^{*}$ to Section~\ref{subsec:best_schedule}.
For the design of $Q_{h}^{r}(\cdot)$, 
%we adopt the existing approach in the literature ~\cite{Buchbinder09,Bao18:ML_INFOCOM} and 
consider the following choice of $Q_{h}^{r}(\cdot)$:
%%%%%%%%%%%%%%%%%%%%%%%%%%%%%%%%%%%%%%%%%%%%%%%%%%%%%%%%%
\begin{align} \label{eqn_cost_update_fn}
Q_{h}^{r}(\rho_{h}^{r}[t]) = L(U^{r}/L)^{\frac{\rho_{h}^{r}[t]}{C_{h}^{r}}},
\end{align}
where constants $U^{r}$, $\forall r$, and $L$ are defined as follows:
%$\textcolor{red}{U^{r} \triangleq \gamma\max_{i\in\mathcal{I}}\frac{u_{i}(\ceil{E_{i}M_{i}\tau_{i}}-a_{i})}{\alpha_{i}^{r}+\beta_{i}^{r}}?}$
\begin{align}
\label{eqn_Ur} & \!\!\!\!\!\! U^{r} \!\triangleq\!\max_{i\in\mathcal{I}}\frac{u_{i}(\ceil{\frac{E_iK_i}{F_i}(\tau_{i}+2g_{i}\gamma_i/(b_i^{(i)}F_i))}-a_{i})}{\alpha_{i}^{r}+\beta_{i}^{r}},\forall r\in\mathcal{R},\!\!\!\!\!\! \\
\label{eqn_L} & \!\!\!\!\!\! L \triangleq \min_{i\in\mathcal{I}}\frac{1/(2\mu) u_{i}(T-a_{i})}{\sum_{r\in\mathcal{R}}\ceil{E_{i}K_i(\tau_{i}+2g_{i}\gamma_i/(b_i^{(e)}F_i)}(\alpha_{i}^{r}+\beta_{i}^{r})}. \!\!\!\!\!\!
\end{align}
The scaling factor $\mu$ in the definition of $L$ satisfies as follows:
\begin{align*}
    \frac{1}{\mu}\leq \frac{\ceil{E_{i}K_{i}(\tau_{i}+2g_{i}\gamma_i/(b_{i}^{(e)}F_i))}\sum_{r\in\mathcal{R}}(\alpha_{i}^{r}+\beta_{i}^{r})}{T\sum_{h\in\mathcal{H}}\sum_{r\in\mathcal{R}}C_{h}^{r}},\forall i\in\mathcal{I}.
\end{align*}
% $\frac{1}{\mu}\leq \frac{\ceil{E_{i}K_{i}(\tau_{i}+2g_{i}\gamma_i/(b_{i}^{(e)}F_i))}\sum_{r\in\mathcal{R}}(\alpha_{i}^{r}+\beta_{i}^{r})}{T\sum_{h\in\mathcal{H}}\sum_{r\in\mathcal{R}}C_{h}^{r}},\forall i\in\mathcal{I}$.
Here, $U^{r}$ is the maximum unit-resource job utility to deploy workers and parameter servers with type-$r$ resource .
Here, $u_{i}(\ceil{\frac{E_iK_i}{F_i}(\tau_{i}+2g_{i}\gamma_i/(b_{i}^{(i)}F_i))}-a_{i})$ is the largest utility job $i$ can achieve by using the maximum number of {\em co-located} workers and parameter servers (hence communicating rate is $b_{i}^{(i)}$) at all times during all $E_{i}$ epochs, so that $\ceil{\frac{E_iK_i}{F_i}(\tau_{i}\!+\!2g_{i}\gamma_i/(b_{i}^{(i)}F_i))}\!-\!a_{i}$ is the earliest possible job completion time.
Similarly, $L$ represents the minimum unit-time unit-resource job utility among all jobs, with $u_{i}(T-a_{i})$ being the smallest utility for job $i$, and workers and parameter servers communicate at small external rate $b_{i}^{(e)}$.
We use $\rho_h^r[t]$ to denote the allocated amount of type-$r$ resource to machine $h$ for (future) time slot $t$.
The intuition behind the $Q_{h}^{r}(\cdot)$ function is as follows:
%(proofs follow similarly from ~\cite{Buchbinder09,Bao18:ML_INFOCOM} and are omitted due to space limitation):
i) At $t\!=\!0$, $\rho_{h}^{r}[0]\!=\!0, \forall h\in\mathcal{H},r\in\mathcal{R}$. 
Hence, the price $p_{h}^{r}[0]=L$ is the lowest, $\forall h,r$, and any job can be admitted;
ii) As allocated resources increases, the price increases exponentially fast to reject early coming jobs with low utility and to reserve resources for later arrived jobs with higher utility; 
iii) When some type-$r$ resource is exhausted, i.e., $\rho_{h}^{r}[t] = C_{h}^{r}, \exists r\in\mathcal{R}$, $Q_{h}^{r}[C_{h}^{r}]=U^{r}$ and no job that requires type-$r$ resources will be admitted since the $U^{r}$ is the highest price.
As will be shown later, this price function leads to a logarithmically scaling competitive ratio in online algorithm design.
Note that computing the price function in Algorithm~1 requires the information of constants $U^r$, $L$, which can usually be estimated empirically based historical data.

Here, we point out a few interesting insights on the design choices of the cost function in Eq.~\eqref{eqn_cost_update_fn}.
Note that $U^r$ and $L$ are defined in Eqns~(\ref{eqn_Ur}) and (\ref{eqn_L}), respectively.
Here, we intentionally choose $U^r$ to be dependent on $r$ and $L$ to be independent on $L$ due to the following reasons:

First, the rationality of choosing an upper bound $U^r$ that varies with different resource types is to ensure that when some type-$r$ resource is exhausted, no more jobs that require type-$r$ resource should be allocated.
In other words, when the allocated amount of type-$r$ resource reaches the capacity of physical machine $h$, i.e., $\rho_h^r[t]=C_n^r, \exists r\in\mathcal{R}$, the price $p_h^r=Q_h^r(C_h^r)$ should reach the upper bound $U^r$, indicating that any job that requires type-$r$ resource will not be allocated to $h$.
However, jobs that do not require type-$r$ resource should still be able to be scheduled on machine $h$.
For example,  jobs that do not require GPU can still be placed on a machine with no available GPUs.

Second, the reason that we choose the lower bound $L$ to be independent of any resource type $r$ is to yield a larger ratio of $\frac{U^r}{L}$.
%to admit jobs incoming at $t=0$. Also, as the allocated resources increases, the price increases exponentially (see Eqn.~(\ref{eqn_cost_update_fn})).
The larger the ratio of $\frac{U^r}{L}$ is, the greater the price will be.
Intuitively, the ratio $\frac{U^r}{L}$ can be interpreted as the scheduling ``uncertainty,'' which increases as the ratio gets larger, implying the price function reacts to the accumulative resource consumption more aggressively.
Thus, choosing $L$ to be independent of resource type $r$ allows the price function $Q_h^r(\cdot)$ to react more aggressively to the accumulative allocated resource amount.
%Later, we show on Page~10, Theorems~5 and 6 that our choice have a good log performance.
We note that one can also choose the lower bound to be dependent on resource type $r$ by replacing $L$ with $L^r$.
By doing so, the log-scaling theoretical competitive ratio in Theorem~\ref{thm_Competitive_1} still holds and the proof in Appendix~\ref{appdx:thm_Competitive} only needs to be slightly updated  with the new notation $L^r$.
However, the empirical performance of using $L^r$ as lower bound is worse since the price function reacts less aggressively to the accumulative allocated resources.

\begin{figure*}[t!]
\centering
\subfigure[{$|\mathcal{P}_{i}[t]|\ne 1$.}]{%
\label{fig:IntExt1}%
\includegraphics[width=0.24\textwidth]{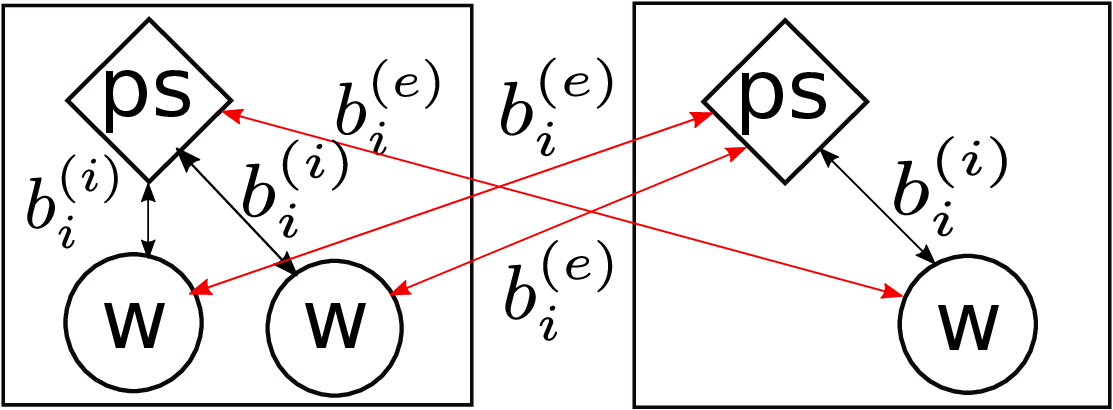}}%
\hspace{6pt}%
\subfigure[{$|\mathcal{P}_{i}[t]|=1,|\mathcal{W}_{i}[t]|\ne 1$.}]{%
\label{fig:IntExt2}%
\includegraphics[width=0.24\textwidth]{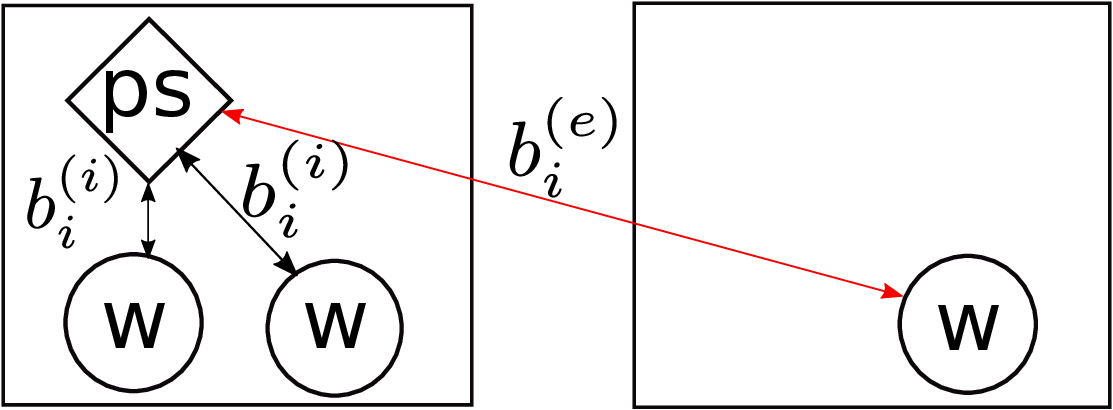}}%
\hspace{6pt}%
\subfigure[{$|\mathcal{P}_{i}[t]|\!\!=\!\!|\mathcal{W}_{i}[t]|\!\!=\!\!1,\mathcal{P}_{i}[t]\!\!\ne\!\!\mathcal{W}_{i}[t]$.}]{%
\label{fig:IntExt3}%
\includegraphics[width=0.24\textwidth]{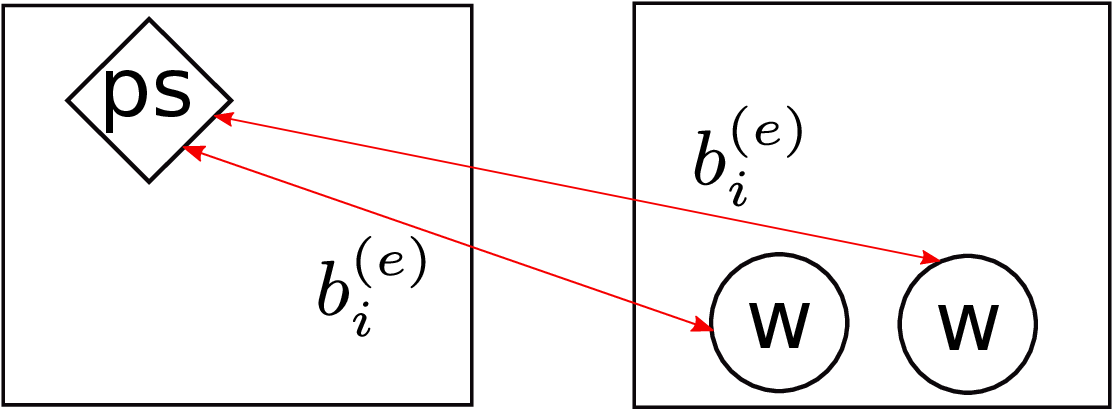}}%
\hspace{6pt}%
\subfigure[{$|\mathcal{P}_{i}[t]|\!\!=\!\!|\mathcal{W}_{i}[t]|\!\!=\!\!1,\mathcal{P}_{i}[t]\!\!=\!\!\mathcal{W}_{i}[t]$.}]{%
\label{fig:IntExt4}%
\includegraphics[width=0.24\textwidth]{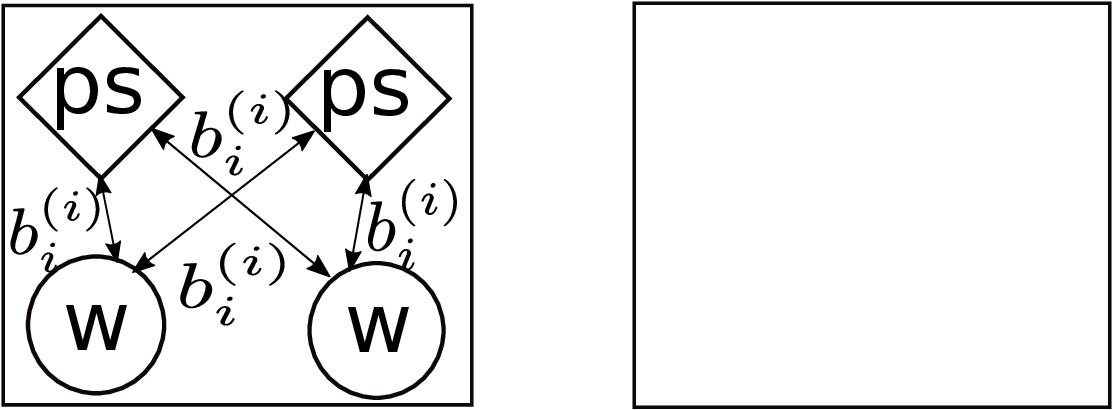}}%
\vspace{-.08in}
\caption{Values of $\frac{2g_{i}}{\min_{p\in\mathcal{P}_{i}[t],h\in\mathcal{W}_i[t]}b_{i}(h,p)}$ under various settings of $\mathcal{P}_{i}[t]$ and $\mathcal{W}_{i}[t]$. 
% Clearly, 
Here,
$\frac{2g_{i}}{\min_{p\in\mathcal{P}_{i}[t],h\in\mathcal{W}_i[t]}b_{i}(h,p)} = 2\frac{g_{i}}{b_{i}^{(i)}}$ if and only if in (d).}%
\label{fig:IntExt}%
\vspace{-.15in}
\end{figure*}
\vspace{-.1in}
\subsection{Determining Schedule $\pi_{i}^{*}$ in Step 2 of Algorithm 1} \label{subsec:best_schedule}

Now, consider the problem of finding a schedule $\pi_{i}^{*}$ in Step 2 of Algorithm~1 to maximize the RHS of (\ref{eqn_dualconstr}).
First, we note that {\em any} schedule for job $i$ has a unique completion time $\tilde{t}_{i}$, including the maximizer $\pi_{i}^{*}$ for (\ref{eqn_dualconstr}). 
Hence, the problem of finding the maximum RHS of (\ref{eqn_dualconstr}) can be decomposed as:
%This key subproblem can be stated as follows:
\begin{align} \label{eqn_ProbFindSch}
\!\!\!\underset{{\tilde{t}_{i}}}{\text{Max}} \left\{
\begin{array}{l}
\underset{{ \w,\s }}{\text{Max}} \,\, u_{i}(\tilde{t}_{i} \!-\! a_{i}) \!-\! \sum\limits_{t\in\mathcal{T}}\sum\limits_{h\in \mathcal{H}}\sum\limits_{r\in\mathcal{R}} p_{h}^{r}[t] \\
\hspace{1.1in} \times (\alpha_{i}^{r} w_{ih}[t] + \beta_{i}^{r} s_{ih}[t]) \\
\text{s.t.  }\  \alpha_{i}^{r} w_{ih}[t] + \beta_{i}^{r} s_{ih}[t] \leq \hat{C}_{h}^{r}[t], \\
\hspace{1.1in} \forall t\in \mathcal{T},r\in\mathcal{R},h\in\mathcal{H}, \\
\hspace{.3in} \text{Constraints (\ref{eqn_min_workers_minibatch})(\ref{eqn_max_workers_datachunk})(\ref{yz}) for } x_{i}=1,
\end{array}
\right\}, \!\!\!\!\!
\end{align}

\noindent where $\hat{C}_{h}^{r}[t] \triangleq C_{h}^{r} - \rho_{h}^{r}[t]$.
Note that in the inner problem, $u_{i}(\tilde{t}_{i} \!-\! a_{i})$ is a constant for any given $\tilde{t}_{i}$.
Thus, the inner problem can be simplified as:
\begin{align}
\label{eqn_obj_given_t} \underset{\w,\s}{\text{Minimize }} & \sum_{t \in [a_{i}, \tilde{t}_{i}]} \sum_{h\in\mathcal{H}}\sum_{r\in\mathcal{R}} p_{h}^{r}[t](\alpha_{i}^{r} w_{ih}[t] + \beta_{i}^{r} s_{ih}[t]) \\
\label{eqn_workload_coupling} \text{subject to } & \sum_{t\in[a_{i},\tilde{t}_{i}]} \sum_{h\in\mathcal{H}} \frac{w_{ih}[t]}{\tau_{i}+\frac{\gamma_i}{F_i} \frac{2g_{i}}{\min_{p\in \mathcal{P}_{i}[t],h'\in\mathcal{W}_i[t]}b_{i}(h',p)}}\geq V_{i}, \!\!\! \\
\label{eqn_rsrc_limit} & \alpha_{i}^{r} w_{ih}[t] \!+\! \beta_{i}^{r} s_{ih}[t] \!\leq\! \hat{C}_{h}^{r}[t], \forall r,h, \forall t \!\in\! [a_{i},\tilde{t}_{i}], \!\!\!\\
&\text{Constraint (\ref{eqn_max_workers_datachunk}) for all } t\in[a_{i},\tilde{t}_{i}],\nonumber
\end{align}
where $V_{i} \triangleq E_{i} K_{i}$ represents the total training workload
(total count of samples trained, where a sample is counted $E_i$ times if trained for $E_i$ times).
Note that  in Problem~(\ref{eqn_obj_given_t}), the only coupling constraint is (\ref{eqn_workload_coupling}).
This observation inspires a dynamic programming approach to solve Problem (\ref{eqn_obj_given_t}).
Consider the problem below if training workload at time $t$ is known (denoted as $V_{i}[t]$):
%%%%%%%%%%%%%%%%%%%%%%%%%%%%%%%%%%%%%%%%%
\begin{align}
\label{eqn_obj_given_t2} \underset{w_{ih}[t],s_{ih}[t],\forall h}{\text{Minimize }} & \hspace{.1in} \sum_{h\in\mathcal{H}}\sum_{r\in\mathcal{R}} p_{h}^{r}[t](\alpha_{i}^{r} w_{ih}[t] + \beta_{i}^{r} s_{ih}[t]) \\
\label{eqn_nondeterministic} \text{subject to } & \hspace{.1in} \sum_{h\in\mathcal{H}} \frac{w_{ih}[t]}{\tau_{i}+\frac{\gamma_i}{F_i} \frac{2g_{i}}{\min_{p\in \mathcal{P}_{i}[t],h'\in\mathcal{W}_i[t]}b_{i}(h',p)}}\geq V_{i}[t], \\
& \hspace{.1in} \text{Constraints (\ref{eqn_max_workers_datachunk})(\ref{eqn_rsrc_limit}) for the given } t.\nonumber
\end{align}
Let $\Theta(\tilde{t}_{i},V_{i})$ and $\theta(t,V_{i}[t])$ denote the optimal values of Problems~(\ref{eqn_obj_given_t}) and (\ref{eqn_obj_given_t2}), respectively.
Then, Problem~(\ref{eqn_obj_given_t}) is equivalent to the following dynamic program:
\begin{align} \label{eqn_2d_dp}
\Theta(\tilde{t}_{i},V_{i}) = \min_{v\in[0,V_{i}]} \left\{ \theta(\tilde{t}_{i},v) + \Theta(\tilde{t}_{i}-1,V_{i}-v) \right\}.
\end{align}
We find the optimal workload $v$ to be completed in time slot $\tilde{t}_{i}$ by enumerating it from 0 to $E_iK_i$, and the remaining workload $E_iK_i-v$ will be carried out in time span $[a_i,\tilde{t}_{i}-1]$.
The optimal workload would be the schedule with minimum costs, i.e., the objective function of Problem~(\ref{eqn_obj_given_t2}) is minimum.
Then we proceed to the next time slot $\tilde{t}_{i}-1$ with the workload $E_iK_i-v$, which is the same as finding the optimal schedule and cost as the last time slot $\tilde{t}_{i}$ except in a smaller scale.
Then, by enumerating all $\tilde{t}_{i} \in [a_{i}, T]$ and solving the dynamic program $\Theta(\tilde{t}_{i},V_{i})$ in (\ref{eqn_2d_dp}) for every choice of $\tilde{t}_{i}$, we can solve Problem (\ref{eqn_ProbFindSch}) and determine the optimal schedule $\pi_{i}^{*}$.
We summarize this procedure in Algorithms~2 and 3:
%%%%%%%%%%%%%%%%%%%%%%%%%%%%%%%%%
\medskip
\hrule 
\vspace{.03in}
\noindent {\bf Algorithm~2:} Determine $\pi_{i}^{*}$ in Step 2 of Algorithm 1.
\vspace{.03in}
\hrule
\vspace{0.1in}
\noindent {\bf Initialization:}
\begin{enumerate} [topsep=1pt, itemsep=-.1ex, leftmargin=.2in]
\item[1.] Let $\tilde{t}_{i}\!=\!a_{i}$. Let $\lambda_{i} \!=\! 0$, $\pi_{i}^{*} \!=\! \varnothing$, $w_{ih}[t] \!=\! s_{ih}[t] \!=\! 0$, $\forall i,t,h$.
\end{enumerate}

\noindent {\bf Main Loop:}
\begin{enumerate} [topsep=1pt, itemsep=-.1ex, leftmargin=.2in]
\item[2.] Compute $\Theta(\tilde{t}_{i},V_{i})$ in (\ref{eqn_2d_dp}) using {\bf Algorithm 3}. 
Denote the resulted schedule as $\pi_{i}$.
Let $\lambda'_{i} = u_{i}(\tilde{t}_{i}-a_{i}) - \Theta(\tilde{t}_{i},V_{i})$.
If $\lambda'_{i} > \lambda_{i}$, let $\lambda_{i} \leftarrow \lambda'_{i}$ and $\pi_{i}^{*} \leftarrow \pi_{i}$.
\item[3.] Let $\tilde{t}_{i} \leftarrow \tilde{t}_{i} + 1$. If $\tilde{t}_{i} > T$, stop; otherwise, go to Step 2.
\end{enumerate}
\smallskip
\hrule
\medskip

%%%%%%%%%%%%%%%%%%%%%%%%%%%%%%%%%
\medskip
\hrule 
\vspace{.03in}
\noindent {\bf Algorithm~3:} Solving $\Theta(\tilde{t}_{i},V_{i})$ by Dynamic Programming.
\vspace{.03in}
\hrule
\vspace{0.1in}
\noindent {\bf Initialization:}
\begin{enumerate} [topsep=1pt, itemsep=-.1ex, leftmargin=.2in]
\item[1.] Let {\em cost-min} = $\infty$, $\pi_{i} = \varnothing$, and $v = 0$.
\end{enumerate}

\noindent {\bf Main Loop:}
\begin{enumerate} [topsep=1pt, itemsep=-.1ex, leftmargin=.2in]
\item[2.] Compute $\theta(\tilde{t}_{i}, v)$ using {\bf Algorithm~4} ({\em to be specified}). 
Denote the resulted cost and schedule as {\em cost-v} and $\hat{\pi}_{i}$.

\item[3.] Compute $\Theta(\tilde{t}_{i}-1,V_{i}-v)$ by calling {\bf Algorithm~3} itself.
Denote the resulted cost and schedule as {\em cost-rest} and $\tilde{\pi}_{i}$.
\item[4.] If $\text{{\em cost-min}} > \text{{\em cost-v}} + \text{{\em cost-rest}}$ then $\text{{\em cost-min}} = \text{{\em cost-v}} + \text{{\em cost-rest}}$ and let $\pi_{i} \leftarrow \hat{\pi}_{i} \cup \tilde{\pi}_{i}$.
\item[5.] Let $v \leftarrow v+1$. If $v>V_{i}$ stop; otherwise go to Step 2.
\end{enumerate}
\smallskip
\hrule
\medskip
%{\em 2) Solving $\theta(t,v)$ (i.e., Problem (\ref{eqn_obj_given_t2}))}:
In Algorithm~3, however, how to compute $\theta(t,v)$ in Step 2 (i.e., Problem~(\ref{eqn_obj_given_t2})) is yet to be specified.
A challenge in solving (\ref{eqn_obj_given_t2}) is the {\em non-deterministic} constraint in (\ref{eqn_nondeterministic}), where $b_{i}(h,p)$ can be either $b_{i}^{(i)}$ or $b_{i}^{(e)}$.
%However, the workload constraint (\ref{eqn_nondeterministic}) is {\em non-deterministic} because $b_{i}(h,p)$ can take values either $b_{i}^{(i)}$ or $b_{i}^{(e)}$.
Therefore, we need to handle both cases.
To this end, we observe the following basic fact about $\frac{2g_{i}}{\min_{p \in \mathcal{P}_{i}[t],h\in\mathcal{W}_i[t]}b_{i}(h,p)}$ (also illustrated in Fig.~\ref{fig:IntExt}) as stated in Fact~\ref{fact1}.
We omit the proof of this fact due to its simplicity, which is illustrated in Fig.~\ref{fig:IntExt}.

%%%%%%%%%%%%%%%%%%%%%%%%%%%%%%%%%%%%%%%%%%%%%%%%%%%%%%%%%
\begin{fact} \label{fact1}
The function $(2g_{i}/\min_{p\in\mathcal{P}_{i}[t],h\in\mathcal{W}_i[t]}b_{i}(h,p)) = 2g_{i}/b_{i}^{(i)}$ if and only if $|\mathcal{P}_{i}[t]| = |\mathcal{W}_{i}[t]| =1$ and $\mathcal{P}_{i}[t] = \mathcal{W}_{i}[t]$; otherwise, $(2g_{i}/\min_{p\in\mathcal{P}_{i}[t],h\in\mathcal{W}_i[t]}b_{i}(h,p)) = 2g_{i}/b_{i}^{(e)}$.
\end{fact}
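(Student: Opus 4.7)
The plan is to establish Fact~\ref{fact1} through a straightforward case analysis driven by the two-valued definition of $b_{i}(h,p)$ and the ordering $b_{i}^{(e)} < b_{i}^{(i)}$ (which is implicit in the stated regime $b_{i}^{(e)} \ll b_{i}^{(i)}$). Since $2g_{i}>0$, dividing through reduces the claim to an equivalent statement about the value of $\min_{p\in\mathcal{P}_{i}[t],\,h\in\mathcal{W}_{i}[t]} b_{i}(h,p)$: either it equals $b_{i}^{(i)}$, or it equals $b_{i}^{(e)}$, and these are the only two possibilities because the feasible set is nonempty (both $\mathcal{W}_{i}[t]$ and $\mathcal{P}_{i}[t]$ are nonempty whenever job $i$ is being trained in slot $t$) and $b_{i}$ takes only these two values.

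First I would handle the ``if'' direction. Assume $|\mathcal{P}_{i}[t]|=|\mathcal{W}_{i}[t]|=1$ and $\mathcal{P}_{i}[t]=\mathcal{W}_{i}[t]=\{h^\star\}$. Then the minimization ranges over the single pair $(h^\star,h^\star)$, for which $b_{i}(h^\star,h^\star)=b_{i}^{(i)}$ by definition. Hence the minimum equals $b_{i}^{(i)}$, giving $2g_{i}/\min_{p,h} b_{i}(h,p) = 2g_{i}/b_{i}^{(i)}$. This is the configuration drawn in Fig.~\ref{fig:IntExt4}.

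For the ``only if'' direction I would argue by contrapositive: whenever the hypothesis fails, a pair $(h,p)$ with $h\neq p$ appears in the minimization, so $b_{i}(h,p)=b_{i}^{(e)}$ for that pair and, using $b_{i}^{(e)} < b_{i}^{(i)}$, the minimum collapses to $b_{i}^{(e)}$. The negation of the hypothesis splits into three (not mutually exclusive) sub-cases, each producing a witness: (i) if $|\mathcal{P}_{i}[t]|\geq 2$, pick any $h\in\mathcal{W}_{i}[t]$ and choose $p\in\mathcal{P}_{i}[t]\setminus\{h\}$, which is nonempty (Fig.~\ref{fig:IntExt1}); (ii) if $|\mathcal{W}_{i}[t]|\geq 2$, a symmetric choice works (Fig.~\ref{fig:IntExt2}); (iii) if both sets are singletons but $\mathcal{P}_{i}[t]\neq\mathcal{W}_{i}[t]$, then the unique $h\in\mathcal{W}_{i}[t]$ and unique $p\in\mathcal{P}_{i}[t]$ differ (Fig.~\ref{fig:IntExt3}). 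In every sub-case the witnessing cross-machine pair forces the minimum down to $b_{i}^{(e)}$.

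The only point where I would be slightly careful is verifying that the three failure sub-cases exhaust the negation of the stated condition, which is a routine logical check on the conjunction ``$|\mathcal{P}_{i}[t]|=1$ and $|\mathcal{W}_{i}[t]|=1$ and $\mathcal{P}_{i}[t]=\mathcal{W}_{i}[t]$.'' No further estimates are needed, so there is no real obstacle; this is essentially a direct unfolding of the definition of $b_{i}(h,p)$, which is why the authors relegated it to a figure.
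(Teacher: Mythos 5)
Your proof is correct and follows essentially the same argument the paper intends: the paper omits a written proof and instead points to the four panels of Fig.~\ref{fig:IntExt}, and your case decomposition (the single co-located pair forcing $b_{i}^{(i)}$, versus the three failure sub-cases each exhibiting a cross-machine pair that pulls the minimum down to $b_{i}^{(e)}$) corresponds panel-for-panel to that figure. The only ingredients you invoke beyond the definition of $b_{i}(h,p)$ --- strictness of $b_{i}^{(e)}<b_{i}^{(i)}$ and nonemptiness of $\mathcal{P}_{i}[t]$ and $\mathcal{W}_{i}[t]$ for an active job --- are exactly the implicit assumptions the paper relies on, so there is no gap.
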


%\smallskip
With Fact~\ref{fact1}, we now consider the following two cases:

\smallskip
{\em Case 1): $b_{i}^{(i)}$ (Internal Communication):}
In Case 1), Fact~\ref{fact1} implies that Problem~(\ref{eqn_obj_given_t2}) reduces to a {\em single-machine} problem (i.e., discarding $\sum_{h\in\mathcal{H}}\{\cdot\}$ in (\ref{eqn_obj_given_t2}) and (\ref{eqn_nondeterministic})).
Note further that if we temporarily ignore the workload-coupling constraint (\ref{eqn_nondeterministic}) and use the worker-PS ratio in (\ref{eqn_WorkerPSRatio}), Problem~(\ref{eqn_obj_given_t2}) can be {\em decoupled} across resources and simplified as:
%%%%%%%%%%%%%%%%%%%%%%%%%%%%%%%%%%%%%%%%%%%%%%%%%%%%%%%%%%
\begin{equation}\label{linear}
\sum_{r\in\mathcal{R}}
\begin{Bmatrix*}[l]
&\hspace{-.1in} \text{Min} &\hspace{-.1in}p_{h}^{r}[t]s_{ih}[t](\alpha_{i}^{r} \gamma_i + \beta_{i}^{r} )\\
%&\hspace{-.1in}&\hspace{-.1in}+ \gamma\max \{0, V_i[t]M_{i}\!-\!\frac{w_{ih}[t]}{\tau_i + \frac{2g_i}{b_i^{(i)}}}\}, \\
&\hspace{-.1in} \text{s.t.} &\hspace{-.1in} s_{ih}[t] (\alpha_{i}^{r} \gamma_i+ \beta_{i}^{r})\leq \hat{C}_{h}^{r}[t], \\
&&\hspace{-.1in} \text{Constraint }(\ref{eqn_max_workers_datachunk}) \text{ for given } r,h
\end{Bmatrix*}, \!\!\!\!
\end{equation}
where each summand in (\ref{linear}) is an integer linear program (ILP) having a trivial solution $w_{ih}[t]=s_{ih}[t]=0$, $\forall h \in \mathcal{H}$.
However, $w_{ih}[t]=0$, $\forall h\in\mathcal{H}$, clearly violates the workload constraint ($\ref{eqn_nondeterministic}$). 
Thus, when (\ref{linear}) is optimal, there should be {\em exactly one} machine $h' \in\mathcal{H}$ with $w_{ih'}[t]\geq 1$ and {\em exactly one} machine $h'' \in\mathcal{H}$ with $s_{ih''}[t] \geq 1$. 
This observation shows that the optimal solution of (\ref{eqn_obj_given_t2}) {\em tends to favor} $|\mathcal{P}_{i}[t]| = |\mathcal{W}_{i}[t]|=1$ if the workload constraint (\ref{eqn_nondeterministic}) is not binding.
%which matches the internal case condition in Fact~\ref{fact1}. 

Notice that the workload constraint (\ref{eqn_nondeterministic}) in Case 1) becomes $\gamma_is_{ih}[t] \geq V_{i}[t](\tau_{i} + \frac{2g_{i}\gamma_i}{b_{i}^{(i)}F_i})$. 
This implies the following simple solution: 
We can first sort each physical machine $h$ according to $\sum_{r\in\mathcal{R}}p_h^r[t](\alpha_i^r\gamma_i+\beta_i^r)$ and calculate the minimum number of $s_{ih}[t]=V_{i}[t](\tau_{i} + \frac{2g_{i}\gamma_i}{b_{i}^{(i)}F_i})/\gamma_i$ from the workload constraint.
The last step is to check if the machine satisfy the resource capacity constraint (\ref{eqn_rsrc_limit}) and constraint (\ref{eqn_max_workers_datachunk}).
If so, we return the schedule $(w_{ih}[t],s_{ih}[t])$ and the corresponding cost value.

\smallskip
{\em Case 2): $b_{i}^{(e)}$ (External Communication):} 
For those settings that do not satisfy $|\mathcal{P}_{i}[t]| = |\mathcal{W}_{i}[t]| =1$ and $\mathcal{P}_{i}[t] = \mathcal{W}_{i}[t]$, Fact~\ref{fact1} indicates that parameter servers and workers are communicating at external rate $b_{i}^{(e)}$.
In this case, the workload constraint (\ref{eqn_nondeterministic}) simply becomes: $\sum_{h \in \mathcal{H}} w_{ih}[t] \geq V_{i}[t](\tau_{i} + \frac{2g_{i}\gamma_i}{b_{i}^{(e)}F_i})$.
Then, we can rewrite Problem~(\ref{eqn_obj_given_t2}) as:
\begin{align}
\label{eqn_obj_given_t3} \underset{w_{ih}[t],s_{ih}[t],\forall h}{\text{Minimize }} & \hspace{.1in} \sum_{h\in\mathcal{H}} p_{h}^{w}[t] w_{ih}[t] + p_{h}^{s}[t] s_{ih}[t] \\
\label{eqn_packing} \text{subject to } & \hspace{.1in} \alpha_{i}^{r} w_{ih}[t] + \beta_{i}^{r} s_{ih}[t] \leq \hat{C}_{h}^{r}[t], \,\, \forall h,r,\\
\label{eqn_packing2}  &\hspace{.1in} \sum\nolimits_{h\in\mathcal{H}} w_{ih}[t] \leq F_i,\\
\label{eqn_cover1} &\hspace{.1in} \sum\nolimits_{h\in\mathcal{H}} w_{ih}[t] \geq V_{i}[t] \bigg( \tau_{i} + \frac{2g_{i}\gamma_i}{b_{i}^{(e)}F_i} \bigg), 
%\label{eqn_cover2} &\hspace{.1in} \sum\nolimits_{h\in\mathcal{H}} s_{ih}[t] \geq m_{p}[t], 
\end{align}
where $p_{h}^{w}[t] \triangleq \sum_{r\in\mathcal{R}} p_{h}^{r}[t] \alpha_{i}^{r}$ and $p_{h}^{s}[t] \triangleq \sum_{r\in\mathcal{R}} p_{h}^{r}[t] \beta_{i}^{r}$ denote the aggregated prices of all resources of allocating workers and PSs on machine $h$ in time $t$, respectively.

Unfortunately, Problem~(\ref{eqn_obj_given_t3}) is a highly challenging integer programming problem with generalized packing and cover type constraints (i.e., integer variables rather than 0-1 variables) in (\ref{eqn_packing})--(\ref{eqn_cover1}), respectively,
which is clearly NP-Hard.
% Therefore, Problem~(\ref{eqn_obj_given_t3}) is a generalized mixed cover/packing problem, which is clearly NP-Hard.
Also, it is well-known that there are {\em no} polynomial time approximation schemes (PTAS) even for the basic set-cover and bin-packing problems unless P = NP~\cite{Hochbaum97}.
In what follows, we will pursue an instance-dependent constant ratio approximation scheme to solve Problem~(\ref{eqn_obj_given_t3}) in this paper. 
%\textbf{this appears conflicting with the non-ptas claim above}.
To this end, we propose a randomized rounding scheme:
First, we solve the linear programming relaxation of Problem~(\ref{eqn_obj_given_t3}).
Let $\{ \bar{w}_{ih}[t], \bar{s}_{ih}[t], \forall h,t \}$ be the fractional optimal solution.
We let $\delta\in(0,1]$ be a parameter.
Let $G_\delta$ be a constant {(the notation $G_\delta$ signifies that $G_{\delta}$ is dependent on $\delta$) to be defined later, and let $w'_{ih}[t] \!=\! G_\delta\bar{w}_{ih}[t], s'_{ih}[t] \!=\! G_\delta\bar{s}_{ih}[t]$, $\forall h,t$.
Then, we randomly round $\{ w'_{ih}[t], s'_{ih}[t], \forall h,t \}$ to obtain an integer solution as follows: 
\begin{align}
\label{eqn_rounding_w} w_{ih}[t] &= \begin{cases}
\lceil w'_{ih}[t] \rceil, & \hspace{-.1in} \text{with probability } w'_{ih}[t] - \lfloor w'_{ih}[t] \rfloor,\\
\lfloor w'_{ih}[t] \rfloor, & \hspace{-.1in} \text{with probability } \lceil w'_{ih}[t] \rceil - w'_{ih}[t],
\end{cases} \!\!\!\\
\label{eqn_rounding_s} s_{ih}[t] &= \begin{cases}
\lceil s'_{ih}[t] \rceil, & \hspace{-.1in} \text{with probability } s'_{ih}[t] - \lfloor s'_{ih}[t] \rfloor,\\
\lfloor s'_{ih}[t] \rfloor, & \hspace{-.1in} \text{with probability } \lceil s'_{ih}[t] \rceil - s'_{ih}[t].
\end{cases}
\end{align}
We will later prove in Theorem~\ref{thm_Alg4_1} (when $0<G_\delta\leq 1)$ and Theorem~\ref{thm_Alg4_2} (when $G_\delta>1$) that the approximation ratio of this randomized rounding scheme in (\ref{eqn_rounding_w})-(\ref{eqn_rounding_s}) enjoys a ratio that is independent on the problem size. 
%\textbf{so the ratio is not constant?}.

\smallskip
Lastly, summarizing results in Cases 1) -- 2) yields the following approximation algorithm for solving Problem~(\ref{eqn_obj_given_t2}):

%%%%%%%%%%%%%%%%%%%%%%%%%%%%%%%%%
\medskip
\hrule 
\vspace{.03in}
\noindent {\bf Algorithm~4:} Solving $\theta(t,v)$ (i.e., Problem (\ref{eqn_obj_given_t2})).
\vspace{.03in}
\hrule
\vspace{0.1in}
\noindent {\bf Initialization:}
\begin{enumerate} [topsep=1pt, itemsep=-.1ex, leftmargin=.2in]
\item[1.] Let $w_{ih}[t] \!=\! s_{ih}[t] \!=\! 0$, $\forall h$. Let $h\!=\!1$. 
% Pick some $G$.
Pick some $\delta\in(0,1]$.
Let $G_\delta$ be defined as in Eq.~(\ref{eqn_Thm3_G}) or Eqn~(\ref{eqn_Thm4_G}).
Let $D \!=\! \lceil v(\tau_{i} + 2g_{i}\gamma_i/(b_{i}^{(i)}F_i)) \rceil$. 
Let $h^* = \varnothing$. Let {\em cost-min}$=\infty$.
Choose some integer $S \geq 1$.
%Let $S = (1 - \delta/3)^{-1}$ if $G_\delta$ is defined as in Eq.~\eqref{eqn_Thm3_G} or $S = (1 - \delta/3m)^{-1}$ if $G_\delta$ is defined as in Eq.~\eqref{eqn_Thm4_G}.
Let $iter\leftarrow1$.
\end{enumerate}

\noindent {\bf Handling Internal Communication:}
\begin{enumerate} [topsep=1pt, itemsep=-.1ex, leftmargin=.2in]
\item[2.] Sort machines in $\mathcal{H}$ according to $\sum_{r\in\mathcal{R}} p_h^r[t](\alpha_i^r\gamma_i+\beta_i^r)$ in non-decreasing order into $h_1,h_2,...,h_H$.

\item[3.] Calculate the minimum number of $s_{ih}[t]=V_i[t] \Big(\tau_i+\frac{2g_i\gamma_i}{b_i^{(i)}F_i} \Big)/\gamma_i$.

\item[4.] If Constraint (\ref{eqn_max_workers_datachunk}) is not satisfied, go to Step 7.

\item[5.] If Constraint (\ref{eqn_packing}) is not satisfied, go to Step 7.

\item[6.] Return {\em cost-min}$\sum_{r\in\mathcal{R}} p_h^r[t]s_{ih}[t](\alpha_i^r\gamma_i+\beta_i^r)$ and $h^* = h$.

\item[7.] Let $h\leftarrow h+1$. If $h>H$, stop; otherwise, go to Step 2.

\end{enumerate}

\noindent {\bf Handling External Communication:}
\begin{enumerate} [topsep=1pt, itemsep=-.1ex, leftmargin=.2in]
\item[8.] Solve the linear programming relaxation of Problem~(\ref{eqn_obj_given_t3}).
Let $\{ \bar{w}_{ih}[t], \bar{s}_{ih}[t], \forall h,t \}$ be the fractional optimal solution.

\item[9.] Let $w'_{ih}[t] = G_\delta\bar{w}_{ih}[t], s'_{ih}[t] = G_\delta\bar{s}_{ih}[t]$, $\forall h,t$.

\item[10.] Generate an integer solution $\{ w_{ih}[t], s_{ih}[t], \forall h,t \}$ following the randomized rounding scheme in (\ref{eqn_rounding_w})--(\ref{eqn_rounding_s}).

%\item[11.] If $\{ w_{ih}[t], s_{ih}[t], \forall h,t \}$ is infeasible, go to Step 10.

\item[11.] If $\{ w_{ih}[t], s_{ih}[t], \forall h,t \}$ is infeasible or $iter< S$, then $iter\leftarrow iter+1$, 
go to Step 10.
\end{enumerate}

\noindent {\bf Final Step:}
\begin{enumerate} [topsep=1pt, itemsep=-.1ex, leftmargin=.2in]
\item[12.] Compare the solutions between internal and external cases. Pick the one with the lowest cost among them and return the cost and the corresponding schedule $\{ w_{ih}[t], s_{ih}[t], \forall h,t \}$.
\end{enumerate}
\smallskip
\hrule
\medskip

In the internal communication part of Algorithm~4, we first sort the machines and then check each machine one by one (Step 2).
We calculate the minimum number of $s_{ih}[t]$ needed to satisfy the learning workload demand $D$ (Step 3).
If Constraint (\ref{eqn_max_workers_datachunk}) is satisfied (Step 4), we further check the resource capacity constraint (\ref{eqn_rsrc_limit}) (Step 5).
If we detect a machine with all above constraints satisfied, we return the cost and schedule accordingly (Step 6).
After exploring one machine, we move on to the next one as long as it is not the last machine  (Step 7).
The external communication part is based on LP relaxation (Step 8) and randomized rounding (Step 9-12).
Note that the randomized rounding will find at most $S$ integer feasible solutions (Step 12).
Finally, we choose the lowest cost among the solutions from the internal and external communication parts (Step 13).

\subsection{Theoretical Performance Analysis} \label{subsec:performance_analysis}

We now examine the competitive ratio of our PD-ORS algorithm.
Note that the key component in PD-ORS is our proposed randomized rounding scheme in (\ref{eqn_rounding_w})--(\ref{eqn_rounding_s}), which is in turn the foundation of Algorithm~1.
Thus, we first prove the following results regarding the randomized rounding algorithm.
Consider an integer program with generalized cover/packing constraints: $\min \{ \c^{\top} \x : \A\x \geq \a, \B\x \leq \b, \x \in \mathbb{Z}_{+}^{n} \}$, where $\A \in \mathbb{R}_{+}^{m\times n}$, $\B \in \mathbb{R}_{+}^{r\times n}$, $\a \in \mathbb{R}_{+}^{m}$, $\b \in \mathbb{R}_{+}^{r}$, and $\c \in \mathbb{R}_{+}^{n}$.
Let $\bar{\x}$ be a fractional optimal solution.
Consider the randomized rounding scheme: Let $\x' = G_\delta\bar{\x}$ for some $G_\delta$ (to be specified).
Randomly round $\x'$ to $\hat{\x} \in \mathbb{Z}_{+}^{n}$ as:
$\hat{x}_{j} = \lceil x'_{j} \rceil$ w.p. $x'_{j} - \lfloor x'_{j} \rfloor$ and $\hat{x}_{j} = \lfloor x'_{j} \rfloor$ o.w.
Note that in the rounding process, if $G_\delta > 1$ ($G_\delta \in (0,1]$), the packing (cover) constraint is prone to be violated and the cover (packing) constraint is easier to be satisfied.
Hence, depending on which constraint is more preferred to be feasible, we consider two cases with respect to $G_\delta$.

1) $0<G_\delta\leq1$: We have the following approximation result (see proof in Appendix~\ref{appdx:lem_ILP1}):

\begin{lem}[Rounding] \label{lem_ILP}
Let $W_{a} \!\!\triangleq\!\! \min \{ a_{i}/[\A]_{ij}\!: [\A]_{ij} \!\!>\!\! 0 \}$ and $W_{b} \!\!\triangleq\!\! \min \{ b_{i}/[\B]_{ij}\!: [\B]_{ij} \!\!>\!\! 0 \}$.
Let $\delta \in (0,1]$ be a given constant and define $G_\delta$ as:
\begin{align*}
G_\delta\triangleq1+ \frac{3\ln(3r/\delta)}{2W_b} - \sqrt{ \bigg(\frac{3\ln(3r/\delta)}{2W_b} \bigg)^{2} + \frac{3\ln(3r/\delta)}{W_b} }.
\end{align*}
Then, with probability greater than $1-\delta$, $\hat{\x}$ achieves a cost at most $\frac{3G_\delta}{\delta}$ times the cost of $\bar{\x}$.
Meanwhile, $\hat{\x}$ satisfies
% and $\hat{\x}$ satisfies 
$\mathrm{Pr}\Big\{ (\A\hat{\x})_{i} \leq a_{i} (1- (\frac{2}{G_\delta W_a}\ln(\frac{3m}{\delta}))^{\frac{1}{2}})G_\delta, \exists i \Big\} \leq \frac{\delta}{3m}$.
\end{lem}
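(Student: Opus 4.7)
The plan is to control three bad events and combine them by a union bound: (E1) the cost exceeds $(3G_\delta/\delta)\c^\top\bar\x$; (E2) some packing constraint is violated, i.e.\ $(\B\hat\x)_i>b_i$ for some $i\in[r]$; and (E3) some cover row is too slack by the factor stated in the lemma. I will show each of (E1)--(E3) has probability at most $\delta/3$. The whole argument starts from the elementary observation $\mathbb{E}[\hat x_j]=x'_j=G_\delta\bar x_j$ for every coordinate, hence by linearity $\mathbb{E}[\c^\top\hat\x]=G_\delta\c^\top\bar\x$, $\mathbb{E}[(\B\hat\x)_i]=G_\delta(\B\bar\x)_i\le G_\delta b_i$, and $\mathbb{E}[(\A\hat\x)_i]=G_\delta(\A\bar\x)_i\ge G_\delta a_i$, so every target quantity sits on the favorable side of the desired inequality in expectation.

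For (E1), Markov's inequality directly yields $\Pr[\c^\top\hat\x>(3G_\delta/\delta)\c^\top\bar\x]\le G_\delta/(3G_\delta/\delta)=\delta/3$, which establishes the stated cost bound. For (E3), I will view $(\A\hat\x)_i$ as a sum of independent random variables $[\A]_{ij}\hat x_j$ whose ranges lie in $[0,\max_j[\A]_{ij}]$ with $\max_j[\A]_{ij}\le a_i/W_a$ by the definition of $W_a$, and apply the standard multiplicative lower-tail Chernoff inequality with deviation $\epsilon=\sqrt{2\ln(3m/\delta)/(G_\delta W_a)}$. A short calculation gives per-row failure $\exp(-G_\delta W_a\epsilon^2/2)=\delta/(3m)$, which is exactly the second claim of the lemma, and a union bound over the $m$ cover rows then yields $\Pr[\text{(E3)}]\le\delta/3$.

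The main obstacle is (E2), because $G_\delta$ must be chosen so that the multiplicative Chernoff tail is simultaneously strong enough on all $r$ packing rows. Writing $\hat x_j=\lfloor x'_j\rfloor+Y_j$ with independent Bernoulli $Y_j$ of parameter $x'_j-\lfloor x'_j\rfloor$, the only randomness in $(\B\hat\x)_i$ lives in $R_i=\sum_j[\B]_{ij}Y_j$, and the event $(\B\hat\x)_i>b_i$ forces $R_i$ to exceed its mean by at least $(1-G_\delta)b_i$. Using the crude bounds $\mathbb{E}[R_i]\le G_\delta b_i$ and $\max_j[\B]_{ij}\le b_i/W_b$, the multiplicative upper-tail Chernoff bound gives $\Pr[(\B\hat\x)_i>b_i]\le\exp\!\bigl(-(1-G_\delta)^2 W_b/(3G_\delta)\bigr)$. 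Requiring the right-hand side to be at most $\delta/(3r)$ reduces to the quadratic $G_\delta^2-(2+K)G_\delta+1=0$ with $K=3\ln(3r/\delta)/W_b$, whose smaller root is exactly the $G_\delta$ stated in the lemma. A union bound over the $r$ packing rows then yields $\Pr[\text{(E2)}]\le\delta/3$, and combining (E1)--(E3) by union bound finishes the proof: with probability $>1-\delta$ the rounded solution is packing-feasible and meets the cost guarantee, while (E3) separately certifies the cover-slack bound.
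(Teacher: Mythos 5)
Your proposal is correct and follows essentially the same route as the paper's proof in Appendix~A: the same three bad events, Markov's inequality for the cost, multiplicative Chernoff bounds for the cover and packing rows (with the identical choice $\epsilon=\sqrt{2\ln(3m/\delta)/(G_\delta W_a)}$ for the cover side and the identical quadratic $G_\delta^2-(2+K)G_\delta+1=0$, $K=3\ln(3r/\delta)/W_b$, for the packing side), and a final union bound giving $\delta/3+m\cdot\frac{\delta}{3m}+r\cdot\frac{\delta}{3r}=\delta$. The only cosmetic difference is that you isolate the Bernoulli fluctuation $R_i$ after splitting off the deterministic floor part, whereas the paper treats each $\hat x_j$ as a sum of $\lfloor x'_j\rfloor$ constant variables plus one Bernoulli so that the whole normalized row is a sum of independent $[0,1]$ variables; both yield the same tail bound.
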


\begin{rem}[Discussions on Feasibility]
{\em
An insightful remark of Lemma~\ref{lem_ILP} is in order.
Note that, theoretically, the expression $1- (\frac{2}{G_\delta W_a}\ln(\frac{3m}{\delta}))^{\frac{1}{2}}$ in Lemma~\ref{lem_ILP} could become negative.
In this case, the last probabilistic inequality in Lemma~\ref{lem_ILP} trivially holds and is not meaningful in characterizing the feasibility, even though the inequality remains valid.
In order for the probabilistic statement to be meaningful in characterizing the feasibility of the integer linear program, we solve for $\delta$ by enforcing $1- (\frac{2}{G_\delta W_a}\ln(\frac{3m}{\delta}))^{\frac{1}{2}} > 0$, which yields $\delta\geq3m /e^\frac{G_\delta W_a}{2}$.
On the other hand, we prefer $\delta$ to be small since it bounds the feasibility violation probability and approximation ratio achievability in Lemma~\ref{lem_ILP}.
Hence, the smaller the value of $3m /e^\frac{G_\delta W_a}{2}$, the less restrictive the condition $\delta\geq3m /e^\frac{G_\delta W_a}{2}$ is.

\begin{figure}[t]
    \centering
    \includegraphics[width=0.9\columnwidth]{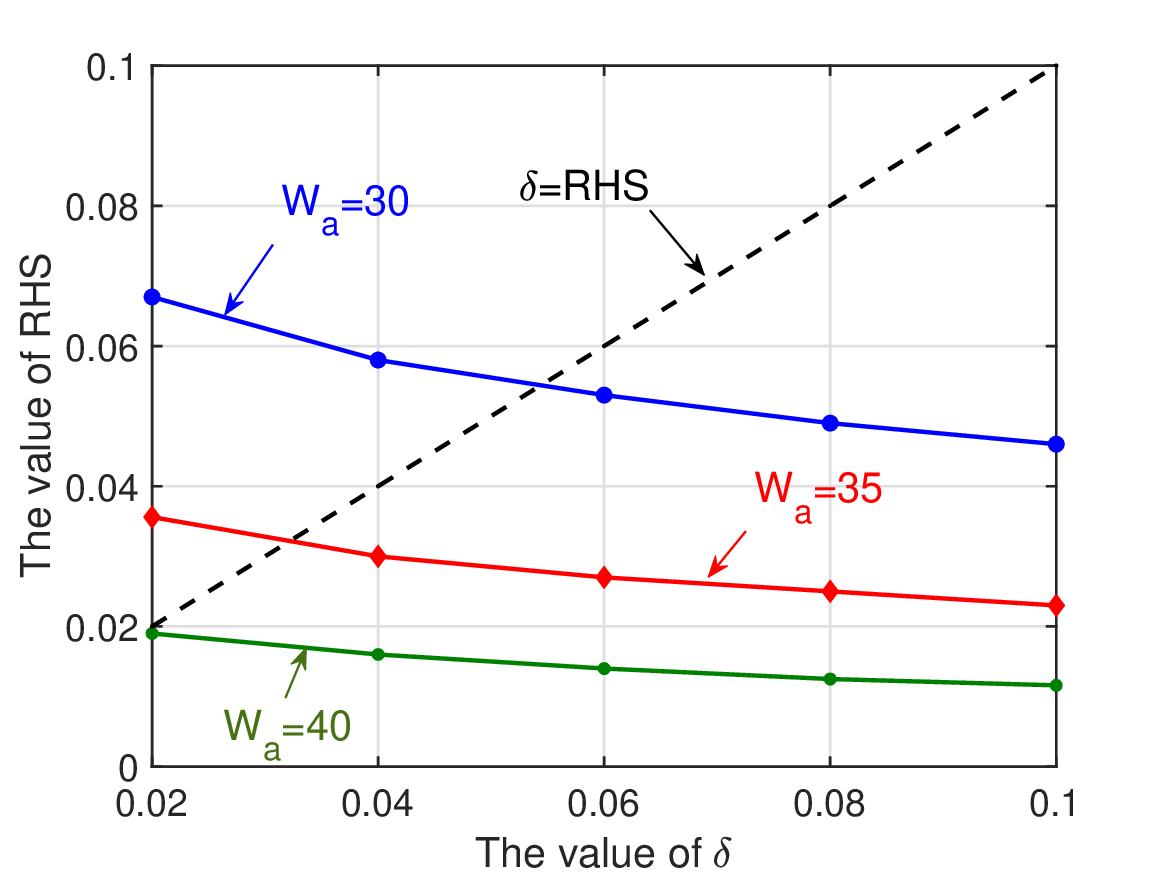}
    \caption{The feasibility study.}
    \label{fig:Gdelta}
\end{figure}

To gain a deeper understanding on how restrictive the condition $\delta\geq 3m /e^\frac{G_\delta W_a}{2}$ is, we conduct a case study and the results are illustrated in Fig.~\ref{fig:Gdelta}.
Here, we let RHS $\triangleq3m /e^\frac{G_\delta W_a}{2}$ for convenience.
We vary $\delta$ from $0.02$ to $0.1$.
Clearly, the left-hand-side (LHS) of the condition is the $45^{\circ}$ straight line.
In order for the condition to hold, the curve of RHS should fall under this straight line.
Based on typical computing cluster parameters, we set $W_b$ to $15$, and set $r\triangleq RH+1$ to $401$ ($R=4, H=100$).
We can see from Fig.~\ref{fig:Gdelta} that as $W_a$ increases, the curve of RHS crosses the dashed line of LHS at a smaller $\delta$-value.
This means that, the larger the value of $W_a$, the easier for $1- (\frac{2}{G_\delta W_a}\ln(\frac{3m}{\delta}))^{\frac{1}{2}}$ to become positive.
Hence, the probabilistic feasibility characterization in Lemma~\ref{lem_ILP} is useful for typical system parameters in practice.
}
\end{rem}

2) $G_\delta>1$: We have the following approximation result (see proof in Appendix~\ref{appdx:lem_ILP2}):

\begin{lem}[Rounding] \label{lem_ILP2}
Let $W_{a}\triangleq \min \{ a_{i}/[\A]_{ij}: [\A]_{ij}> 0 \}$ and $W_{b}\triangleq \min \{ b_{i}/[\B]_{ij}: [\B]_{ij}> 0 \}$.
Let $\delta \in (0,1]$ be a given constant and define $G_\delta$ as:
\begin{align*}
G_\delta \triangleq 1 + \frac{\ln(3m/\delta)}{W_a} + \sqrt{ \bigg(\frac{\ln(3m/\delta)}{W_a} \bigg)^{2} + \frac{2\ln(3m/\delta)}{W_a} }.
\end{align*}
Then, with probability greater than $1-\delta$, $\hat{\x}$ achieves a cost at most $\frac{3G_\delta}{\delta}$ times the cost of $\bar{\x}$.
Meanwhile, $\hat{\x}$ satisfies  
% and $\hat{\x}$ satisfies 
$\mathrm{Pr} \Big\{ (\B\hat{\x})_{i} > b_{i} (1+ (\frac{3}{G_\delta W_b}\ln(\frac{3r}{\delta}))^{\frac{1}{2}})G_\delta, \exists i \Big\} \leq \frac{\delta}{3r}$.
\end{lem}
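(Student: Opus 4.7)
\textbf{Proof proposal for Lemma~\ref{lem_ILP2}.}
The plan is to decompose the total failure event into three sub-events---a cost overrun, a cover-constraint violation, and an excessive packing-constraint violation---bound each by $\delta/3$, and then combine them via the union bound.

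First, I would observe that each rounded variable can be written as $\hat{x}_j = \lfloor x'_j \rfloor + B_j$, where the $B_j$ are independent Bernoulli random variables with means $x'_j - \lfloor x'_j \rfloor$. This yields $\mathbb{E}[\hat{x}_j] = x'_j = G_\delta \bar{x}_j$, and by linearity, $\mathbb{E}[\c^{\top}\hat{\x}] = G_\delta\, \c^{\top}\bar{\x}$, $\mathbb{E}[(\A\hat{\x})_i] = G_\delta (\A\bar{\x})_i \geq G_\delta a_i$, and $\mathbb{E}[(\B\hat{\x})_i] = G_\delta (\B\bar{\x})_i \leq G_\delta b_i$. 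The cost bound is then immediate via Markov's inequality applied to the nonnegative random variable $\c^{\top}\hat{\x}$, which gives $\Pr[\c^{\top}\hat{\x} > (3G_\delta/\delta)\, \c^{\top}\bar{\x}] \leq \delta/3$.

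Next, for each cover row $i$, I would normalize each contribution $[\A]_{ij}\hat{x}_j$ by $d_i := \max_j [\A]_{ij}$ to reduce to a bounded-support Chernoff setting. Since $\mu_i := \mathbb{E}[(\A\hat{\x})_i] \geq G_\delta a_i$ and, by definition of $W_a$, $a_i \geq W_a\, [\A]_{ij} \geq W_a\, d_i$, the normalized mean satisfies $\mu_i/d_i \geq G_\delta W_a$. Writing the event $(\A\hat{\x})_i < a_i$ as a factor-$(1-\beta)$ deviation below $\mu_i$ with $\beta = 1 - 1/G_\delta$, the lower-tail Chernoff bound yields $\exp(-\beta^2 G_\delta W_a / 2)$. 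Forcing this to be at most $\delta/(3m)$ reduces to the quadratic inequality $(G_\delta - 1)^2 \geq 2 G_\delta \ln(3m/\delta)/W_a$, i.e., $G_\delta^2 - 2(1+\alpha)G_\delta + 1 \geq 0$ with $\alpha = \ln(3m/\delta)/W_a$, whose larger root is precisely the closed form $1 + \alpha + \sqrt{\alpha^2 + 2\alpha}$ appearing in the lemma. A union bound over the $m$ cover rows then gives total cover-violation probability at most $\delta/3$.

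Third, for each packing row $i$, the same normalization combined with the upper-tail Chernoff bound gives $\Pr[(\B\hat{\x})_i > (1+\eta) G_\delta b_i] \leq \exp(-\eta^2 G_\delta W_b / 3)$ for $\eta \in (0,1]$. Setting the right-hand side equal to $\delta/(3r)$ and solving for $\eta$ yields exactly the deviation $\eta = \sqrt{3\ln(3r/\delta)/(G_\delta W_b)}$ stated in the lemma, which is the desired per-row bound. Combining the three sub-events via the union bound then gives total failure probability at most $\delta$. The main obstacle I anticipate is the lower-tail Chernoff step: because $\hat{x}_j$ is integer-valued rather than $\{0,1\}$-valued, I must be careful with the normalization through $W_a$, and then reverse-engineer the scaling $G_\delta$ so that the quadratic inequality becomes tight at the specified root---matching that closed-form root against the expression in the lemma statement is the crux of the argument.
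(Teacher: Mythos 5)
Your proposal is correct and follows essentially the same route as the paper's proof: Markov's inequality for the cost overrun, a lower-tail Chernoff bound on the cover rows with $\epsilon = 1-1/G_\delta$ whose quadratic in $G_\delta$ yields the stated closed form, an upper-tail Chernoff bound on the packing rows solved for the deviation $\eta$, and a union bound summing to $\delta$. The only cosmetic difference is your normalization by $\max_j[\A]_{ij}$ versus the paper's normalization by $W_a/a_i$; both reduce the row sums to sums of independent $[0,1]$-valued variables with effective mean at least $G_\delta W_a$ (resp.\ $G_\delta W_b$), so the Chernoff exponents coincide.
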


Several important remarks for Lemmas~\ref{lem_ILP} and \ref{lem_ILP2} are summarized as follows:
\begin{list}{\labelitemi}{\leftmargin=1em \itemindent=-0.0em \itemsep=.2em}

\item[i)] Note that  Alg.~4 is a randomized algorithm.
Therefore, its performance is also characterized probabilistically, and $\delta$ is used for such probabilistic characterization.
Here, it means that with probability $1 - \delta$, one achieves an approximation ratio at most $\frac{3G_\delta}{\delta}$ and a probabilistic feasibility guarantee as stated in Lemma~\ref{lem_ILP} when $0<G_\delta\leq 1$ and Lemma~\ref{lem_ILP2} when $G_\delta>1$.
In other words, the statements mean that the probability of getting a better approximation ratio is smaller under randomized rounding (i.e., better approximation ratio $\Rightarrow$ smaller $\frac{3G_\delta}{\delta}$ $\Rightarrow$ larger $\delta$ $\Rightarrow$ smaller probability $1-\delta$).
%(the smaller the value of $\frac{3G}{\delta}$, the better the approximation ratio). 
That is, the trade-off between the approximation ratio value and its achieving probability is quantified by $\delta$.
% The probability parameter $\delta$ controls the trade-off between approximation ratio and efficiency in finding a feasible rounding solution:
A larger $\delta$ implies a smaller approximation ratio, but the probability of obtaining a feasible solution of this ratio is also smaller (i.e., more rounds of rounding needed).
%This implies taking more rounds of rounding.
Interestingly, for $\delta=1$, Lemmas~\ref{lem_ILP} and \ref{lem_ILP2} indicate that there is still {\em non-zero} probability to achieve an approximation ratio not exceeding $3G_\delta$.

\item[ii)] Note that if we pick $G_\delta\in(0,1]$,  the approximation ratio $\frac{3G_\delta}{\delta}$ decreases (the smaller the approximation ratio, the better) as $\delta$ increases based on Eqn.~(\ref{eqn_L_def_1}) in Appendix~\ref{appdx:lem_ILP1}.
However, the growth rate of $G_\delta$ is slower compared to that of $\delta$ due to the $\log$ operator.
On the other hand, if we pick $G_\delta>1$, $\frac{3G_\delta}{\delta}$ decreases as $\delta$ increases according to Eqn.~(\ref{eqn_L_def}) in Appendix~\ref{appdx:lem_ILP2}.
Therefore, the approximation ratio is ultimately controlled by parameter $\delta$.
Also, the theoretical approximation ratio $\frac{3G_\delta}{\delta}$ is conservative. 
%(due to the uses of Markov inequality and union bound in our proof).
Our numerical studies show that the approximation ratio performance in reality is much smaller than $\frac{3G_\delta}{\delta}$.

\item[iii)] The probabilistic guarantee of the cover constraint ($\A\x \geq \a$) when $0<G_\delta\leq 1$ and packing constraint ($\B\x \leq \b$) when $G_\delta>1$, is unavoidable and due to the fundamental hardness of satisfying both cover and packing constraints, which are of {\em conflicting} nature: Any strategy trying to better satisfy the packing constraints (multiplying a $G_\delta$-factor with $G_\delta\in (0,1]$) may increase the probability of violating the cover constraints, and the probability of violating the packing constraints may be increased otherwise.
However, the probabilistic bound here is for worst case and may be pessimistic.

\item[iv)] The results in Lemmas~\ref{lem_ILP} and \ref{lem_ILP2} are in fact applicable for general ILP with mixed cover/packing constraints.
Hence, the results and their insights in Lemmas~\ref{lem_ILP} and \ref{lem_ILP2}  could be of independent theoretical interest.
\end{list}

By specializing Lemma~\ref{lem_ILP} and Lemma~\ref{lem_ILP2} with parameters in Problem~(\ref{eqn_obj_given_t3}), we have the following approximation results for Algorithm 4.
The first result corresponds to the case where the feasibility of the resource constraint (packing) is more favored, i.e., $0<G_\delta\leq1$:

\begin{thm}[Approximation Performance of Alg.~4 When Resource Constraint Feasibility is Favored]\label{thm_Alg4_1}
Let $W_{1}\!\triangleq\! V_{i}[t] \big( \tau_{i} + \frac{2g_{i}\gamma_i}{b_{i}^{(e)}F_i}\big)$, $W_{2}\!\triangleq\!\min\{ F_i, \hat{C}_{h}^{r}[t]/\alpha_{i}^{r}, \hat{C}_{h}^{r}[t]/\beta_{i}^{r}, \forall r,h \}$, and  
% $W_3\triangleq \ln(3(RH+1)/\delta)$.
$\delta \in (0,1]$. Define $G_\delta$ as:
\begin{align} \label{eqn_Thm3_G}
&\nonumber G_\delta\!\triangleq\! 1 \!+\! \frac{3\ln(3(RH\!+\!1)/\delta)}{2W_2}\\
 &\!-\! \sqrt{ \bigg(\frac{3\ln(3(RH+1)/\delta)}{2W_2} \bigg)^{2} \!+\! \frac{3\ln(3(RH\!+\!1)/\delta)}{W_2} }.
\end{align}
Then, with probability greater than $1\!-\!\delta$, Algorithm~4 obtains a schedule $\{ w_{ih}[t], s_{ih}[t], \forall t,h \}$ that has an approximation ratio at most $\frac{3G_\delta}{\delta}$ with $\mathrm{Pr}\{ \text{LHS(\ref{eqn_cover1})} \!\leq\! W_1(1- (\frac{2}{G_\delta W_1}\ln(\frac{3}{\delta}))^{\frac{1}{2}})G_\delta, \exists i\} \leq \frac{\delta}{3}$.
\end{thm}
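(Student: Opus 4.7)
The plan is to derive Theorem~\ref{thm_Alg4_1} as a direct specialization of Lemma~\ref{lem_ILP} to the integer program in Problem~(\ref{eqn_obj_given_t3}), since Steps~8--11 of Algorithm~4 are precisely the randomized rounding scheme analyzed in that lemma. The first step is to cast Problem~(\ref{eqn_obj_given_t3}) in the canonical form $\min\{\c^{\top}\x:\A\x\geq\a,\ \B\x\leq\b,\ \x\in\mathbb{Z}_+^n\}$ by stacking all $w_{ih}[t]$ and $s_{ih}[t]$ (for the fixed indices $i,t$) into one decision vector $\x$. The cover system $\A\x\geq\a$ then consists of the single workload constraint (\ref{eqn_cover1}) with right-hand side $W_1 = V_i[t](\tau_i + 2g_i\gamma_i/(b_i^{(e)}F_i))$, so $m=1$. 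The packing system $\B\x\leq\b$ consists of the $RH$ resource capacities (\ref{eqn_packing}), one per $(h,r)$ pair, together with the global batch-size cap (\ref{eqn_packing2}), giving $r = RH+1$ packing rows.

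Next, I would match the constants $W_a$ and $W_b$ in Lemma~\ref{lem_ILP} to the coefficients of this instance. The single cover row has all-unit coefficients on the $w$-variables and right-hand side $W_1$, so $W_a = W_1$. On the packing side, the ratios $b_i/[\B]_{ij}$ come from three sources: from (\ref{eqn_packing2}), the ratio is $F_i/1 = F_i$; from (\ref{eqn_packing}), they are $\hat{C}_h^r[t]/\alpha_i^r$ and $\hat{C}_h^r[t]/\beta_i^r$ across all $(h,r)$. Taking the minimum yields $W_b = \min\{F_i,\ \hat{C}_h^r[t]/\alpha_i^r,\ \hat{C}_h^r[t]/\beta_i^r,\ \forall r,h\} = W_2$. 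Substituting $r = RH+1$ and $W_b = W_2$ into the definition of $G_\delta$ in Lemma~\ref{lem_ILP} reproduces exactly expression (\ref{eqn_Thm3_G}) of the theorem, confirming that the two $G_\delta$'s are the same constant.

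Invoking Lemma~\ref{lem_ILP} then immediately yields both conclusions: with probability at least $1-\delta$, the rounded schedule achieves cost at most $(3G_\delta/\delta)$ times that of the LP relaxation (hence at most $(3G_\delta/\delta)$ times the integer optimum), and the cover-violation inequality $\mathrm{Pr}\{(\A\hat{\x})_i \leq a_i(1-(\frac{2}{G_\delta W_a}\ln(\frac{3m}{\delta}))^{1/2})G_\delta,\ \exists i\} \leq \frac{\delta}{3m}$ specializes, with $m=1$, $a_1 = W_a = W_1$, to the bound stated in the theorem. A small additional observation is needed because Algorithm~4 returns the cheaper of its internal-communication and external-communication candidates (Step~12): since we only claim the external-branch guarantee here, the final returned cost is no larger than the external branch's cost, so the ratio is preserved.

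The main obstacle I expect is bookkeeping rather than new probabilistic machinery: one must carefully verify the constraint count (the ``$+1$'' inside $\ln(3(RH+1)/\delta)$ comes from not forgetting constraint (\ref{eqn_packing2})) and confirm that the $\min$ defining $W_b$ is taken over \emph{all} nonzero entries of $\B$, so that the heterogeneous $\alpha_i^r$ and $\beta_i^r$ coefficients are all accounted for in $W_2$. Once this mapping is done, the statement is a clean corollary of Lemma~\ref{lem_ILP}, and no independent concentration argument is required beyond what the lemma already supplies.
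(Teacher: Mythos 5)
Your proposal is correct and matches the paper's approach exactly: the paper itself states that Theorem~\ref{thm_Alg4_1} follows directly from Lemma~\ref{lem_ILP} by specializing to Problem~(\ref{eqn_obj_given_t3}), and your mapping ($m=1$ cover row giving $W_a=W_1$, the $RH+1$ packing rows giving $W_b=W_2$ and the $\ln(3(RH+1)/\delta)$ factor) is precisely the bookkeeping the authors omit. Your additional remark about Step~12 only improving the returned cost is a sensible completeness check that the paper does not bother to spell out.
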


The next result corresponds to the case where the feasibility of the workload constraint (cover) is more favored, i.e., $G_\delta > 1$:
\begin{thm}[Approximation Performance of Alg.~4  When Workload Constraint Feasibility is Favored]\label{thm_Alg4_2}
Let $W_{1}\triangleq V_{i}[t] \big( \tau_{i} + \frac{2g_{i}\gamma_i}{b_{i}^{(e)}F_i}\big)$, $W_{2} \triangleq \min\{ F_i, \hat{C}_{h}^{r}[t]/\alpha_{i}^{r}, \hat{C}_{h}^{r}[t]/\beta_{i}^{r}, \forall r,h \}$,  
% $W_3\triangleq \ln(3(RH+1)/\delta)$.
and $\delta \in (0,1]$. Define $G_\delta$ as:
\begin{align} \label{eqn_Thm4_G}
&G_\delta \triangleq 1 + \frac{\ln(3/\delta)}{W_1} + \sqrt{ \bigg(\frac{\ln(3/\delta)}{W_1} \bigg)^{2} + \frac{2\ln(3/\delta)}{W_1} }.
\end{align}
Then, with probability greater than $1-\delta$, Algorithm~4 obtains a schedule $\{ w_{ih}[t], s_{ih}[t], \forall t,h \}$ that has an approximation ratio at most $\frac{3G_\delta}{\delta}$ with $\mathrm{Pr}\{ \text{LHS(\ref{eqn_packing})} > \hat{C}_h^r[t] G_\delta(1+(\frac{3}{G_\delta W_2}\ln(\frac{3(HR+1)}{\delta}))^{\frac{1}{2}})\} \leq \frac{\delta}{3(HR+1)}$.
\end{thm}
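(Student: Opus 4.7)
The plan is to specialize the general randomized-rounding result of Lemma~\ref{lem_ILP2} to the specific mixed packing/covering integer program in Problem~(\ref{eqn_obj_given_t3}), by carefully matching the problem's parameters to those of the lemma.

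First, I would cast Problem~(\ref{eqn_obj_given_t3}) into the generic form $\min\{\c^{\top}\x : \A\x \geq \a,\ \B\x \leq \b,\ \x \in \mathbb{Z}_{+}^{n}\}$ of Lemma~\ref{lem_ILP2}. The decision vector $\x$ collects $w_{ih}[t]$ and $s_{ih}[t]$ over all $h\in\mathcal{H}$. The single workload constraint (\ref{eqn_cover1}) plays the role of $\A\x \geq \a$, so the number of covering rows is $m=1$; this matches the $\ln(3/\delta)$ (rather than $\ln(3m/\delta)$ with $m>1$) appearing inside the formula (\ref{eqn_Thm4_G}) for $G_\delta$. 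The $RH$ resource-capacity rows (\ref{eqn_packing}) together with the single global-batch row (\ref{eqn_packing2}) collectively form $\B\x \leq \b$, giving $r = RH + 1$ packing rows, which matches the $3(HR+1)$ denominator in the stated packing-violation bound.

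Second, I would compute the coefficient ratios $W_a$ and $W_b$ defined in Lemma~\ref{lem_ILP2}. Since the sole covering row has right-hand side $V_i[t](\tau_i + 2g_i\gamma_i/(b_i^{(e)}F_i))$, unit coefficients on the $w$-variables, and zero coefficients on the $s$-variables (which therefore do not enter $W_a$), we get $W_a = \min\{a_i/[\A]_{ij} : [\A]_{ij}>0\} = W_1$. For the packing side, the binding coefficients are $\alpha_i^r$ and $\beta_i^r$ in (\ref{eqn_packing}) with right-hand side $\hat{C}_h^r[t]$, and $1$ in (\ref{eqn_packing2}) with right-hand side $F_i$; hence $W_b = \min\{F_i,\ \hat{C}_h^r[t]/\alpha_i^r,\ \hat{C}_h^r[t]/\beta_i^r,\ \forall r,h\} = W_2$. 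Substituting $(m,r,W_a,W_b) = (1,\ RH+1,\ W_1,\ W_2)$ into the $G_\delta$ formula of Lemma~\ref{lem_ILP2} reproduces (\ref{eqn_Thm4_G}) verbatim.

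Third, I would invoke Lemma~\ref{lem_ILP2} directly. It yields, with probability at least $1-\delta$, a rounded vector produced in Steps~8--10 of Algorithm~4 whose cost is at most $3G_\delta/\delta$ times the LP-relaxation optimum and hence at most $3G_\delta/\delta$ times the integer optimum of Problem~(\ref{eqn_obj_given_t3}), establishing the approximation-ratio claim. The lemma's per-packing-row violation bound, after the substitutions $b_i \mapsto \hat{C}_h^r[t]$ (or $F_i$) and $r = HR+1$, immediately becomes the stated bound $\mathrm{Pr}\{\text{LHS}(\ref{eqn_packing}) > \hat{C}_h^r[t]\, G_\delta(1+(\tfrac{3}{G_\delta W_2}\ln\tfrac{3(HR+1)}{\delta}))^{1/2}\} \leq \delta/(3(HR+1))$.

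The main obstacle will be the careful accounting: verifying that (\ref{eqn_packing2}), whose coefficient vector differs structurally from the $(\alpha_i^r,\beta_i^r)$-weighted resource rows, is correctly absorbed into the $W_2$ minimum through the $F_i$ term, and confirming $r = RH+1$ rather than $RH$. A secondary concern is the retry loop in Step~11 of Algorithm~4: the single-attempt bound from Lemma~\ref{lem_ILP2} is inherited because rejection only discards infeasible outcomes, but this should be spelled out to ensure the $1-\delta$ probability survives the filtering.
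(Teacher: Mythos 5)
Your proposal is correct and matches the paper's intended argument exactly: the paper states that Theorem~\ref{thm_Alg4_2} ``follows directly from Lemma~\ref{lem_ILP2}'' and omits the details, and your specialization $(m,r,W_a,W_b)=(1,\,RH+1,\,W_1,\,W_2)$ is precisely the right accounting, correctly absorbing the batch-size row (\ref{eqn_packing2}) into $W_2$ via the $F_i$ term. Your added remarks on the $r=RH+1$ count and the Step~11 retry loop are sound and, if anything, make the omitted proof more complete than the paper's one-line justification.
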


Note that Eqn.~(\ref{eqn_packing2}) is guaranteed in practice since the number of samples is typically far more than the number of workers.
The competitive ratio of our online algorithm is the worst-case upper bound of the ratio of the overall utility of admitted jobs devided by the offline optimal solution of Problem~DMLRS to the total utility
of admitted jobs achieved by Algorithm~1 in the overall time horizon.
Theorems~\ref{thm_Alg4_1} and \ref{thm_Alg4_2} follow directly from Lemmas~\ref{lem_ILP} and \ref{lem_ILP2}, respectively, and we omit the proof here for brevity.
Based on these results, we can establish the overall competitive ratio for Algorithm~1 as follows.
\begin{thm}[Competitive Ratio of Alg~1 when $0\!<\!G_\delta\!\leq\! 1$]\label{thm_Competitive_1} 
Let $\delta$, $G_\delta$ and $W_1$ be as defined in Theorem~\ref{thm_Alg4_1}.
Let $U^{r}$ and $L$ be as defined in (\ref{eqn_Ur}) and (\ref{eqn_L}), respectively.
Then, with probability greater than $(1-(\delta/3)^S)^{TK_iE_i}$, PD-ORS in Algorithm~1 returns a feasible solution that is $\frac{6G_\delta}{\delta} \max_{r\in\mathcal{R}} (1,\ln\frac{U^{r}}{L})$--competitive.
% with $\mathrm{Pr}\{ \text{LHS(\ref{eqn_cover1})} \!\leq\! W_1(1- (\frac{2}{G_\delta W_1}\ln(\frac{3}{\delta}))^{\frac{1}{2}})G_\delta, \exists i\} \leq \frac{\delta}{3}$.
\end{thm}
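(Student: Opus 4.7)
The plan is to use the standard online primal-dual competitive analysis framework combined with the approximation guarantees established in Theorem~\ref{thm_Alg4_1}. The core idea is to bound the dual objective accumulation (which upper-bounds the offline optimum via weak LP duality) by a constant factor times the primal utility accumulated by Algorithm~1, with the approximation error from randomized rounding folded into a multiplicative factor.

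First, I would consider the LP relaxation of Problem~R-DMLRS and its dual Problem~D-R-DMLRS, and verify that the variables $\lambda_i$ and $p_h^r[t]$ maintained by Algorithm~1 (with $\lambda_i$ set to the payoff achieved by schedule $\pi_i^*$) form a feasible dual solution to D-R-DMLRS with respect to the \emph{approximately-optimal} schedule. By weak duality, the optimum of the original offline Problem R-DMLRS is at most the dual objective. Hence, to establish a competitive ratio of $\alpha$, it suffices to show that at every arrival, the increment in the dual objective $\Delta D \triangleq \Delta(\sum_i \lambda_i) + \sum_{t,h,r} C_h^r \Delta p_h^r[t]$ is bounded by $\alpha$ times the increment in the primal objective $\Delta P = x_i u_i(\tilde t_{\pi_i^*} - a_i)$ when job $i$ is admitted, and that no change occurs if the job is rejected (in which case the KKT conditions guarantee $\lambda_i = 0$ already satisfies the dual constraint).

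Next, I would quantify $\Delta D$ in terms of $\Delta P$. The contribution from $\Delta \lambda_i$ equals $u_i(\tilde t_{\pi_i^*} - a_i)$ minus the total resource cost paid on the prices $p_h^{r}[t]$, and the contribution from the price updates $\Delta p_h^{r}[t]$ uses the key convexity-of-exponential inequality: for the price function $Q_h^{r}(\rho) = L(U^{r}/L)^{\rho/C_h^{r}}$, any incremental consumption $\Delta\rho$ satisfies $C_h^{r}[Q_h^{r}(\rho+\Delta\rho)-Q_h^{r}(\rho)] \leq \Delta\rho \cdot Q_h^{r}(\rho+\Delta\rho)\cdot \ln(U^{r}/L)$. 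Summing this bound over all $(h,r,t)$ touched by the schedule $\pi_i^{*}$ cancels the resource cost term in $\Delta\lambda_i$ up to a factor of $\max_{r}\ln(U^{r}/L)$, yielding $\Delta D \leq (1 + \max_{r}\ln(U^{r}/L))\cdot \Delta P \leq 2\max_{r}(1,\ln(U^{r}/L))\cdot \Delta P$. Taking the definitions of $U^{r}$ in Eq.~(\ref{eqn_Ur}) and $L$ in Eq.~(\ref{eqn_L}) guarantees that $p_h^{r}[0] = L$ is small enough so that the initial dual value is dominated by the eventual primal value, which closes the competitive-ratio chain.

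Third, I would incorporate the approximation error from Algorithm~4. Since the schedule $\pi_i^{*}$ obtained via the randomized rounding in Algorithm~4 has cost at most $\frac{3G_\delta}{\delta}$ times the fractional optimum with probability $\geq 1-\delta$ by Theorem~\ref{thm_Alg4_1}, the payoff $\lambda_i$ computed in Algorithm~2 is within the same factor of the ideal payoff, and the dual constraint~(\ref{eqn_dualconstr}) remains satisfied if we inflate $\lambda_i$ by this factor. Multiplying the $2\max_{r}(1,\ln(U^{r}/L))$ factor from the primal-dual step by this $\frac{3G_\delta}{\delta}$ approximation factor yields the claimed competitive ratio $\frac{6G_\delta}{\delta}\max_{r\in\mathcal{R}}(1,\ln(U^{r}/L))$.

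The main obstacle, and the source of the peculiar-looking probability $(1-(\delta/3)^S)^{TK_i E_i}$, is the probabilistic feasibility aggregation. Each invocation of the external-communication branch of Algorithm~4 performs up to $S$ independent rounding attempts, and by Theorem~\ref{thm_Alg4_1} each attempt yields an infeasible solution with probability at most $\delta/3$, so the chance that all $S$ attempts fail at a single call is at most $(\delta/3)^{S}$. Across the dynamic program in Algorithms~2 and 3 the call to Algorithm~4 is made for each pair of completion time $\tilde t_i \in [a_i,T]$ and residual workload $v \in [0,E_i K_i]$, i.e., at most $T K_i E_i$ times for each admitted job. Taking a product bound over these independent invocations gives the overall feasibility probability of $(1-(\delta/3)^{S})^{TK_i E_i}$ stated in the theorem, conditioned on which the deterministic competitive-ratio chain above holds.
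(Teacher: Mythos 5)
Your proposal follows essentially the same route as the paper's proof in Appendix~\ref{appdx:thm_Competitive}: weak duality, a per-job bound relating the dual increment to the primal increment via the exponential price function (the paper's ``allocation-cost relationship'' with $\epsilon=\max_r(1,\ln\frac{U^r}{L})$), the $\frac{3G_\delta}{\delta}$ rounding factor from Theorem~\ref{thm_Alg4_1} folded in multiplicatively, and the feasibility probability obtained by taking $(\delta/3)^S$ per Algorithm-4 call and a product over the $O(TK_iE_i)$ dynamic-programming states. All the essential ingredients are present.

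One bookkeeping point does not close as written. You attribute the factor of $2$ to the slack in $1+\max_r\ln\frac{U^r}{L}\leq 2\max_r(1,\ln\frac{U^r}{L})$, and then separately wave at the initial dual value $D_0$ being ``dominated.'' In the paper the per-step bound is tight at $\epsilon=\max_r(1,\ln\frac{U^r}{L})$ (indeed $\Delta D\leq u_i-c+c\ln\frac{U^r}{L}\leq u_i\max(1,\ln\frac{U^r}{L})$ whenever the admitted job has cost $c\leq u_i$), and the factor of $2$ comes entirely from the quantitative estimate $D_0=\sum_{t,h,r}LC_h^r\leq\frac{1}{2}\mathrm{OPT}$, which is exactly what the scaling factor $\mu$ in the definition of $L$ in (\ref{eqn_L}) is engineered to deliver. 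If you keep your looser per-step constant \emph{and} still need $D_0\leq\frac{1}{2}\mathrm{OPT}$ to invoke weak duality (you do --- ``dominated'' must be quantified), the chain yields $\frac{12G_\delta}{\delta}\max_r(1,\ln\frac{U^r}{L})$ rather than the claimed $\frac{6G_\delta}{\delta}\max_r(1,\ln\frac{U^r}{L})$. The fix is simply to use the tight per-step bound and assign the factor of $2$ to the $D_0$ estimate, as the paper does; with that correction your argument matches the paper's.
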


\begin{thm}[Competitive Ratio of Alg~1 when $G_\delta>1$]\label{thm_Competitive_2} 
Let $\delta$, $G_\delta$ and $W_2$ be as defined in Theorem~\ref{thm_Alg4_2}.
Let $U^{r}$ and $L$ be as defined in (\ref{eqn_Ur}) and (\ref{eqn_L}), respectively.
Then, with probability greater than $(1-(\delta/3(HR+1))^S)^{TK_iE_i}$, PD-ORS in Algorithm~1 returns a feasible solution that is $\frac{6G_\delta}{\delta} \max_{r\in\mathcal{R}} (1,\ln\frac{U^{r}}{L})$--competitive.
% with $\mathrm{Pr}\{ \text{LHS(\ref{eqn_packing})} >\hat{C}_h^r[t]G_\delta(1+(\frac{3}{G_\delta W_2})^{\frac{1}{2}})\} \leq \frac{\delta}{3(HR+1)}$.
\end{thm}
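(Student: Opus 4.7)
The plan for Theorem~\ref{thm_Competitive_2} is to mirror the proof of Theorem~\ref{thm_Competitive_1}, but substitute Theorem~\ref{thm_Alg4_2} for Theorem~\ref{thm_Alg4_1} at the rounding step. The change in the probability expression, from $\delta/3$ to $\delta/(3(HR+1))$, reflects which constraint family is probabilistically violated: in the $G_\delta > 1$ regime, Algorithm~4 rounds toward feasibility of the single cover constraint \eqref{eqn_cover1} while the $HR$ capacity packing rows \eqref{eqn_packing} together with the worker-count bound \eqref{eqn_packing2} (a total of $r = HR+1$ packing rows) are now the ones that can be violated. Consequently Lemma~\ref{lem_ILP2} produces a per-invocation failure bound of $\delta/(3(HR+1))$ rather than $\delta/3$.

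I would decompose the competitive ratio $\frac{6G_\delta}{\delta}\max_{r\in\mathcal{R}}(1,\ln(U^r/L))$ into three multiplicative factors. First, a factor of $2$ coming from the $1/(2\mu)$ slack built into $L$ in \eqref{eqn_L}, which ensures enough dual headroom that the first job admitted can cover its own resource cost without breaking the primal-dual invariant. Second, a factor of $\max_{r\in\mathcal{R}}(1,\ln(U^r/L))$ from the Buchbinder--Naor-style analysis of the exponential marginal-cost function \eqref{eqn_cost_update_fn}: since $Q_h^r(\rho)$ grows exponentially from $L$ to $U^r$ over $[0,C_h^r]$, an inductive allocation-cost inequality of the form $\Delta D_i \leq \bigl(1 + \max_r\ln(U^r/L)\bigr) \Delta P_i$ can be established at each admitted job by bounding the integral $\int_0^{C_h^r} Q_h^r(\rho)\,d\rho$. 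Third, a factor $3G_\delta/\delta$ from Theorem~\ref{thm_Alg4_2}, capturing the cost inflation of rounding the LP-relaxed schedule to an integer schedule. Multiplying these three gives $2\cdot \max_r(1,\ln(U^r/L))\cdot 3G_\delta/\delta = \frac{6G_\delta}{\delta}\max_r(1,\ln(U^r/L))$.

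For the probability statement, Theorem~\ref{thm_Alg4_2} guarantees each call to Algorithm~4 produces a feasible schedule with probability at least $1 - \delta/(3(HR+1))$; $S$ independent rounding attempts reduce this failure probability to $(\delta/(3(HR+1)))^S$ per invocation. A union bound over at most $TK_iE_i$ invocations of Algorithm~4 (at most one per unit of outstanding workload per time slot, summed across jobs and epochs) yields the joint success probability $(1-(\delta/(3(HR+1)))^S)^{TK_iE_i}$. Conditioning on this joint success event, the deterministic primal-dual argument sketched above applies pointwise to deliver the stated competitive ratio.

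The main obstacle will be ensuring that the dual invariant $\Delta D_i \leq \alpha\,\Delta P_i$ cleanly survives the approximation error of Algorithm~4: the LP-relaxation solution from Step~8 is what aligns with the dual price gradient, while the rounded integer schedule from Steps~9--12 is what actually consumes resources and pays prices. I would handle this by first proving the primal-dual ratio against the fractional LP solution, obtaining a $2\max_r(1,\ln(U^r/L))$ bound relative to the LP-relaxation value, and then applying Theorem~\ref{thm_Alg4_2} once at the end as a single multiplicative $3G_\delta/\delta$ inflation, so that the randomized-rounding error never compounds with the logarithmic exponential-pricing factor.
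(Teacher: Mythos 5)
Your proposal follows essentially the same route as the paper: weak duality plus an allocation-cost (primal-dual increment) inequality, with the factor $2$ coming from $D_0 \leq \tfrac{1}{2}OPT$ via the $1/(2\mu)$ slack in $L$, the factor $\max_{r}(1,\ln(U^r/L))$ from the exponential price function, the factor $3G_\delta/\delta$ from Theorem~\ref{thm_Alg4_2}, and the success probability $(1-(\delta/3(HR+1))^S)^{TK_iE_i}$ from $S$ independent rounding attempts over the $O(TK_iE_i)$ dynamic-programming states. The paper folds the $\delta/(3G_\delta)$ rounding factor directly into its allocation-cost relationship rather than applying it as a final inflation, but this is only an organizational difference, not a different argument.
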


It is worth pointing out that in Theorems~\ref{thm_Competitive_1} and \ref{thm_Competitive_2}, the feasibility achieving probability values, i.e., $(1-(\delta/3)^S)^{TK_iE_i}$ and $(1-(\delta/3(HR+1))^S)^{TK_iE_i}$, can controlled by choosing appropriate values of $\delta$ and $S$ (i.e., rounds of rounding) to offset the impact of total number of DP iterations $TK_i E_i$.
The smaller $\delta$ and the larger $S$ are, the higher the feasibility achieving probability.
Theorems~\ref{thm_Competitive_1} and \ref{thm_Competitive_2} can be proved by weak duality and the approximation results in Theorems~\ref{thm_Alg4_1} and \ref{thm_Alg4_2}.
We provide a proof in Appendix~\ref{appdx:thm_Competitive}.
% due to space limitation.

\begin{thm}[Polynomial Running Time]\label{thm_Time}
By combining Algorithms~1--4, % {\color{red}{with probability greater than $1-\delta$}}, 
the time complexity of PD-ORS is
% $O(\frac{1}{\delta}TK_{i}^{2}E_{i}^{2}H^{4})$, 
$O(\sum_{i=1}^{|\mathcal{I}|} T K_i^2 E_{i}^{2} (H^{3} + S) )$,
which is polynomial.
% which is straightforward to prove by considering dynamic programming and randomized rounding.
\end{thm}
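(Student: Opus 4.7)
The plan is to analyze the running time bottom-up: first establish the per-call cost of Algorithm~4, then count how often $\theta$ is invoked by the DP in Algorithm~3, and finally wrap Algorithms~2 and~1 around this. Since the time complexity splits cleanly over jobs (Algorithm~1 processes jobs sequentially with $O(1)$ bookkeeping), it suffices to bound the per-job cost and sum over $i \in \mathcal{I}$.

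For the per-call cost of Algorithm~4, its internal-communication branch sorts the $H$ machines by the price $\sum_{r\in\mathcal{R}} p_h^r[t](\alpha_i^r\gamma_i + \beta_i^r)$ in $O(H\log H)$ time and then performs an $O(1)$ feasibility check per machine, totaling $O(H\log H)$. The external-communication branch solves the LP relaxation of Problem~(\ref{eqn_obj_given_t3}), which has $O(H)$ variables and $O(RH)$ packing/cover constraints and therefore costs $O(H^3)$ via any standard polynomial-time LP algorithm, followed by at most $S$ randomized rounding attempts per (\ref{eqn_rounding_w})--(\ref{eqn_rounding_s}), each of cost $O(H)$. Taking the more expensive of the two branches and absorbing $SH$ into $H^3 + S$ up to constants, each call to Algorithm~4 costs $O(H^3 + S)$.

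For the DP invocation count, Algorithm~3 evaluates the 2D recursion $\Theta(\tilde{t}_i, V_i) = \min_v \{\theta(\tilde{t}_i,v) + \Theta(\tilde{t}_i-1, V_i-v)\}$. With memoization, this is a table of $O(T \cdot V_i) = O(T E_i K_i)$ cells in which each cell enumerates $v \in \{0, 1, \ldots, V_i\}$ with one call to Algorithm~4, giving $O(V_i) = O(E_i K_i)$ calls per cell and $O(T E_i^2 K_i^2)$ calls in total to fill the whole table. Algorithm~2 then enumerates $\tilde{t}_i$ from $a_i$ to $T$, but the DP entries are shared across these outer iterations, so no additional $\theta$-calls are required beyond the single memoized fill. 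Combining with the per-call cost of Algorithm~4 yields $O(T K_i^2 E_i^2 (H^3 + S))$ time per job, and summing over $i \in \mathcal{I}$ produces the stated bound $O\big(\sum_{i=1}^{|\mathcal{I}|} T K_i^2 E_i^2 (H^3 + S)\big)$. Polynomiality is then immediate since $T$, $K_i$, $E_i$, $H$, and $S$ are all polynomial in the input size.

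The main subtle point, rather than a hard obstacle, is justifying that the DP cells can be shared across Algorithm~2's enumeration of $\tilde{t}_i$ so that the bound is not multiplied by an extra factor of $T$; this holds because $\theta(t, v)$ depends only on the fixed price profile $\{p_h^r[t]\}$ frozen during the processing of a single job and not on $\tilde{t}_i$, so a single bottom-up fill up to $t = T$ simultaneously supplies $\Theta(\tilde{t}_i, V_i)$ for every candidate $\tilde{t}_i \le T$. A secondary point is that the $O(H^3)$ LP cost can be replaced by any polynomial-time LP bound without affecting polynomiality; this exponent is a generic upper bound and not essential to the claim.
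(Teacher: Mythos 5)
Your proposal is correct and follows essentially the same decomposition as the paper's proof: Algorithm~4 costs $O(H\log H + H^3 + S) = O(H^3+S)$ per call, the dynamic program contributes $O(TK_i^2E_i^2)$ work per job, and the product is summed over jobs. You additionally make explicit the memoization argument that the $\theta(t,v)$ table can be shared across Algorithm~2's enumeration of $\tilde{t}_i$ (avoiding an extra factor of $T$), a point the paper's proof leaves implicit.
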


\begin{proof}
When solving $\theta(t,v)$ using Algorithm~4, it takes $O(H\log H)$ iterations to sort machines in internal communication case under each time slot $t$ and looping all machines to calculate the minimum number $s_{ih}[t]$ takes $O(H)$.
Thus, it takes $(H\log H)$ time for the internal communication part in Algorithm~4.
For the external communication part in Algorithm~4, solving the LP relaxation of Problem~\eqref{eqn_obj_given_t3} can be approximately bounded $O(H^3)$ if we use a polynomial time LP solver (e.g., Vaidya's algorithm~\cite{Vaidya87:LP}).
According to Algorithm~4, the rounding time is proportional to $S$.
Hence, the running time for the external communications part is upper bounded by $O(|\mathcal{H}|^3 + S)$.
Combining the discussions above, the running time complexity of Algorithm~4 is $O(H\log H + H^3 + S) = O(H^3 + S)$.
Moreover, the number of states $(t,v)$ is $O(TK_iE_i)$ in the dynamic programming (DP) for each job $i$, and the time complexity of executing DP is $O(K_iE_i)$.
Thus, the time complexity is $O(TK_i^2E_i^2)$ in DP.
In Algorithm~1, the number of steps in the main loop is equal to the number of jobs.
Therefore, the overall running time complexity can be computed as $O(\sum_{i=1}^{|\mathcal{I}|} T K_i^2 E_{i}^{2} (H^{3} + S) )$.
\end{proof}

\section{Numerical Results}
\label{sec:numerical}

In this section, we conduct simulation studies to evaluate the efficacy of our proposed PD-ORS algorithm.
We test an ML system with jobs parameters generated uniformly at random from the following intervals: $E_{i} \in [50,200]$, $K_{i} \in [20000,500000]$,
% \textbf{the total number of samples is a bit too small}, 
$g_{i} \in [30,575]$ MB, $\tau_{i} \in [10^{-5},10^{-4}]$ 
%\textbf{it should be smaller since it is the time for computing one sample only} 
time slots, $\gamma_i\in[1,10]$,
%\textbf{the max. ratio can be much smaller, e.g., 10}, 
$F_i\in[1,200]$. %all generated uniformly at random.
We consider four types of resources: GPU, CPU, memory, and storage.
For fair comparisons, following similar settings in~\cite{Li14:OSDI}~\cite{Chilimbi14:OSDI}~\cite{Sun17:Dorm}, we set resource demand of each worker as follows: 0--4 GPUs, 1--10 vCPUs, 2--32 GB memory, and 5--10GB storage.
We set resource demand of each parameter server as follows: 1--10 vCPUs, 2--32GB memory and 5-10GB storage.
%The resource capacities of each physical machine is set by following EC2 GPU instances P2 and G3 randomly~\cite{AmazonEC2Instances}.
The resource capacity of each physical machine is set roughly 18 times of the resource demands of a worker/PS following EC2 C5n instances~\cite{AmazonEC2Instances}. %that is, we set the capacity of each machine is relatively larger compared to each worker's/parameter server's demand of each job $i$. 
We set the job arrival pattern according to the Google Cluster data~\cite{Reiss12:googlecluster}, but with normalized job arrival rates in alternating time-slots as follows: the arrival rates are $1/3$ and $2/3$ in odd and even time-slots, respectively.
For fair comparisons, we adopt the Sigmoid utility function ~\cite{Huang15:CORA,Bao18:ML_INFOCOM}: $u_i(t-a_i)=\frac{\theta_1}{1+e^{\theta_2(t-a_i-\theta_3)}}$, where $\theta_1\in[1,100]$ indicates the priority of job $i$, $\theta_2$ indicates how critical the time is for the job $i$, and $\theta_3\in[1,15]$ is the estimated target completion time.
We set $\theta_2=0$ for time-insensitive jobs (10\% of jobs), $\theta_2\in[0.01,1]$ for time-sensitive jobs (55\% of jobs) and $\theta_2\in[4,6]$ for time-critical jobs (35\% of jobs).

We first compare our PD-ORS algorithm with three baseline job scheduling policies: (1) {\em FIFO} in Hadoop and Spark~\cite{Zaharia10:USENIX}, where the jobs are processed in the order of their arrival times. In our setting, the fixed number of workers (parameter servers) is between 1 to 30, 
(2) {\em Dominant Resource Fairness Scheduling} (DRF) in Yarn~\cite{Vavilapalli13:Yarn} and Mesos~\cite{Hindman11:Mesos}, where the jobs are scheduled based on their dominant resource share in the cloud to achieve its max-min fairness. 
The number of workers and parameter servers are computed and allocated dynamically, 
and (3) {\em Dorm}~\cite{Sun17:Dorm}, where the numbers of workers (parameter servers) are computed and placed by an MILP resource utilization maximization problem with fairness and adjustment overhead constraints.
Workers and parameter servers are placed in a round-robin fashion on available machines in Baselines (1) and (2).
The comparison results are shown in Figs.~\ref{fig:jobFixed} and~\ref{fig:machineFixed}.
In Fig.~\ref{fig:jobFixed}, we set $T=20$ and $I=50$, while in Fig.~\ref{fig:machineFixed}, we set $T=20$ and $H=100$.
We can see that PD-ORS significantly outperforms other policies and the gains in total utility becomes more pronounced as the numbers of jobs and machines increase. 

\begin{figure*}[t!]
%\centering
\begin{minipage}{0.3\textwidth}
%\centering
%\vspace{-1.26in}
\includegraphics[width=1\textwidth]{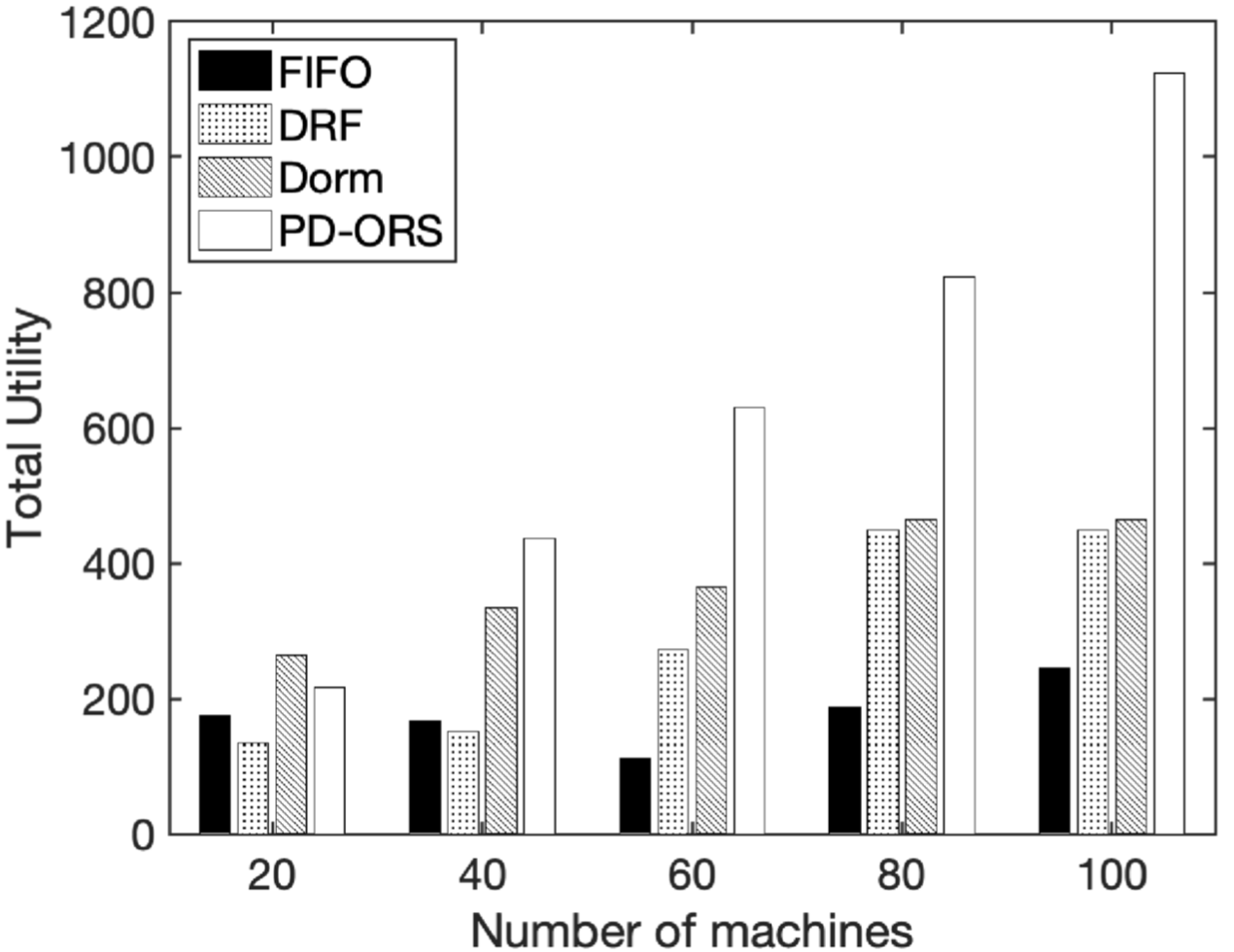}
%\vspace{-.03in}
\caption{Total utility with  increasing number of machines (synthetic data).}\label{fig:jobFixed}
\end{minipage}
\hspace{0.06\columnwidth}%
\begin{minipage}{0.3\textwidth}
%\centering
%\vspace{-1.3in}
\includegraphics[width=1\textwidth]{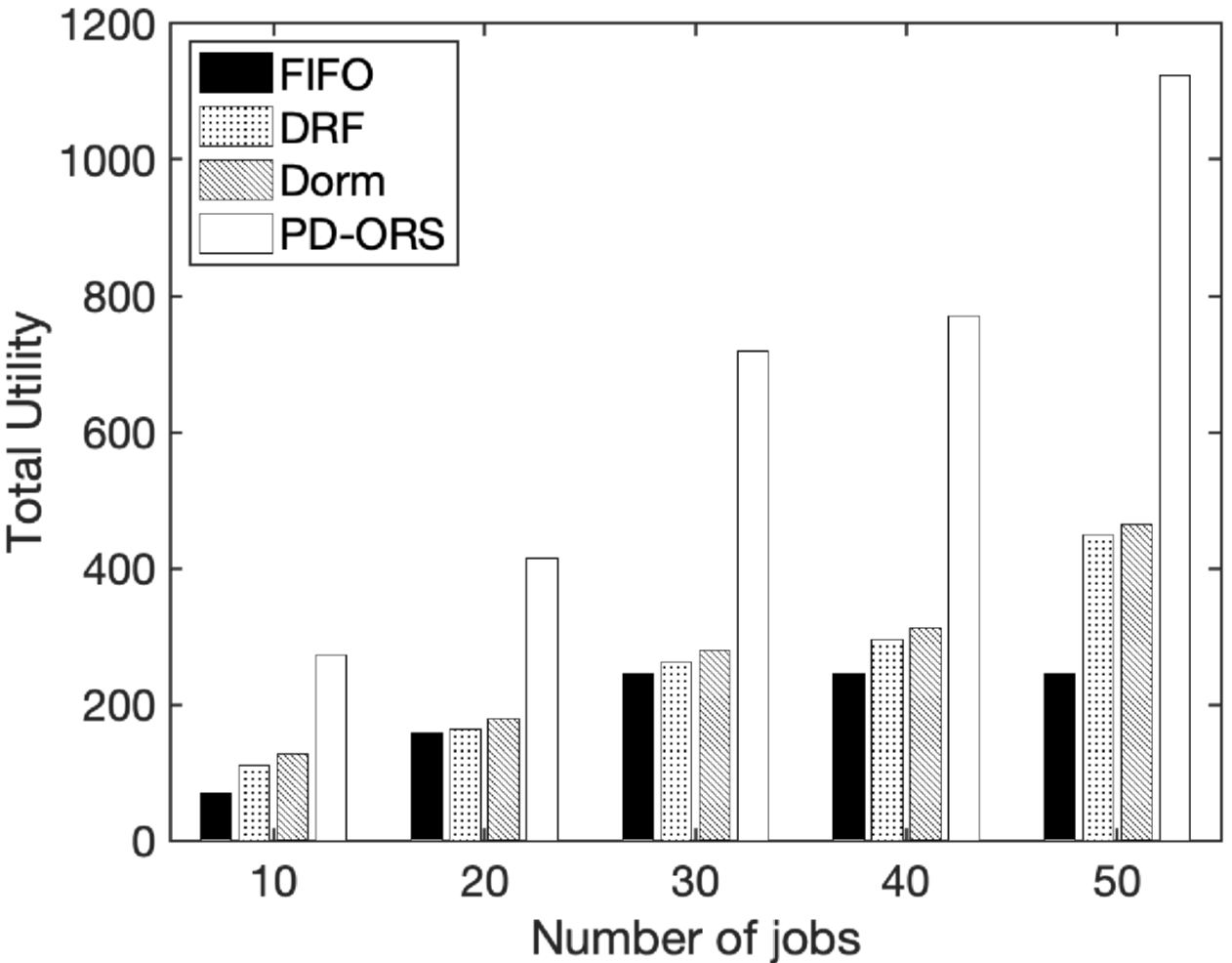}
%\vspace{.15in}
\caption{Total utility with increasing number of jobs (synthetic data).} \label{fig:machineFixed}
\end{minipage}
\hspace{0.06\columnwidth}%
\begin{minipage}{0.3\textwidth}
%\centering
\vspace{.05in}
\includegraphics[width=1.08\textwidth]{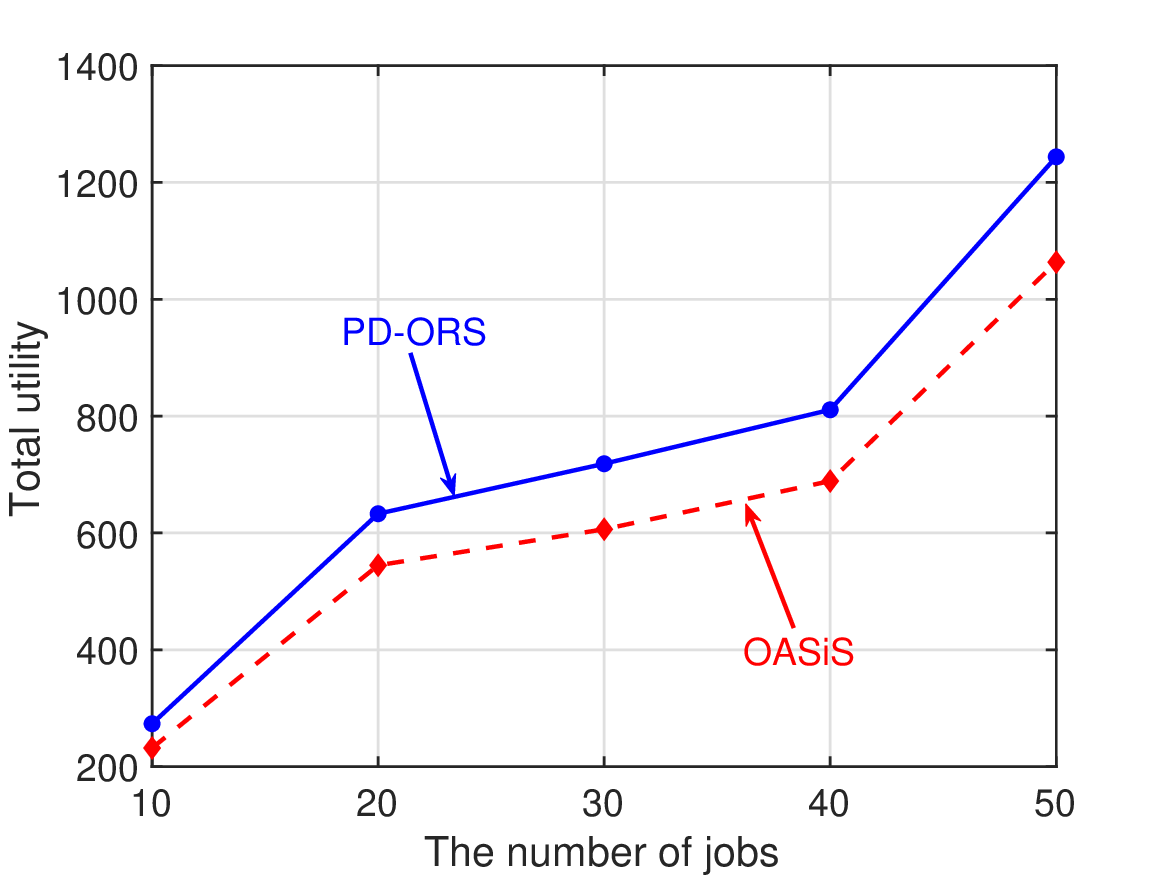}
%\vspace{.15in}
\caption{Utility comparison between PD-ORS and OASiS with increasing number of jobs.} \label{fig:UtilityComp_1}
\end{minipage}
\end{figure*}

\begin{figure*}[t!]
\begin{minipage}{0.3\textwidth}
\vspace{-.2in}
   \includegraphics[width=1.1\textwidth]{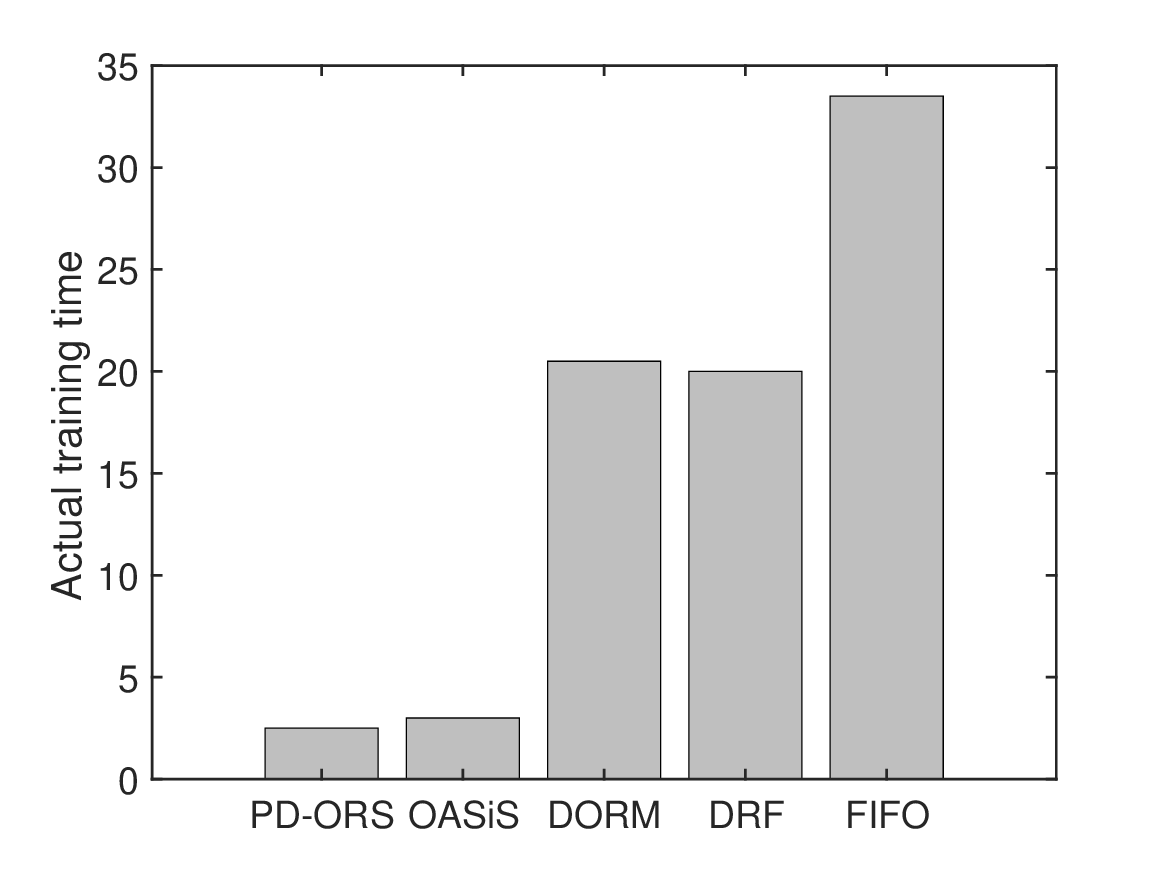}
%\vspace{-.25in}
    \caption{Median of actual training time comparison.} \label{fig:actual_time}
\end{minipage}
\hspace{0.06\columnwidth}%
\begin{minipage}{0.3\textwidth}
\vspace{-.3in}
	\includegraphics[width=1.11\textwidth]{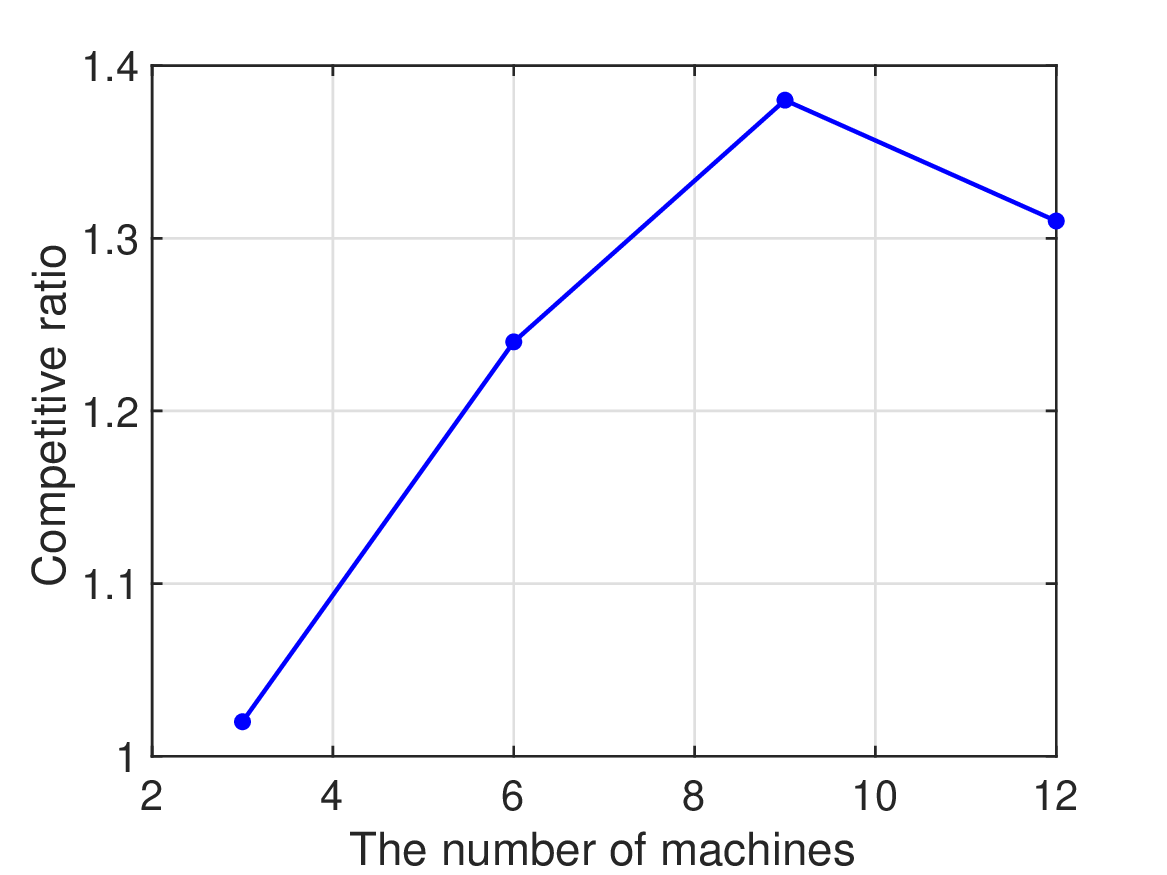}
        \caption{Competitive ratio.} \label{fig:ratio}
\end{minipage}
\hspace{0.06\columnwidth}%
\begin{minipage}{0.3\textwidth}
\vspace{-.1in}
   \includegraphics[width=1.11\textwidth]{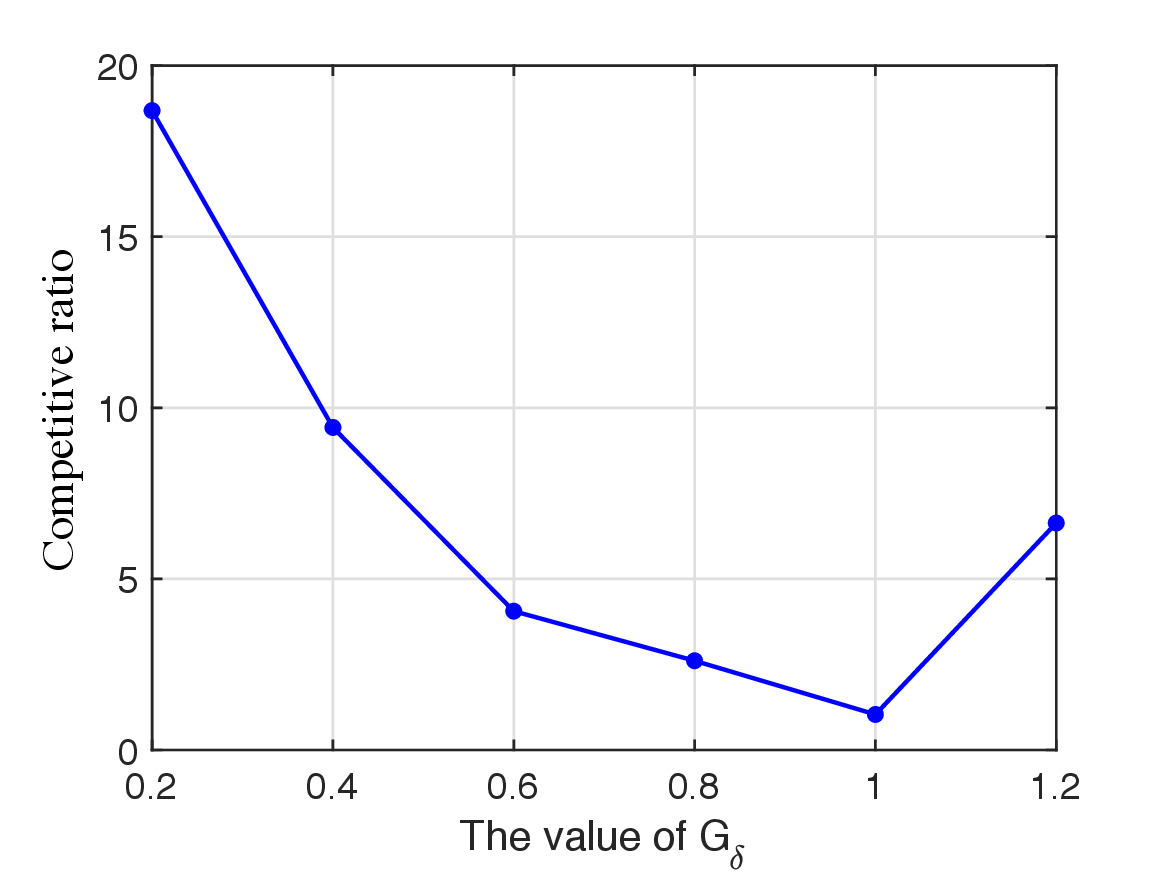}
    \caption{Impact of pre-rounding gain factor $G_\delta$ on competitive ratio.}\label{fig:Factor_ApprRatio}
\end{minipage}
\end{figure*}

\begin{figure*}[t!]
%\centering
\begin{minipage}{0.3\textwidth}
%\centering
%\vspace{-1.26in}
\includegraphics[width=1.01\textwidth]{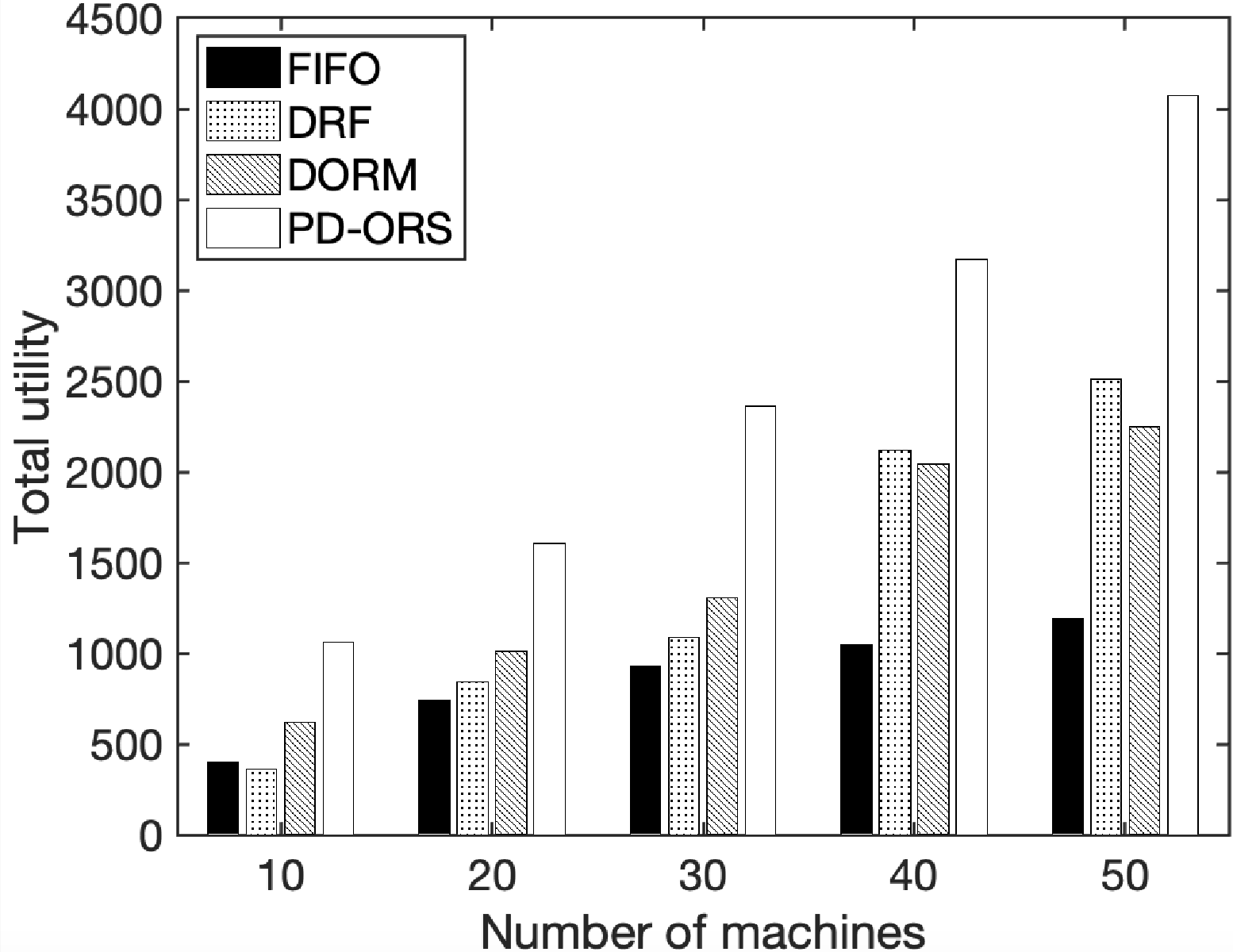}
%\vspace{-.03in}
\caption{Total utility with  increasing number of machines (Google cluster data trace).}\label{fig:jobFixed_response}
\end{minipage}
\hspace{0.06\columnwidth}%
\begin{minipage}{0.3\textwidth}
%\centering
%\vspace{-1.3in}
\includegraphics[width=1.02\textwidth]{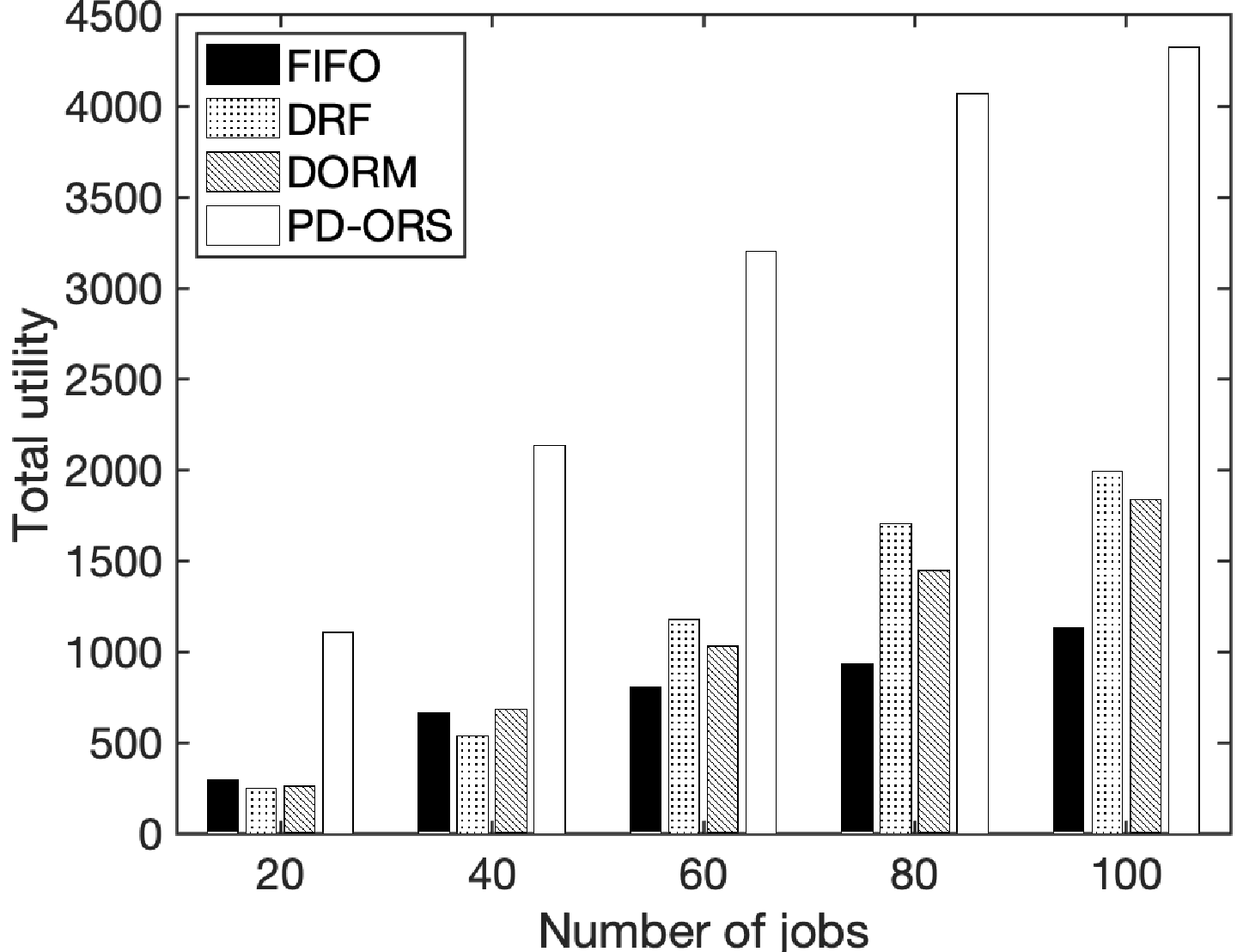}
%\vspace{.15in}
\caption{Total utility with  increasing number of jobs (Google cluster data trace).} \label{fig:serverFixed_response}
\end{minipage}
\hspace{0.06\columnwidth}%
\begin{minipage}{0.3\textwidth}
\vspace{-.1in}
\includegraphics[width=1.125\textwidth]{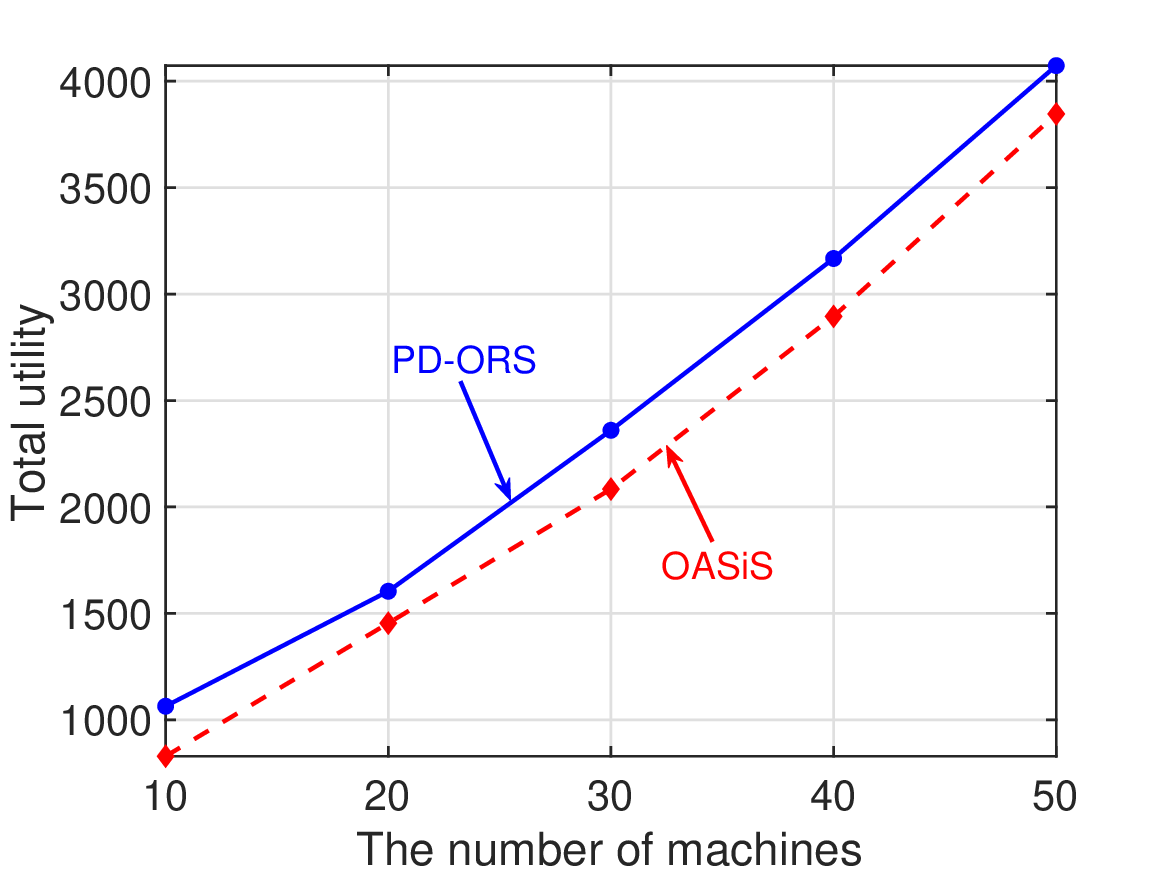}
%\vspace{-.2in}
\caption{Total utility with  increasing number of machines [T=80, I=100,  (10\%, 55\% , 35\%)].}\label{fig:jobFixed1}
\end{minipage}
\end{figure*}

\begin{figure*}[t!]
\begin{minipage}{0.3\textwidth}
\includegraphics[width=1.11\textwidth]{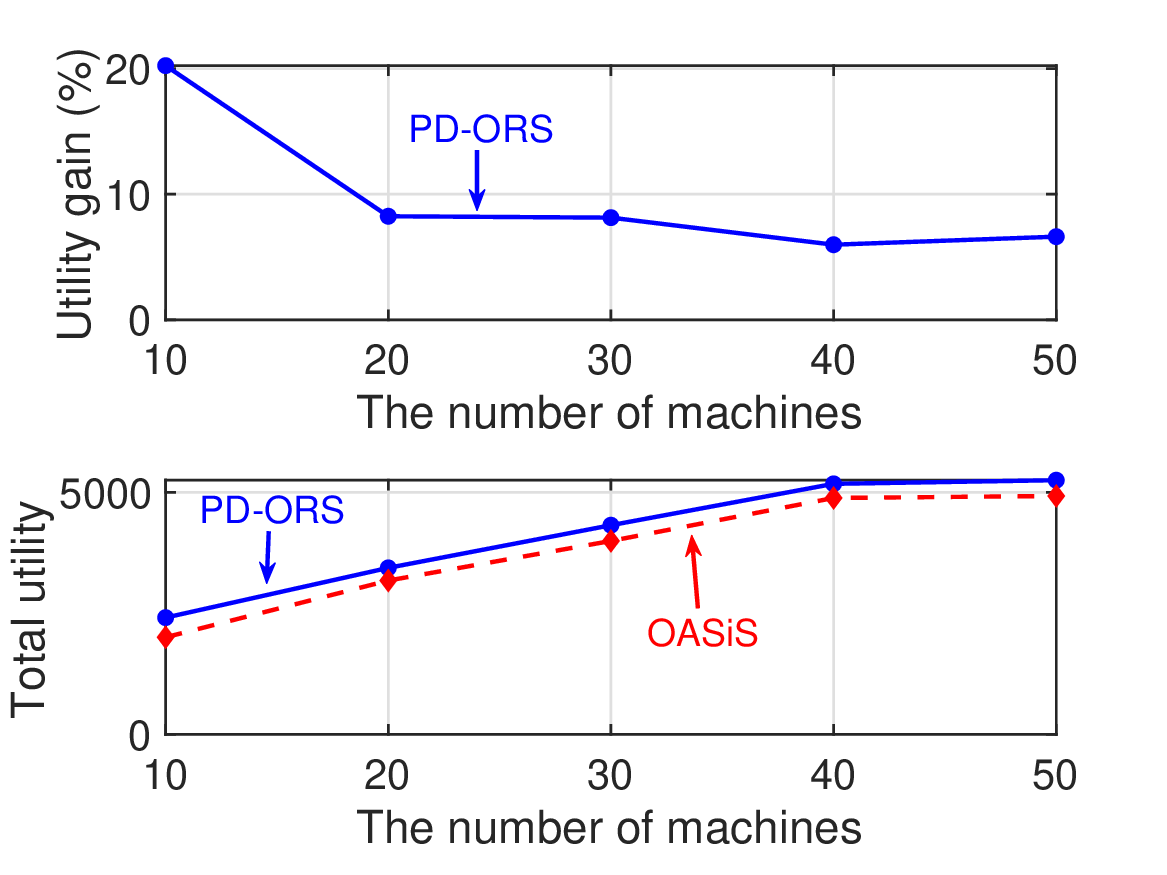}
%\vspace{.15in}
\caption{Total utility with  increasing number of machines [T=80, I=100,  (30\%, 69\% , 1\%)].} \label{fig:jobFixed2}
\end{minipage}
\hspace{0.03\columnwidth}%
\begin{minipage}{0.3\textwidth}
%\vspace{-.15in}
        \includegraphics[width=1.11\textwidth]{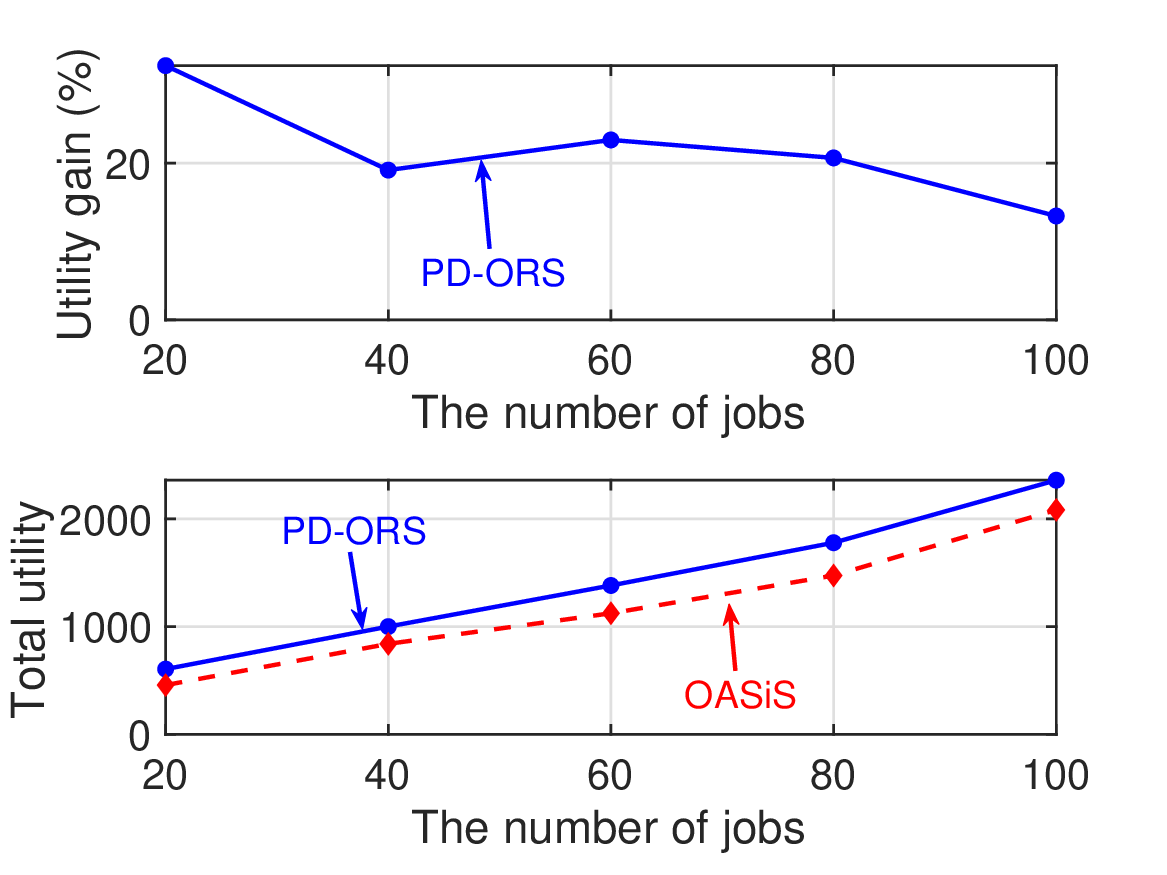}
        \caption{Total utility with  increasing number of jobs [T=80, H=30,  (10\%, 55\% , 35\%)].} \label{fig:serverFixed1}
\end{minipage}
\hspace{0.03\columnwidth}%
\begin{minipage}{0.3\textwidth}
%\vspace{-.15in}
   \includegraphics[width=1.25\textwidth]{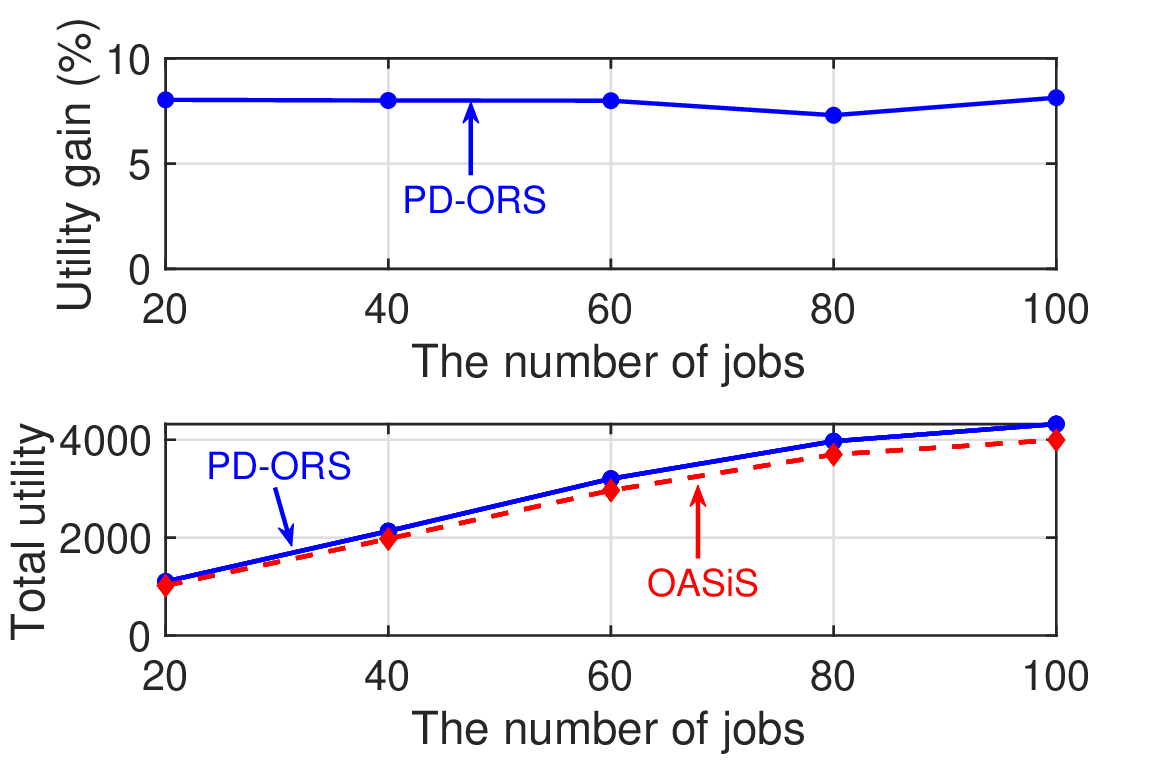}
%\vspace{-.2in}
    \caption{Total utility with  increasing number of jobs [T=80, H=30,  (30\%, 69\% , 1\%)].}\label{fig:serverFixed2}
\end{minipage}
\end{figure*}
 
Next, we compare our PD-ORS algorithm with the OASiS algorithm in~\cite{Bao18:ML_INFOCOM}, which is also a dynamic scheduling scheme. %, which is the most related work to ours and the state-of-the-art of scheduling for distributed ML systems.
As mentioned earlier, the key difference in OASiS is that parameter servers and workers are located on two {\em strictly} separated sets of machines (i.e., no co-located workers and PSs).
Here, we let $H=100$ and $T=20$.
%{\color{blue} Let $T=80$ and $I=100$ in Fig.~\ref{fig:UtilityComp_2}.}
For OASiS, half of the machines host parameter servers and the other half host workers.
For fair comparisons, both algorithms adopt the same Sigmoid utility function.
The comparison results are shown in Fig.~\ref{fig:UtilityComp_1}.
We can see that PD-ORS outperforms OASiS by allowing co-located parameter servers and workers. 
%\textbf{better discuss why the more jobs there are, the better the improvement is}
We can see from Fig.~\ref{fig:UtilityComp_1} that the performance gap between PD-ORS and OASiS widens as the number of jobs increases, which implies that PD-ORS is more scalable.
This is due to the advantage afforded by colocation of workers and parameter servers, which allows each physical machine to be fully utilized.
On the other hand, the strict separation of workers and parameter servers in OASiS may lead to the inability of placing workers on server-side machines, should there be available resources or vice verse.

Next, we investigate  the actual training time (completion time - arrival time) under different methods, where $T=80$, $H=30$ and $I=100$.
The median of the actual training time is shown in Fig.~\ref{fig:actual_time}.
Here, we simply set its training time to $T$ (i.e., 80) if the job cannot be finished within the scheduling time span $\mathcal{T}$.
%As we can see from the figure that only PD-ORS and OASiS are able to schedule all jobs within the time horizon.
As we can see from Fig.~\ref{fig:actual_time}, PD-ORS outperforms other scheduling policies, i.e., it has the smallest median time.
Also, due to the co-location advantage of PD-ORS, its median time is smaller compared to OASiS, where workers and parameter servers are placed in strictly separated sets of machines.
We expect that the difference between PD-ORS and OASiS will become more noticeable as the number of machines or the capacity of each machine increases since it will allow more co-location placements.

Next, we demonstrate the competitive ratio of our algorithm PD-ORS, which is the ratio between the total job utility of the offline optimal solution and the total job utility achieved by PD-ORS.
Recall that Problem~DMLRS is a non-convex problem with constraints (e.g., Eq.~(\ref{eqn_NumTrainedSamples})) that are not amenable to be directly solved by conventional optimization techniques. To obtain its offline optimum,  all possible combinations of $w_{ih}[t], s_{ih}[t], \forall i, h, t$ need to be considered, which is time prohibitive.
%Although the optimum is hard to obtain and the actual performance ratio can not be derived, for online scheduling algorithms, competitive ratio is typically provided instead (e.g.,~[6]).''
Thus, we limit the number of jobs $I$ to 10 and time span $T$ to 10, and the result is shown in Fig.~\ref{fig:ratio}.
As we can see from the figure, the performance ratio is between 1.0 to 1.4, indicating that our proposed algorithm PD-ORS has a good competitive ratio performance.

Lastly, we examine the performance of the randomized rounding scheme in Algorithm~4, which is the key of PD-ORS.
We evaluate the rounding performance in terms of the ratio between the optimal total utility and the total utility obtained by our algorithm.
The optimal utility is computed using the Gurobi optimization solver.
We let $H=100$, $I=50$, $T=20$.
We vary the pre-rounding gain factor $G_{\delta}$ (Theorems~\ref{thm_Alg4_1} and \ref{thm_Alg4_2}) from $0.2$ to $1.2$.
The results are shown in Fig.~ \ref{fig:Factor_ApprRatio}. %, respectively.
The packing constraints are easier to satisfy with a smaller $G_\delta$, while the cover constraints are prone to be violated as $G_\delta$ gets smaller.
In our experiments, if the total rounds of randomized rounding before we find an integer feasible solution exceeds a preset threshold (e.g, $5000$), we will discard the corresponding job.
Theorem~\ref{thm_Alg4_1} suggests that there is a trade-off: if we set $G_\delta$ to be close to one to pursue a better total utility result, the rounding time could be large to obtain a feasible solution.
As $G_\delta$ increases, the probability of violating the packing constraints increases, meaning we need to have more rounding attempts to obtain an integer feasible solution.
However, according to our numerical experiences, if the machine's resource capacity is relatively large compared to the jobs' resource demands per worker/PS, the number of rounding attempts is small and not sensitive to the variation of $G_\delta$.
On the other hand, as $G_\delta$ decreases, the probability of violating the cover constraint increases.
However, in practice, the model usually converges with fewer number of iterations than the pre-defined training epochs since the required number of epochs is usually overestimated~\cite{Jeon18:multi-tenant}.
In other words, the violation of the cover constraint in one iteration may be acceptable.

As we can see from Fig.~\ref{fig:Factor_ApprRatio}, the best approximation ratio value is achieved when $G_\delta=1$.
This is because if $G_\delta$ approaches $0$, it implies that $\delta$ decreases at a larger rate (cf. Eq.~(\ref{eqn_Thm3_G})), resulting the increment of the performance ratio.
On the other hand, if $G_\delta$ goes to infinity, it implies $\delta$ decreases (cf. Eq.~\eqref{eqn_Thm4_G}), resulting in a much faster increment of the performance ratio.
Also, we can see from Fig.~\ref{fig:Factor_ApprRatio} that the performance ratios for all choices of $G_\delta$ are much better than the theoretical bounds in Theorems~\ref{thm_Alg4_1} and \ref{thm_Alg4_2}, which shows that the approximation ratio is much tighter than the worse-case bound suggested in Theorems~\ref{thm_Competitive_1} and \ref{thm_Competitive_2}.

Next, we show further experimental results with real-world data traces.
We first compare our PD-ORS algorithm with baseline scheduling algorithms, where we follow job arrivals exactly based on timestamps recorded in the Google Cluster data~\cite{Reiss12:googlecluster} by scaling down the original job trace (i.e., a ``snippet'' of the trace).
Here, we set $T=80$, $I=100$ and $H=30$.
The comparison results are shown in Figures~\ref{fig:jobFixed_response} - \ref{fig:serverFixed_response}.
Similarly, as we can see from the figures, our algorithm PD-ORS outperforms other scheduling policies.
In addition, due to the co-location advantage of PD-ORS, it achieves more total job utility compared to OASiS.

In the previous experiments, we have set the portions for time-insensitive jobs, time-sensitive jobs and time-critical jobs to 10\%, 55\% and 35\%, respectively, which follows the default setting in~\cite{Bao18:ML_INFOCOM} for fair comparison. 
Theoretically, the larger the portion of time-sensitive and time-critical jobs is, the better the performance is in terms of job utility compared to other scheduling policies.
Based on the Google trace analysis~\cite{Minet18:google_trace}, there are four categories of scheduling class of a job to indicate the latency sensitivity of the job, where we label class 0 as time-insensitive, Classes 1 and 2 as time-sensitive, and Class 3 as time-critical.
In order to follow the practical setting in the trace, we roughly set the ratio to 30\%, 69\% and 1\%.
We set $T=80$.
The number of machines increases from $10$ to $50$ with the number of job fixed to $100$, and the number of jobs increases from $20$ to $100$ with the number of machines fixed at $30$.
We let Figs.~\ref{fig:jobFixed1} and \ref{fig:serverFixed1} follow the previous ratio setting (i.e., 10\%, 55\% and 35\%), and Figs.~\ref{fig:jobFixed2} and \ref{fig:serverFixed2} follow the revised ratio setting (i.e., 30\%, 69\% and 1\%).
We examine the utility gain compared to OASiS, where it is normalized.
We present our experimental results in Figs.~\ref{fig:jobFixed1} -- \ref{fig:serverFixed2}.
We can see from the figures that as the portion of critical jobs decreases by 34\%, the utility gain becomes smaller.
That is, the advantage of our proposed algorithm PD-ORS becomes less prominent.

We note that although in theory we can compare our PD-ORS algorithm (a special case with utility function $u(x)=x$) with Optimus in~\cite{Peng18:ML_EuroSys} (which also takes co-location into consideration), it is not straightforward to do so in practice.
Optimus requires an offline stage to estimate the $\theta$-parameters of the speed function $f(p_j, w_j)$ (cf. $\theta_{0}$--$\theta_{3}$ in [\!\!\cite{Peng18:ML_EuroSys}, Eq.~(3)] for asynchronous training and $\theta_{0}$--$\theta_{4}$ for synchronous training in [\!\!\cite{Peng18:ML_EuroSys}, Eq.~(4)]).
Estimating these parameters requires specific hardware and software packages that are not available in our current experimental environment.
Due to the above computing resource limitations and time constraints, we are unable to conduct experiments to directly compare PD-ORS and Optimus in this work.
Also, our focus in this work is on scheduling algorithmic designs for deep learning training with main contributions being on the theoretical aspects (proving rigorous scheduling performance guarantees).
Implementing our PD-ORS algorithm in a similar testbed environment and having a comparison with Optimus is very interesting and will be our next step in future studies.
\section{Conclusion}
\label{sec:conclusion}

In this paper, we investigated online resource scheduling for distributed machine learning jobs in a shared computing cluster.
We considered the most general setting that workers and parameter servers can be co-located on the same set of physical machines.
We showed that this problem can be formulated as a challenging integer nonlinear programming problem with {\em non-deterministic} constraints.
We developed an efficient online scheduling algorithm with competitive ratio guarantee. 
Our main contributions are three-fold: 
i) We developed a new analytical model that jointly considers resource locality and allocation; 
ii) Through careful examinations of worker-server configuration relationships, we resolved the locality ambiguity in the model and reduce the problem to a mixed cover/packing integer program that entails low-complexity approximation algorithm design;
iii) We proposed a meticulously designed randomized rounding algorithm to solve the mixed cover/packing integer program and rigorously established its approximation ratio guarantee. 
Collectively, our results expand the theoretical frontier of online optimization algorithm design for resource optimization in distributed machine learning systems.

\begin{IEEEbiography}[{\includegraphics[width=1in, height=1.25in]{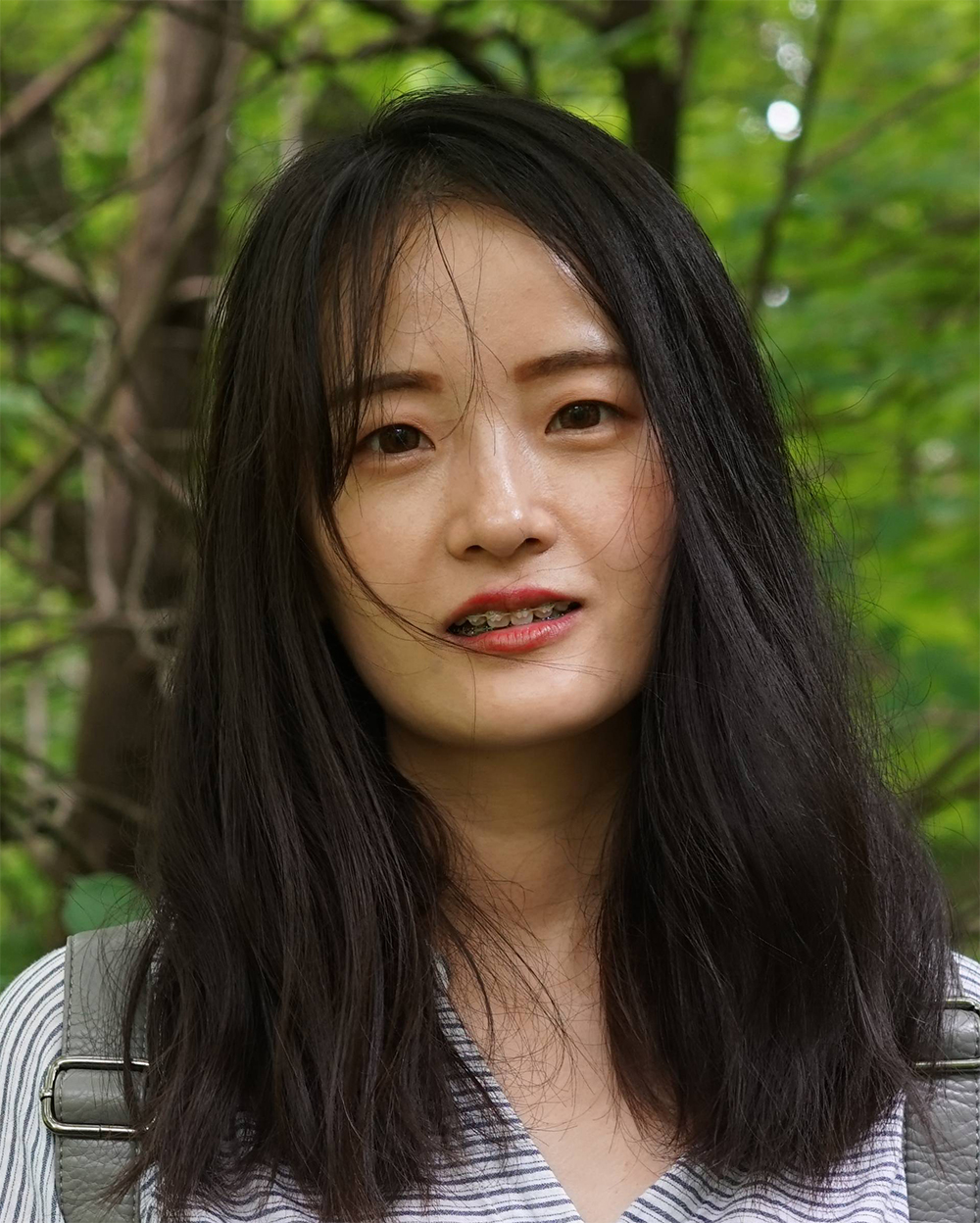}}]
{Menglu Yu} received her B.S. degree from the Department of Electrical and Information Engineering at Hunan University, China in 2014.
She is currently pursuing her Ph.D. degree in the Department of Computer Science at Iowa State University.
Her primary research interests include optimization for distributed machine learning systems and data centers, as well as network optimization.
\end{IEEEbiography}

%\vspace{-.5in}
\begin{IEEEbiography}[{\includegraphics[width=1in,height=1.25in,clip,keepaspectratio]{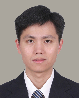}}]
{Jia Liu} (S'03--M'10--SM'16) is an Assistant Professor in the Department of Electrical and Computer Engineering at The Ohio State University, where he joined in Aug. 2020. 
He received his Ph.D. degree from the Department of Electrical and Computer Engineering at Virginia Tech in 2010. 
From Aug. 2017 to Aug. 2020, he was an Assistant Professor in the Department of Computer Science at Iowa State University. 
His research areas include theoretical machine learning, control and optimization for stochastic networks, and optimization for data analytics infrastructure and cyber-physical systems. 
Dr. Liu is a senior member of IEEE and a member of ACM. He has received numerous awards at top venues, including IEEE INFOCOM'19 Best Paper Award, IEEE INFOCOM'16 Best Paper Award, IEEE INFOCOM'13 Best Paper Runner-up Award, IEEE INFOCOM'11 Best Paper Runner-up Award, and IEEE ICC'08 Best Paper Award. 
Dr. Liu is a recipient of the NSF CAREER Award in 2020. He is a recipient of the Google Faculty Research Award in 2020. 
He is also a winner of the LAS Award for Early Achievement in Research from the College of Liberal Arts and Sciences at Iowa State University in 2020, and the Bell Labs President Gold Award in 2001. 
His research is supported by NSF, AFOSR, AFRL, and ONR.
\end{IEEEbiography}

%\vspace{-.5in}
\begin{IEEEbiography}[{\includegraphics[width=1in,height=1.25in,clip,keepaspectratio]{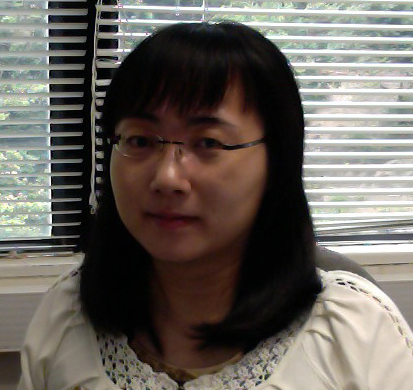}}] 
{Chuan Wu} received her B.Engr. and M.Engr. degrees in 2000 and 2002 from the Department of Computer Science and Technology, Tsinghua University, China, and her Ph.D. degree in 2008 from the Department of Electrical and Computer Engineering, University of Toronto, Canada. Since September 2008, Chuan Wu has been with the Department of Computer Science at the University of Hong Kong, where she is currently a Professor. Her current research is in the areas of cloud computing, distributed machine learning systems and algorithms, and intelligent elderly care technologies. She is a senior member of IEEE, a member of ACM, and served as the Chair of the Interest Group on Multimedia services and applications over Emerging Networks (MEN) of the IEEE Multimedia Communication Technical Committee (MMTC) from 2012 to 2014. She is an associate editor of IEEE Transactions on Cloud Computing, IEEE Transactions on Multimedia, ACM Transactions on Modeling and Performance Evaluation of Computing Systems, and IEEE Transactions on Circuits and Systems for Video Technology. She was the co-recipient of the best paper awards of HotPOST 2012 and ACM e-Energy 2016. 
\end{IEEEbiography}
%

%\vfill
\begin{IEEEbiography}[{\includegraphics[width=1in,height=1.25in,clip,keepaspectratio]{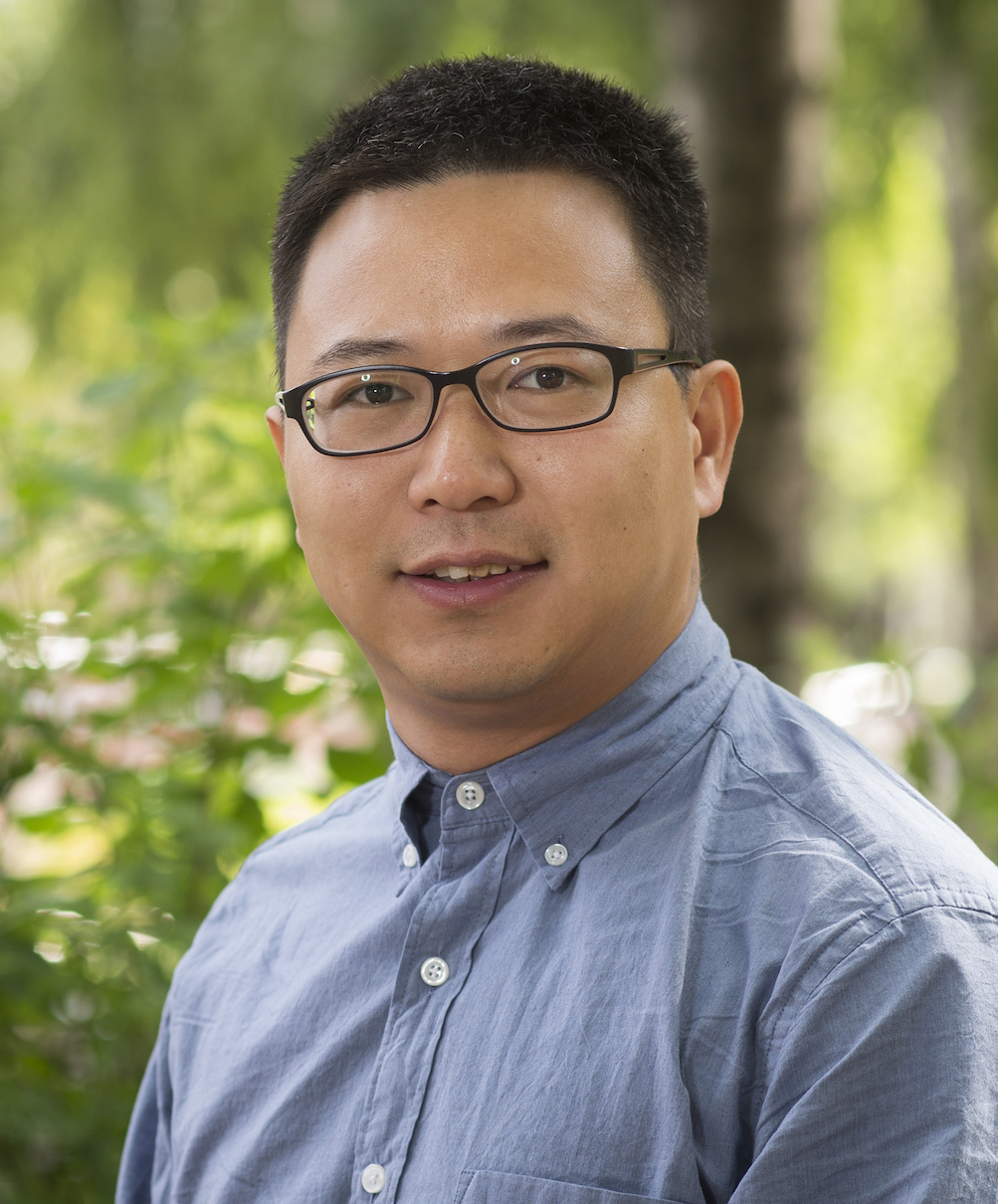}}]{Bo Ji}(S'11-M'12-SM'18)
received his B.E. and M.E. degrees in Information Science and Electronic Engineering from Zhejiang University, Hangzhou, China, in 2004 and 2006, respectively, and his Ph.D. degree in Electrical and Computer Engineering from The Ohio State University, Columbus, OH, USA, in 2012. Dr. Ji is an Associate Professor in the Department of Computer Science at Virginia Tech, Blacksburg, VA, USA. Prior to joining Virginia Tech, he was an Associate Professor in the Department of Computer and Information Sciences and a faculty member of the Center for Networked Computing at Temple University, where he was an Assistant Professor from July 2014 to June 2020. He was also a Senior Member of the Technical Staff with AT\&T Labs, San Ramon, CA, from January 2013 to June 2014. His research interests are in the modeling, analysis, control, optimization, and learning of computer and network systems, such as wired and wireless networks, large-scale IoT systems, high performance computing systems and data centers, and cyber-physical systems. He currently serves on the editorial boards of the IEEE/ACM Transactions on Networking, IEEE Internet of Things Journal, and IEEE Open Journal of the Communications Society. Dr. Ji is a senior member of the IEEE and a member of the ACM. He is a National Science Foundation (NSF) CAREER awardee (2017) and an NSF CISE Research Initiation Initiative (CRII) awardee (2017). He is also a recipient of the IEEE INFOCOM 2019 Best Paper Award.
\end{IEEEbiography}
%
%
%\vspace{-5in}
\begin{IEEEbiography}
[{\includegraphics[width=1in,height=1.25in,clip,keepaspectratio]{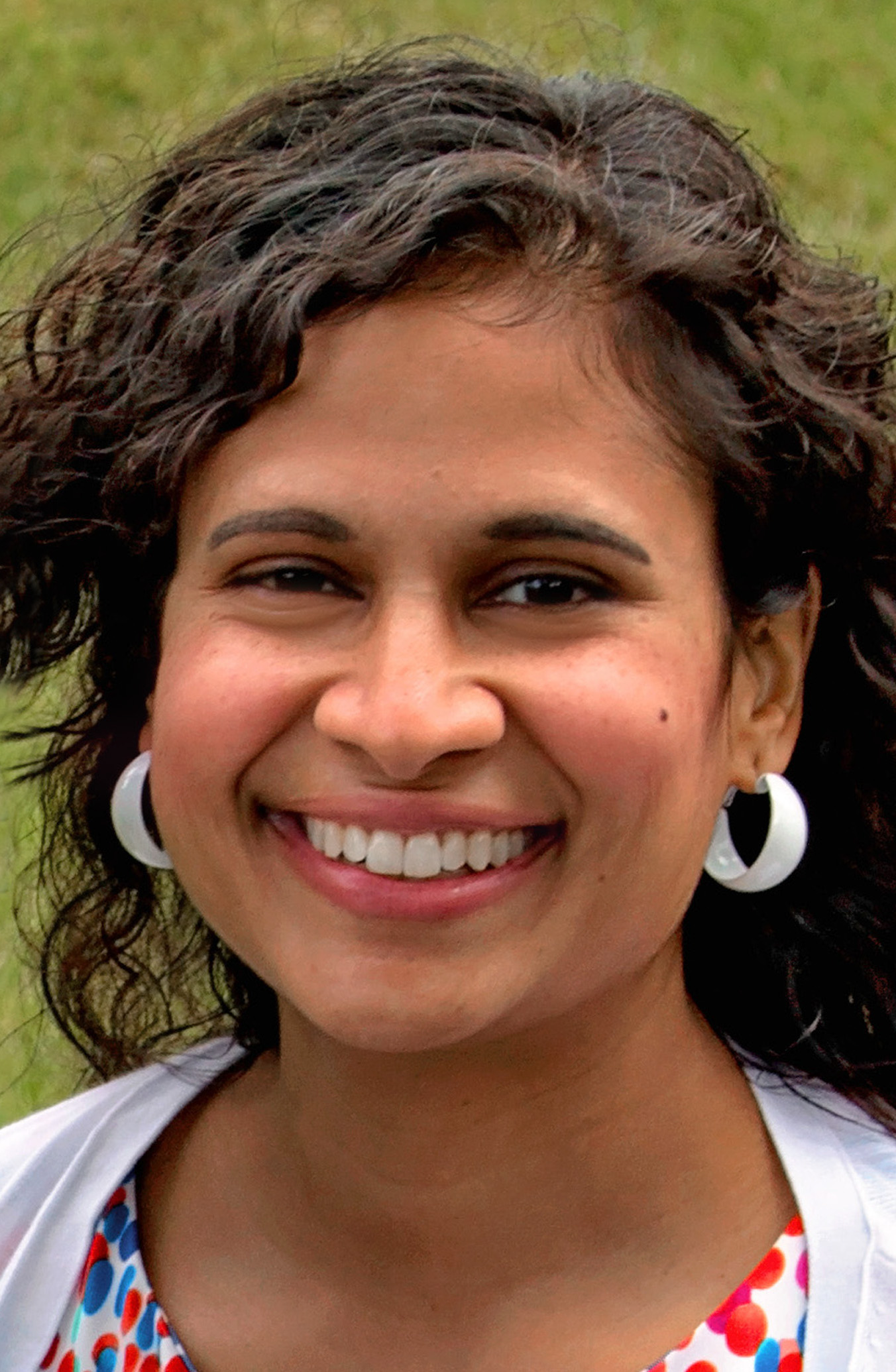}}]
{Elizabeth S. Bentley} has a B.S. degree in Electrical Engineering from Cornell University, a M.S. degree in Electrical Engineering from Lehigh University, and a Ph.D. degree in Electrical Engineering from University at Buffalo.
She was a National Research Council Post-Doctoral Research Associate at the Air Force Research Laboratory in Rome, NY.
Currently, she is employed by the Air Force Research Laboratory in Rome, NY, performing in-house research and development in the Networking Technology branch.
Her research interests are in cross-layer optimization, wireless multiple-access communications, wireless video transmission, modeling and simulation, and directional antennas/directional networking.
\end{IEEEbiography}

\clearpage
\appendices
\section{Proof of Lemma~\ref{lem_ILP}} \label{appdx:lem_ILP1}

\begin{proof}
Consider the probabilities of the following ``bad'' events: 1) $\c^{\top} \hat{\x} > \frac{3G_\delta}{\delta} \c^{\top}\bar{\x}$; 2) $\exists i$ such that $(\A\hat{\x})_{i} < a_{i}$; and 3) $\exists i$ such that $(\B\hat{\x})_{i} > b_{i}$.
Note that events 2) and 3) can be equivalently rewritten as: 2') $\exists i$ such that $\mathbb{E}\{(\A\hat{\x})_{i} \frac{W_{a}}{a_{i}} < W_{a}\}$ and 3') $\exists i$ such that $\mathbb{E}\{(\B\hat{\x})_{i} \frac{W_{b}}{b_{i}} > W_{b}\}$.
Since $\mathbb{E}\{ \hat{\x} \} \!=\! \x' \!=\! G_\delta\bar{\x}$, by linearity of expectation, we have:
\begin{align}
\label{eqn_exp1}&\mathbb{E}\{\c^{\top} \hat{\x} \} = \c^{\top} \mathbb{E}\{ \hat{\x} \} = \c^{\top} G_\delta \bar{\x} = G_\delta \c^{\top}\bar{\x}, \\
\label{eqn_exp2}& \mathbb{E}\left\{ (\A\hat{\x})_{i} \frac{W_{a}}{a_{i}} \right\} = G_\delta \mathbb{E} \left\{ (\A\bar{\x})_{i}\frac{W_{a}}{a_{i}} \right\} \geq G_\delta W_{a}, \\
\label{eqn_exp3}& \mathbb{E}\left\{ (\B\hat{\x})_{i} \frac{W_{b}}{b_{i}} \right\} = G_\delta \mathbb{E} \left\{ (\B\bar{\x})_{i}\frac{W_{b}}{b_{i}} \right\} \leq G_\delta W_{b}.
\end{align}
Then, by the Markov inequality and ~(\ref{eqn_exp1}), we can obtain the probability $
\mathrm{Pr} \left\{ \c^{\top} \hat{\x} > \frac{3G_\delta}{\delta} \c^{\top} \bar{\x} \right\} \leq \frac{\delta}{3}$.

Next, we note that each $\hat{x}_{j}$ can be viewed as a sum of independent random variables in $[0,1]$ as follows:
The fixed part of $\lfloor x'_{j} \rfloor$ is a sum of $\lfloor x'_{j} \rfloor$ random variables with value 1 with probability 1.

Then, we have that $(\B\hat{\x})_{i} \frac{W_b}{b_{i}} = (\sum_{j}[\B]_{ij}\hat{x}_{j}) \frac{W_b}{b_{i}}$ is also a sum of independent random variables in $[0,1]$.
Using the Chernoff bound, we have 
\begin{align*}
  \mathrm{Pr} \Big\{ (\B\hat{\x})_{i} \frac{W_{b}}{b_{i}} > (1+\epsilon) G_\delta W_{b} \Big\} \leq \exp \Big( -\epsilon^{2} \frac{G_\delta W_{b}}{3} \Big).  
\end{align*}
%Using Chernoff bound, we have $\mathrm{Pr} \{ (\A\hat{\x})_{i} \frac{W_{a}}{a_{i}} \leq (1-\epsilon) GW_{a} \} \leq \exp(-\epsilon^{2}\frac{GW_{a}}{2} )$.
Setting $(1+\epsilon)G_\delta \!=\!1$, i.e., $\epsilon \!=\! \frac{1}{G_\delta}\!-\!1$, we have:
\begin{align} \label{eqn_chernoff_bnd1_1}
%\mathrm{Pr} \left\{ (\A\hat{\x})_{i} \frac{W_{a}}{a_{i}} \leq W_{a} \right\} \leq \exp\left(- \Big(1-\frac{1}{G}\Big)^{2} G \frac{W_{a}}{2} \right).
\mathrm{Pr} \left\{ (\B\hat{\x})_{i} \frac{W_b}{b_{i}} > W_b \right\} \leq \exp\left(- \Big(\frac{1}{G_\delta}-1\Big)^{2} G_\delta \frac{W_b}{3} \right). \!\!\!
\end{align}
Forcing $\exp\!\left(- (\frac{1}{G_\delta}\!-\!1)^{2} G_\delta \frac{W_b}{3} \right) \!=\! \frac{\delta}{3r}$ and solving $G_\delta$, we have:
\begin{align} \label{eqn_L_def_1}
G_\delta \triangleq 1 + \frac{3\ln(3r/\delta)}{2W_b}\! -\!\sqrt{ \bigg(\frac{3\ln(3r/\delta)}{2W_b} \bigg)^{2} + \frac{3\ln(3r/\delta)}{W_b} }. \!\!\!
\end{align}

Using (\ref{eqn_exp2}), the Chernoff bound, and following similar arguments, we have:
\begin{align*}
\mathrm{Pr} \Big\{ (\A\hat{\x})_{i} \frac{W_{a}}{a_{i}} \leq (1-\epsilon) G_\delta W_{a} \Big\} \leq \exp \Big(-\epsilon^{2}\frac{G_\delta W_{a}}{2} \Big).    
\end{align*}
Forcing $\exp\left( -\epsilon^{2} \frac{G_\delta W_a}{2} \right)=\frac{\delta}{3m}$ and solving for $\epsilon$, we have $\epsilon = (\frac{2}{G_\delta W_a}\ln(\frac{3m}{\delta}))^{\frac{1}{2}}$. 
It follows that 
\begin{align*}
 \mathrm{Pr} \{ (\A\hat{\x})_{i} \frac{W_a}{a_{i}} \leq ( 1 \!-\! (\frac{2}{G_\delta W_a}\ln(\frac{3m}{\delta}))^{\frac{1}{2}} )G_\delta W_a\} \leq \frac{\delta}{3m},
\end{align*}
which implies that: 
\begin{align} \label{eqn_chernoff_bnd2_1}
\!\!\! \mathrm{Pr} \left\{ (\A\hat{\x})_{i} \!\leq\! a_i \left( 1 \!-\! \sqrt{\frac{2}{G_\delta W_a}\ln\Big(\frac{3m}{\delta}\Big)} \right)G_\delta, \exists i \right\} \leq \frac{\delta}{3m}. \!\!\!\!\!
\end{align}
By using union bound and (\ref{eqn_chernoff_bnd1_1}) and (\ref{eqn_chernoff_bnd2_1}), we have that events 1)--3) occur with probability less than $\frac{\delta}{3} + m\cdot\frac{\delta}{3m} + r\cdot\frac{\delta}{3r} = \delta$, 
and the proof is complete.
\end{proof}

\section{Proof of Lemma~\ref{lem_ILP2}} \label{appdx:lem_ILP2}

\begin{proof}
% {\em 2) When $G>1$: } 
Similar to the case when $0<G_\delta\leq 1$, we can have the expectation equations for the bad cases as in Eqns.~(\ref{eqn_exp1})-(\ref{eqn_exp3}).
We also can view each $\hat{x}_j$ as a sum of independent random variables in $[0,1]$ in the same way.
Then, we have that $(\A\hat{\x})_{i} \frac{W_a}{a_{i}} = (\sum_{j}[\A]_{ij}\hat{x}_{j}) \frac{W_a}{a_{i}}$ is also a sum of independent random variables in $[0,1]$.
% Using Chernoff bound, we have $\mathrm{Pr} \{ (\A\hat{\x})_{i} \frac{W_a{a_i} \leq (1-\epsilon) GW_a \} 
% \leq \exp( -\epsilon^{2} \frac{GW_a}{2} )$.
Using the Chernoff bound, we have that:
\begin{align*}
 \mathrm{Pr} \{ (\A\hat{\x})_{i} \frac{W_{a}}{a_{i}} \leq (1-\epsilon) G_\delta W_{a} \} \leq \exp(-\epsilon^{2}\frac{G_\delta W_{a}}{2} ).   
\end{align*}
Setting $(1-\epsilon)G_\delta \!=\!1$, i.e., $\epsilon = 1-\frac{1}{G_\delta}$, we have:
\begin{align} \label{eqn_chernoff_bnd1}
\mathrm{Pr} \left\{ (\A\hat{\x})_{i} \frac{W_{a}}{a_{i}} \leq W_{a} \right\} \leq \exp\left(- \Big(1-\frac{1}{G_\delta}\Big)^{2} G_\delta \frac{W_{a}}{2} \right). \!\!\!
% \mathrm{Pr} \left\{ (\B\hat{\x})_{i} \frac{W_b}{b_{i}} > W_b \right\} \leq \exp\left(- \Big(\frac{1}{G}-1\Big)^{2} G \frac{W_b}{3} \right).
\end{align}
Forcing $\exp\!\left(- (1-\frac{1}{G_\delta})^{2} G_\delta \frac{W_a}{2} \right) \!=\! \frac{\delta}{3m}$ and solving $G_\delta$, we have:
\begin{align} \label{eqn_L_def}
G_\delta \triangleq 1 + \frac{\ln(3m/\delta)}{W_a}+\sqrt{ \bigg(\frac{\ln(3m/\delta)}{W_a} \bigg)^{2} + \frac{2\ln(3m/\delta)}{W_a} }. \!\!\!
\end{align}

Using (\ref{eqn_exp2}), the Chernoff bound, and following similar arguments, we have:
\begin{align*}
  \mathrm{Pr} \Big\{ (\B\hat{\x})_{i} \frac{W_b}{b_i} > (1+\epsilon) G_\delta W_b \Big\} \leq \exp \Big( -\epsilon^{2}\frac{G_\delta W_b}{3} \Big). 
\end{align*}
Forcing $\exp\left( -\epsilon^{2} \frac{G_\delta W_b}{3} \right) = \frac{\delta}{3r}$ and solving for $\epsilon$, we have $\epsilon = (\frac{3}{G_\delta W_b}\ln(\frac{3r}{\delta}))^{\frac{1}{2}}$. 
It follows that 
\begin{align*}
 \mathrm{Pr} \Big\{ (\B\hat{\x})_{i} \frac{W_b}{b_i} > ( 1 + (\frac{3}{G_\delta W_b}\ln(\frac{3r}{\delta}))^{\frac{1}{2}} )G_\delta W_b \Big\} \leq \frac{\delta}{3r}, \end{align*}
which implies that: 
\begin{align} \label{eqn_chernoff_bnd2}
\!\!\! \mathrm{Pr} \left\{ (\B\hat{\x})_{i} > b_i \left( 1+ \sqrt{\frac{3}{G_\delta W_b}\ln\Big(\frac{3r}{\delta}\Big)} \right)G_\delta, \exists i \right\} \leq \frac{\delta}{3r}. \!\!\!\!\!
\end{align}
By using union bound and (\ref{eqn_chernoff_bnd1}) and (\ref{eqn_chernoff_bnd2}), we have that events 1)--3) occur with probability less than $\frac{\delta}{3} + m\cdot\frac{\delta}{3m} + r\cdot\frac{\delta}{3r} = \delta$, 
and the proof is complete.
\end{proof}

\section{Proof of the Competitive Ratio}  \label{appdx:thm_Competitive}

We use $OPT$ as the optimal objective value of Problem~R-DMLRS, which is also the optimum to Problem~D-R-DMLRS.
We let $\hat{\pi}_{i}$ denote the approximate schedule obtained by Algorithm~2, which inexactly solves Problem D-R-DMLRS.
Let $P_{i}$ and $D_{i}$ be the primal and dual objective values of Problems R-DMLRS and D-RMLRS after determining the schedule $\hat{\pi}_{i}$ in Algorithm~1.
Let $P_0$ and $D_0$ be the initial values of Problems R-DMLRS and D-RMLRS, respectively, where $P_0=0$ and $D_0=\sum_{t\in\mathcal{T}}\sum_{h\in\mathcal{H}}\sum_{r\in\mathcal{R}}P_h^r[0]C_h^r$.
We also let $P_I$ and $D_i$ be the final primal and dual objective values returned by Algorithm~1.
We present our main result in Lemma~\ref{lem:result}.
\begin{lem}
\label{lem:result}
If there exists constants $\epsilon\geq 1$, $G_\delta> 0$ and $\delta\in(0,1]$ such that $P_i-P_{i-1}\geq \frac{\delta/3G_\delta}{\epsilon}(D_i-D_{i-1}), \forall i\in\mathcal{I}$, and if $P_0=0$ and $D_0\leq \frac{1}{2}OPT$, then Algorithm~1 is $\frac{6G_\delta\epsilon}{\delta}$-competitive.
\end{lem}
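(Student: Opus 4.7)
The plan is to combine a telescoping argument on the per-iteration primal-dual inequality with weak LP duality. The argument is essentially self-contained: the difficult combinatorial work (producing the per-step inequality with the claimed constant) is already encapsulated in the hypothesis, so what remains is a compact calculation plus one feasibility check.

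First, I would sum the hypothesized per-step inequality $P_i - P_{i-1} \geq \frac{\delta}{3 G_\delta \epsilon}(D_i - D_{i-1})$ over $i = 1, \ldots, I$. Because the multiplicative constant is uniform in $i$, the sum telescopes to
\begin{equation*}
P_I - P_0 \;\geq\; \frac{\delta}{3 G_\delta \epsilon}\,(D_I - D_0).
\end{equation*}
Using the hypothesis $P_0 = 0$, this becomes $P_I \geq \frac{\delta}{3 G_\delta \epsilon}(D_I - D_0)$. Next, I would invoke weak LP duality for the pair R-DMLRS (maximization) and D-R-DMLRS (minimization). The vector $(\{\lambda_i\},\{p_h^r[t]\})$ maintained by Algorithm~1 is dual-feasible at termination: each $\lambda_i$ is defined in Step~2 as the maximum over $\pi_i\in\Pi_i$ of the RHS of (\ref{eqn_dualconstr}) (clipped at $0$ in Step~4), so constraint (\ref{eqn_dualconstr}) holds for every job and every schedule; non-negativity of $\lambda_i$ and $p_h^r[t] = Q_h^r(\rho_h^r[t])$ is immediate from the update rules. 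Weak duality for the min-dual / max-primal pair then gives $D_I \geq \mathrm{OPT}_{\mathrm{LP}}$, and since the LP relaxation of R-DMLRS is an upper bound on its integer optimum, we conclude $D_I \geq \mathrm{OPT}$.

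Substituting $D_I \geq \mathrm{OPT}$ and the hypothesis $D_0 \leq \tfrac{1}{2}\mathrm{OPT}$ into the telescoped inequality yields
\begin{equation*}
P_I \;\geq\; \frac{\delta}{3 G_\delta \epsilon}\left(\mathrm{OPT} - \tfrac{1}{2}\mathrm{OPT}\right) \;=\; \frac{\delta}{6 G_\delta \epsilon}\,\mathrm{OPT},
\end{equation*}
and rearranging gives the competitive ratio $\mathrm{OPT}/P_I \leq 6 G_\delta \epsilon/\delta$, as claimed.

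The only subtlety — rather than a true obstacle — is the dual-feasibility step: one must be careful to check that even though Algorithm~2 only approximately maximizes the RHS of (\ref{eqn_dualconstr}) (owing to the randomized rounding in Algorithm~4), the actual value plugged into $\lambda_i$ is still no smaller than the RHS for the chosen schedule, so (\ref{eqn_dualconstr}) is preserved across all $(i,\pi_i)$ pairs. Once this is observed, the lemma reduces to the two-line telescoping computation above, and no further machinery is needed. The real content of the overall competitive-ratio theorems (Theorems~\ref{thm_Competitive_1} and \ref{thm_Competitive_2}) lies in establishing the per-step hypothesis $P_i - P_{i-1} \geq \frac{\delta}{3 G_\delta \epsilon}(D_i - D_{i-1})$ with the specific constant $\epsilon = \max_{r\in\mathcal{R}}(1,\ln(U^r/L))$, which is where the choice of the exponential price function $Q_h^r(\cdot)$ in (\ref{eqn_cost_update_fn}) enters; that derivation is orthogonal to the present lemma.
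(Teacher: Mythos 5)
Your proof is correct and follows essentially the same route as the paper's: telescope the per-step inequality using $P_0=0$, invoke weak duality to get $D_I\geq \mathrm{OPT}$, and combine with $D_0\leq\frac{1}{2}\mathrm{OPT}$ to obtain $P_I\geq\frac{\delta}{6G_\delta\epsilon}\mathrm{OPT}$. Your added remark on verifying dual feasibility of $(\{\lambda_i\},\{p_h^r[t]\})$ before applying weak duality is a reasonable clarification of a step the paper leaves implicit, but it does not change the argument.
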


\begin{proof}[Proof of Lemma~\ref{lem:result}]
Since $P_I=\sum\limits_{i\in\mathcal{I}}(P_i-P_{i-1})$, and $D_I-D_0=\sum\limits_{i\in\mathcal{I}}(D_i-D_{i-1})$, we can have:
\begin{align*}
    P_I&=\sum_{i\in\mathcal{I}}(P_i-P_{i-1})\geq \frac{\delta/3G_\delta}{\epsilon}\sum_{i\in\mathcal{I}}(D_i-D_{i-1})\\
    &=\frac{\delta/3G_\delta}{\epsilon}(D_I-D_0).
\end{align*}
By weak duality theorem~\cite{Bertsekas99:Nonlinear}, we have
\begin{align*}
    D_I\geq OPT\geq P_I.
\end{align*}
Thus, it follows that:
\begin{align*}
    &D_I-D_0\geq \frac{1}{2}OPT,\\
    &P_I\geq \frac{\delta/3G_\delta}{\epsilon}(D_I-D_0)\geq \frac{\delta/3G_\delta}{2\epsilon}OPT, 
\end{align*}
and the proof is complete.
\end{proof}

Next, following similar arguments in~\cite{Buchbinder09,Bao18:ML_INFOCOM}, 
we introduce the relationship between the cost and resource consumption before and after processing one job.
Let $p_h^{r,i}[t]$ be the unit cost of type-$r$ resource on server $h$ at time $t$ after handling job $i$.
Let $\rho_h^{r,i}[t]$ be the amount of type-$r$ resource allocated to jobs on server $h$ at time $t$ after processing the job $i$.
For ease of our subsequent analysis, we now define the following allocation-cost relation that is implied by Algorithm~1:

\begin{defn} \label{defn:alloc_cost}
The allocation-cost relationship for Algorithm~1 with $\epsilon>1, G_\delta> 0$ and $\delta\in(0,1]$ is
\begin{align*}
    &p_h^{r,i-1}[t](\rho_h^{r,i}[t]-\rho_h^{r,i-1}[t])\geq \frac{\delta/3G_\delta C_h^r}{\epsilon}(p_h^{r,i}[t]-p_h^{r,i-1}[t]),\\ 
    &\hspace{2in}\forall i\in\mathcal{I},h\in\mathcal{H},r\in\mathcal{R}.
\end{align*}
\end{defn}

The allocation-cost relationship shows that the cost in each time slot for scheduling a new job is bounded by the increase of term $C_h^rp_h^r[t]$ in Problem~D-R-DMLRS, and the possible increment introduced by randomized rounding in Algorithm~4.
This is ensured by the update of the price function and the rounding scheme, respectively.

\begin{lem}
\label{lem:relation}If the allocation-cost relationship holds for $\epsilon\geq 1, G_\delta> 0$ and $\delta\in(0,1]$, then Algorithm~1 ensures $P_i-P_{i-1}\geq \frac{\delta/3G_\delta}{\epsilon}(D_i-D_{i-1}),\forall i\in\mathcal{I}$.
\end{lem}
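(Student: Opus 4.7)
The plan is to prove Lemma~\ref{lem:relation} by directly analyzing the primal and dual objective increments across a single iteration of the main loop of Algorithm~1 and relating them via the allocation-cost relationship in Definition~\ref{defn:alloc_cost}. I would first split into two cases based on whether job~$i$ is admitted. If $\lambda_i \leq 0$ in Step~4, then Algorithm~1 sets $x_i=0$, no resources are reserved, all prices $p_h^{r,i}[t]=p_h^{r,i-1}[t]$ remain unchanged, and neither the primal nor the dual variables move; hence $P_i-P_{i-1}=D_i-D_{i-1}=0$ and the claimed inequality holds trivially. The interesting case is when $\lambda_i>0$ and job $i$ is admitted under the schedule $\hat{\pi}_i$ returned by Algorithm~2.

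In the admission case, the primal objective increases by exactly the utility of the selected schedule, so $P_i-P_{i-1}=u_i(\tilde{t}_{\hat{\pi}_i}-a_i)$. For the dual objective, I would decompose the increment into two contributions: (i) the newly introduced dual variable $\lambda_i$, which by construction of Algorithm~2 equals
\begin{equation*}
\lambda_i = u_i(\tilde{t}_{\hat{\pi}_i}-a_i) - \sum_{t,h,r} p_h^{r,i-1}[t]\bigl(\alpha_i^r w_{ht}^{\hat{\pi}_i} + \beta_i^r s_{ht}^{\hat{\pi}_i}\bigr),
\end{equation*}
and (ii) the change $\sum_{t,h,r} C_h^r\bigl(p_h^{r,i}[t]-p_h^{r,i-1}[t]\bigr)$ in the capacity-weighted price terms caused by updating $p_h^r[t]=Q_h^r(\rho_h^r[t])$ in Step~3. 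Since $\rho_h^{r,i}[t]-\rho_h^{r,i-1}[t]=\alpha_i^r w_{ht}^{\hat{\pi}_i}+\beta_i^r s_{ht}^{\hat{\pi}_i}$, summing the allocation-cost relationship over all $(t,h,r)$ yields
\begin{equation*}
\sum_{t,h,r} C_h^r\bigl(p_h^{r,i}[t]-p_h^{r,i-1}[t]\bigr) \leq \frac{3G_\delta\epsilon}{\delta}\sum_{t,h,r} p_h^{r,i-1}[t]\bigl(\alpha_i^r w_{ht}^{\hat{\pi}_i}+\beta_i^r s_{ht}^{\hat{\pi}_i}\bigr).
\end{equation*}

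Combining (i) and (ii), and writing $C_{i-1}\triangleq \sum_{t,h,r}p_h^{r,i-1}[t](\alpha_i^r w_{ht}^{\hat{\pi}_i}+\beta_i^r s_{ht}^{\hat{\pi}_i})$ for brevity, I obtain
\begin{equation*}
D_i-D_{i-1} \leq u_i(\tilde{t}_{\hat{\pi}_i}-a_i) + \Bigl(\tfrac{3G_\delta\epsilon}{\delta}-1\Bigr)C_{i-1}.
\end{equation*}
The admission test $\lambda_i>0$ means precisely that $C_{i-1}<u_i(\tilde{t}_{\hat{\pi}_i}-a_i)$; as long as $\tfrac{3G_\delta\epsilon}{\delta}\geq 1$ (which holds under the stated parameter ranges once used in subsequent competitive-ratio bounds), the coefficient of $C_{i-1}$ is nonnegative and I can upper-bound $C_{i-1}$ by $u_i(\tilde{t}_{\hat{\pi}_i}-a_i)$, giving $D_i-D_{i-1}\leq \tfrac{3G_\delta\epsilon}{\delta}\,u_i(\tilde{t}_{\hat{\pi}_i}-a_i)=\tfrac{3G_\delta\epsilon}{\delta}(P_i-P_{i-1})$, which is equivalent to the claim.

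The main obstacle I anticipate is carefully justifying the telescoping-style application of the allocation-cost relationship across all $(t,h,r)$ indices touched by schedule $\hat{\pi}_i$ without double counting, and verifying that $p_h^r[t]$ remains unchanged for indices outside the support of $\hat{\pi}_i$ so that those terms drop out of the dual change. A secondary subtlety is handling the admission threshold cleanly: the rejection case must ensure no dual update happens (it does not, because Step~4 skips the price update), and the admission case must exploit $\lambda_i>0$ in the one place where $C_{i-1}\leq u_i$ is needed. Once these bookkeeping issues are handled, the remaining arithmetic is straightforward.
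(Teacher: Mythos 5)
Your proof is correct and follows essentially the same route as the paper's: split on admission versus rejection, use the binding dual constraint to write $u_i(\tilde{t}_{\hat{\pi}_i}-a_i)=\lambda_i+C_{i-1}$, sum the allocation-cost relationship over the $(t,h,r)$ indices in the schedule's support, and absorb the residual $\lambda_i$ term --- you merely arrange the algebra as an upper bound on $D_i-D_{i-1}$ rather than a lower bound on $P_i-P_{i-1}$, which is the same identity. The only point worth noting is that both your step of bounding $C_{i-1}$ by $u_i(\tilde{t}_{\hat{\pi}_i}-a_i)$ and the paper's step of discarding the term $(1-\frac{\delta/3G_\delta}{\epsilon})\lambda_i$ require $\frac{3G_\delta\epsilon}{\delta}\geq 1$, a condition you flag explicitly but which the paper's stated hypotheses ($\epsilon\geq 1$, $G_\delta>0$, $\delta\in(0,1]$) do not by themselves guarantee.
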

\begin{proof}[Proof of Lemma~\ref{lem:relation}]
For any job $i\in\mathcal{I}$, if job $i$ is rejected, then we have $P_i-P_{i-1}=D_i-D_{i-1}=0$ according to Problems~R-DMLRS and D-R-DMLRS, the inequality must hold.
If job $i$ is accepted with schedule $\pi_i$, i.e., $x_{\pi_i}=1$, then the increment value of the primal objective value $P_i$ is
\begin{align*}
    P_i-P_{i-1}=u_i(t_{\pi_i}-a_i).
\end{align*}
Since $x_{\pi_i}=1$, according to Algorithm~1, the constraint~(\ref{eqn_dualconstr}) is binding.
Then, we can have
\begin{align*}
    &u_i(t_{\pi_i}-a_i)\\
    &=\lambda_i+\sum_{t\in\mathcal{T}(\pi_i)}\sum_{h\in\mathcal{H}(\pi_i[t])}\sum_{r\in\mathcal{R}}(\alpha_i^rw_{ht}^{\pi_i}+\beta_i^rs_{ht}^{\pi_i})p_h^r[t]\\
    &=\lambda_i+\sum_{t\in\mathcal{T}(\pi_i)}\sum_{h\in\mathcal{H}(\pi_i[t])}\sum_{r\in\mathcal{R}}p_h^r[t](\rho_h^{r,i}[t]-\rho_h^{r,i-1}[t]).
\end{align*}
Similarly, we can have the increment value of the dual objective value $D_i$ as follows:
\begin{align*}
    D_i-D_{i-1}\!=\!\lambda_i+\sum_{t\in\mathcal{T}(\pi_i)}\sum_{h\in\mathcal{H}(\pi_i[t])}\sum_{r\in\mathcal{R}}(p_h^{r,i}[t]-p_h^{r,i-1}[t])C_h^r.
\end{align*}
Summing up the allocation-cost relationship over all $t\in\mathcal{T}(\pi_i),h\in\mathcal{H}(\pi_i[t]),r\in\mathcal{R}$, we have
\begin{align*}
    P_i-P_{i-1}&\geq\frac{\delta/3G_\delta}{\epsilon}(D_i-D_{i-1}-\lambda_i)+\lambda_i\\
    &=\frac{\delta/3G_\delta}{\epsilon}(D_i-D_{i-1})+(1-\frac{\delta/3G_\delta}{\epsilon})\lambda_i.
\end{align*}
As $\lambda_i\geq0$, $\epsilon, G_\delta> 0$ and $\delta\in(0,1]$, we have
\begin{align*}
    P_i-P_{i-1}\geq\frac{\delta/3G_\delta}{\epsilon}(D_i-D_{i-1}).
\end{align*}
This completes the proof of Lemma~\ref{lem:relation}.
\end{proof}

For specific $h\in\mathcal{H},r\in\mathcal{R}$, we define $\epsilon_h^r$ as the corresponding parameter in the allocation-cost relationship for any job $i\in\mathcal{I}$ and any time slot $t\in\mathcal{T}$.
Then, it holds that $\epsilon=\max_{h,r}\{\epsilon_h^r\}$.
Without loss of generality, we assume that the resource demand of each worker or parameter server is much smaller compared to the capacity of that resource on one server, i.e., $\alpha_i^r\ll C_h^r, \beta_i^r\ll C_h^r$.
This is common in real-world machine learning system as large percentage of resources in the whole server.
As $\rho_h^r[t]$ increases from 0 to $C_h^r$, then we can claim that $d\rho_h^r[t]=\rho_h^{r,i}-\rho_h^{r,i-1}$, and derive a differential version of the allocation-cost relationship, which is defined as follows:

\begin{defn} \label{defn:diff_alloc_cost}
The differential allocation-cost relationship for Algorithm~1 with $\epsilon_h^r\geq 1$ is
\begin{align*}
    p_h^r[t]d\rho_h^r[t]\geq \frac{C_h^r}{\epsilon_h^r}dp_h^r[t],\forall t\in\mathcal{T},h\in\mathcal{H},r\in\mathcal{R}.
\end{align*}
\end{defn}
Next we show that a feasible $\epsilon_h^r$ satisfies the differential allocation-cost relationship with price function $p_h^r[t]$ defined in (\ref{eqn_cost_update_fn}).

\begin{lem}\label{lem:diff_alloc_cost}
$\epsilon_h^r=\ln\frac{U^r}{L}$, and the price function defined in (\ref{eqn_cost_update_fn}) satisfy the differential allocation-cost relationship.
\end{lem}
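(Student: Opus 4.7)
The plan is to verify the differential allocation-cost relationship by direct differentiation of the exponential price function in~(\ref{eqn_cost_update_fn}) and then confirm that $\epsilon_h^r = \ln(U^r/L)$ satisfies the required lower bound $\epsilon_h^r\geq 1$ from Definition~\ref{defn:diff_alloc_cost}. Since the cost function has a clean exponential form, I expect the first part to be a short calculus computation in which the inequality actually holds with equality, and the more delicate part to be checking the feasibility condition $\epsilon_h^r\geq 1$ using the definitions of $U^r$ and $L$ in~(\ref{eqn_Ur})--(\ref{eqn_L}).

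First, I would treat $p_h^r[t] = Q_h^r(\rho_h^r[t]) = L(U^r/L)^{\rho_h^r[t]/C_h^r}$ as a differentiable function of $\rho_h^r[t]$ and apply the chain rule. Using the identity $\frac{d}{dx}a^{x} = a^{x}\ln a$, I would obtain
\[
\frac{dp_h^r[t]}{d\rho_h^r[t]} \;=\; \frac{\ln(U^r/L)}{C_h^r}\, L(U^r/L)^{\rho_h^r[t]/C_h^r} \;=\; \frac{\ln(U^r/L)}{C_h^r}\, p_h^r[t].
\]
Rearranging this expression yields
\[
p_h^r[t]\,d\rho_h^r[t] \;=\; \frac{C_h^r}{\ln(U^r/L)}\,dp_h^r[t],
\]
which is exactly the differential allocation-cost relationship in Definition~\ref{defn:diff_alloc_cost} holding with equality for the choice $\epsilon_h^r = \ln(U^r/L)$. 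Therefore the inequality is met, and this part of the proof is essentially complete.

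Second, I would verify that $\epsilon_h^r = \ln(U^r/L)\geq 1$, equivalently $U^r/L\geq e$, so that this $\epsilon_h^r$ is admissible for Definition~\ref{defn:diff_alloc_cost}. Looking at~(\ref{eqn_Ur}) and~(\ref{eqn_L}), $U^r$ is the per-unit-resource utility associated with the best-case internal-communication scenario where a job completes as early as $\lceil \tfrac{E_iK_i}{F_i}(\tau_i + 2g_i\gamma_i/(b_i^{(i)}F_i))\rceil$, while $L$ is a scaled worst-case per-unit-time per-unit-resource utility using the much slower external rate $b_i^{(e)}\ll b_i^{(i)}$ and divided by the large scaling factor $2\mu$ (whose reciprocal is bounded by a ratio involving total capacity $T\sum_h\sum_r C_h^r$). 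I would argue that under any realistic parameter regime, the gap $U^r/L$ is comfortably above $e$, so the condition $\epsilon_h^r\geq 1$ holds.

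The only genuinely subtle step is this feasibility check; the differentiation itself is routine, which is precisely why the exponential pricing rule in~(\ref{eqn_cost_update_fn}) was designed in the first place. Once Lemma~\ref{lem:diff_alloc_cost} is established, combining it with Lemma~\ref{lem:relation} (through the integrated form of the differential relation as $\rho_h^r[t]$ evolves under each admitted job) and Lemma~\ref{lem:result} yields the $\frac{6G_\delta}{\delta}\max_r(1,\ln(U^r/L))$ competitive ratio stated in Theorems~\ref{thm_Competitive_1}--\ref{thm_Competitive_2}, with $\epsilon = \max_{h,r}\epsilon_h^r = \max_r \ln(U^r/L)$.
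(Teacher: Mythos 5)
Your differentiation of the price function is exactly the computation in the paper's proof: both derive $dp_h^r[t]=p_h^r[t]\,\frac{\ln(U^r/L)}{C_h^r}\,d\rho_h^r[t]$ from (\ref{eqn_cost_update_fn}) and conclude that the differential allocation-cost relationship holds with equality at $\epsilon_h^r=\ln(U^r/L)$, hence as an inequality for every $\epsilon_h^r\geq\ln(U^r/L)$ (using $dp_h^r[t]\geq 0$). Up to that point the two arguments coincide.

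The gap is in your second step. You try to certify the admissibility condition $\epsilon_h^r\geq 1$ of Definition~\ref{defn:diff_alloc_cost} by proving $U^r/L\geq e$ from (\ref{eqn_Ur})--(\ref{eqn_L}), but your justification is only that this holds ``under any realistic parameter regime.'' That is not a proof, and $U^r/L\geq e$ does not follow from the definitions alone: nothing in (\ref{eqn_Ur})--(\ref{eqn_L}) excludes instances where the ratio is smaller, so as written this step would fail on such an instance. The paper sidesteps the issue entirely. Because the relationship is monotone in $\epsilon_h^r$, it sets $\epsilon=\max_{r\in\mathcal{R}}\bigl(1,\ln(U^r/L)\bigr)$, which is simultaneously at least $1$ and at least $\ln(U^r/L)$, so no lower bound on $U^r/L$ is ever needed; this is also exactly the quantity that appears in the competitive ratios of Theorems~\ref{thm_Competitive_1} and~\ref{thm_Competitive_2}. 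Replace your heuristic feasibility check with this one-line observation and your argument matches the paper's.
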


\begin{proof}[Proof of Lemma~\ref{lem:diff_alloc_cost}]
The derivation of the marginal cost function is
\begin{align*}
    dp_h^r[t]=p_h^{r'}(\rho_h^r[t])d\rho_h^r[t]=L(\frac{U^r}{L})^{\frac{\rho_h^r[t]}{C_h^r}}\ln(\frac{U^r}{L})^{\frac{1}{C_h^r}}d\rho_h^r[t].
\end{align*}
The differential allocation-cost relationship is
\begin{align*}
    L(\frac{U^r}{L})^{\frac{\rho_h^r[t]}{C_h^r}}d\rho_h^r[t]\geq \frac{C_h^r}{\epsilon_h^r}L(\frac{U^r}{L})^{\frac{\rho_h^r[t]}{C_h^r}}\ln(\frac{U^r}{L})^{\frac{1}{C_h^r}}d\rho_h^r[t],
\end{align*}
which holds for $\epsilon_h^r\geq \ln(\frac{U^r}{L})$. 
Then, we can set $\epsilon\!=\!\max\limits_{r\in\mathcal{R}}(1,\ln(\frac{U^r}{L}))$, which satisfies the differential allocation-cost relationship.
This completes the proof.
\end{proof}

With the aforementioned lemmas, we are now in a position to prove Theorems~\ref{thm_Competitive_1} and \ref{thm_Competitive_2}.
Note that the Theorems provide probabilistic guarantees.
We first analyze the performance ratio, which is followed by the probabilistic feasibility discussion for both cases.
\begin{proof}[Proof of Theorems~\ref{thm_Competitive_1} and \ref{thm_Competitive_2}]
According to Lemma~\ref{lem:diff_alloc_cost}, the marginal cost function used in Algorithm~1 satisfies the differential allocation-cost relationship with $\epsilon=\max_r(1,\ln\frac{U^r}{L})$.
Since the resource demand in a job $i$ is much smaller than the capacity, we can derive
\begin{align*}
    &d\rho_h^r[t]=\rho_h^{r,i}-\rho_h^{i-1}[t], \\
    &dp_h^r[t]=p_h^{r'}(\rho_h^r[t])(\rho_h^{r,i}[t]-\rho_h^{r,i-1}[t])=p_h^{r,i}[t]-p_h^{r,i-1}[t].
\end{align*}
So, the differential allocation-cost relationship in Definition~\ref{defn:diff_alloc_cost} implies the allocation-cost relationship in Definition~\ref{defn:alloc_cost} holds for $\epsilon=\max_r(1,\ln\frac{U^r}{L})$.

According to Algorithm~1, we note that
\begin{align*}
 \frac{1}{\mu}\leq \frac{\ceil{E_{i}K_{i}(\tau_{i}+2g_{i}\gamma_i/(b_{i}^{(e)}F_i))}\sum_{r\in\mathcal{R}}(\alpha_{i}^{r}+\beta_{i}^{r})}{T\sum_{h\in\mathcal{H}}\sum_{r\in\mathcal{R}}C_{h}^{r}},\forall i\in\mathcal{I}.  
\end{align*}
Then, the minimum amount of overall resource consumption of  job $i$ can be computed as:
\begin{multline*}
\frac{T\sum_{h\in\mathcal{H}}\sum_{r\in\mathcal{R}}C_{h}^{r}}{\mu}\\
   \leq \ceil{E_{i}K_{i}(\tau_{i}+2g_{i}\gamma_i/(b_{i}^{(e)}F_i))}\sum_{r\in\mathcal{R}}(\alpha_{i}^{r}+\beta_{i}^{r}).
\end{multline*}
Then, it follows that:
\begin{align*}
    &D_0=\sum_{t,h,r}LC_h^r\\
    &=\sum_{t,h,r}\min_{i\in\mathcal{I},\pi_i\in\Pi_i}\frac{1/(2\mu) u_{i}(t_{\pi_i}-a_{i})C_h^r}{\sum_{r\in\mathcal{R}}\ceil{E_{i}K_i(\tau_{i}+2g_{i}\gamma_i/(b_{i}^{(e)}F_i)}(\alpha_{i}^{r}+\beta_{i}^{r})}\\
    &=\frac{T\sum_{h,r}C_h^r}{2\mu}\min_{i,\pi_i}\frac{u_{i}(t_{\pi_i}-a_{i})}{\sum_{r\in\mathcal{R}}\ceil{E_{i}K_i(\tau_{i}+2g_{i}\gamma_i/(b_{i}^{(e)}F_i)}(\alpha_{i}^{r}+\beta_{i}^{r})}\\
    &\leq \frac{1}{2}\ceil{E_{i}K_{i}(\tau_{i}+2g_{i}\gamma_i/(b_{i}^{(e)}F_i))}\sum_{r\in\mathcal{R}}(\alpha_{i}^{r}+\beta_{i}^{r})\\
    &\min_{i\in\mathcal{I},\pi_i\in\Pi_i}\frac{u_{i}(t_{\pi_i}-a_{i})}{\sum_{r\in\mathcal{R}}\ceil{E_{i}K_i(\tau_{i}+2g_{i}\gamma_i/(b_{i}^{(e)}F_i)}(\alpha_{i}^{r}+\beta_{i}^{r})}, \forall i\in\mathcal{I}.\\
    &\overset{(a)}{\leq}\frac{1}{2}\ceil{E_{i}K_{i}(\tau_{i}+2g_{i}\gamma_i/(b_{i}^{(e)}F_i))}\sum_{r\in\mathcal{R}}(\alpha_{i}^{r}+\beta_{i}^{r})\\
    &\frac{u_{i}(t_{\pi_i}-a_{i})}{\sum_{r\in\mathcal{R}}\ceil{E_{i}K_i(\tau_{i}+2g_{i}\gamma_i/(b_{i}^{(e)}F_i)}(\alpha_{i}^{r}+\beta_{i}^{r})}\\
    &\leq \frac{1}{2}u_i(t_{\pi_i}-a_i)\overset{(b)}{\leq} \frac{1}{2}OPT,
\end{align*}
where (a) follows by selecting $(i,\pi)=\arg\min_{i\in\mathcal{I},\pi_i\in\Pi_i}u_i(t_{\pi_i}-a_i)$, and (b) follows from the assumption that the offline optimal solution accepts at least one job, which is reasonable in real-world machine learning system.
Then we have $OPT\geq \min_{i,\pi}u_i(t_{\pi_i}-a_i)$.
According to Lemmas~\ref{lem:result} and \ref{lem:relation}, the competitive ratio is proved.

Recall that the randomized rounding algorithm is a key component in our algorithm.
Toward this end, we show the probability of obtaining a feasible solution with the proved competitive ratio.
Here, we consider the following two cases:

{\em 1) } When $0<G_\delta\leq 1$ (Theorem~\ref{thm_Competitive_1}): According to Theorem~\ref{thm_Alg4_1}, the probability of violating the cover constraint is no greater than $\delta/3$ at each randomized rounding iteration.
Recall that our Algorithm~1 runs a predetermined number of $S$ iterations to find a feasible integer solution.
Thus the probability that no feasible integer solution returned after $S$ iterations rounding is at most $(\delta/3)^S$.
It then follows that the probability of at least one feasible integer solution found is at least $1-(\delta/3)^S$.
Moreover, the number of states $(t,v)$ in the dynamic programming for each job $i$ is $O(TK_iE_i)$.
Therefore, with probability greater than $(1-(\delta/3)^S)^{T_iK_iE_i}$, PD-ORS in Algorithm~1 returns a feasible integer solution with the proved competitive ratio.

{\em 2) } When $G_\delta>1$ (Theorem~\ref{thm_Competitive_2}): According to Theorem~\ref{thm_Alg4_2}, the probability of violating the packing constraint is no greater that $\delta/3(HR+1)$ at each randomized rounding iteration.
Following the similar arguments in {\em 1)}, we can show that with probability greater than $(1-(\delta/3(HR+1))^S)^{T_iK_iE_i}$, PD-ORS in Algorithm~1 returns a feasible integer solution with the proved competitive ratio, and the proof is complete.
\end{proof}
\end{document}